\documentclass[10pt]{article}
\usepackage[left=2cm,right=2cm,top=3.5cm,bottom=3.5cm]{geometry}
\usepackage[utf8]{inputenc}
\usepackage[T1]{fontenc}
\usepackage[labelfont=bf]{caption}
\usepackage{microtype}
\usepackage{fnpct}

\usepackage{authblk}
\makeatletter\renewcommand\AB@affilsepx{, \protect\Affilfont}\makeatother

\renewenvironment{abstract}
 {\small
  \begin{center}
  \bfseries \abstractname\vspace{-.5em}\vspace{0pt}
  \end{center}
  \list{}{%
    \setlength{\leftmargin}{5em}
    \setlength{\rightmargin}{\leftmargin}%
  }%
  \item\relax}
 {\endlist}

\usepackage{hyperref}
\hypersetup{colorlinks=true, citecolor=blue, linkcolor=blue}

\usepackage[title]{appendix}

\usepackage[inline]{enumitem}

\usepackage{amsthm}
\usepackage{stmaryrd}
\usepackage{bbm}
\usepackage{amsmath}
\usepackage{amssymb}
\usepackage{mathabx}

\newcommand{\powerset}{%
  \mathchoice{\raisebox{.1\baselineskip}{\Large$\displaystyle\wp$}}
             {\raisebox{.1\baselineskip}{\Large$\textstyle\wp$}}
             {\raisebox{.1\baselineskip}{\Large$\scriptstyle\wp$}}
             {\raisebox{.1\baselineskip}{\Large$\scriptscriptstyle\wp$}}}

\makeatletter
\newcommand{\circlearrow}{}
\DeclareRobustCommand{\circlearrow}{%
  \mathrel{\vphantom{\rightarrow}\mathpalette\circle@arrow\relax}%
}
\newcommand{\circle@arrow}[2]{%
  \m@th
  \ooalign{%
    \hidewidth$#1\circ\mkern1mu$\hidewidth\cr
    $#1\longrightarrow$\cr}%
}
\makeatother

\newcommand*{\defeq}{\mathrel{\vcenter{\baselineskip0.5ex\lineskiplimit0pt\hbox{\scriptsize.}\hbox{\scriptsize.}}}=}

\newcommand*{\lowerset}{{\downarrow}}

\newtheorem{theorem}{Theorem}
\newtheorem{proposition}{Proposition}
\newtheorem{corollary}{Corollary}

\theoremstyle{definition}
\newtheorem{definition}{Definition}
\newtheorem{example}{Example}

\theoremstyle{remark}
\newtheorem{notation}{Notation}
\newtheorem{remark}{Remark}

\newcommand{\Codom}{\operatorname{Codom}}
\newcommand{\Dom}{\operatorname{Dom}}
\newcommand{\rnk}{\operatorname{rnk}}

\newcommand{\supp}{\operatorname{supp}}

\newcommand{\dprime}{\prime \prime}

\def\Plus{\texttt{+}}
\def\Minus{\texttt{-}}

\def\AA{{\mathbb A}}
\def\NN{{\mathbb N}}
\def\RR{{\mathbb R}}
\def\VV{{\mathbb V}}
\def\indfun{{\mathbbm 1}}
\def\NNex{{\overline{\mathbb N}}}

\def\A{{\mathcal A}}
\def\B{{\mathcal B}}
\def\C{{\mathcal C}}
\def\D{{\mathcal D}}
\def\F{{\mathcal F}}

\def\Hh{{\mathcal H}}
\def\I{{\mathcal I}}
\def\L{{\mathcal L}}
\def\M{{\mathcal M}}
\def\N{{\mathcal N}}
\def\Q{{\mathcal Q}}
\def\R{{\mathcal R}}
\def\S{{\mathcal S}}
\def\T{{\mathcal T}}
\def\U{{\mathcal U}}
\def\W{{\mathcal W}}
\def\Y{{\mathcal Y}}
\def\Z{{\mathcal Z}}

\usepackage{tikz}
\usetikzlibrary{graphs}
\usetikzlibrary{arrows}
\usetikzlibrary{positioning,chains,fit,shapes,calc}

\title{Abstract relational structures in models of biology}

\author[a,b,1]{L\'eo Diaz}
\author[a,b,c,1]{Sean T. Vittadello}
\author[a,b,d,2]{Michael P.H. Stumpf}

\affil[a]{School of BioSciences, University of Melbourne, Melbourne, Australia}
\affil[b]{School of Mathematics and Statistics, University of Melbourne, Melbourne, Australia}
\affil[c]{ARC Centre of Excellence for the Mathematical Analysis of Cellular Systems (MACSYS), Melbourne, Australia}
\affil[d]{Cell Bauhaus, Melbourne, Australia}

\date{May 22, 2026}

\begin{document}

\maketitle

\begin{abstract}
    The mathematical formalisms used to model biological systems induce both latent and ambiguous assumptions that can limit or distort their representational capabilities. Developing formalisms that can represent systems more precisely is fundamental to comprehending their intricacies and complexities. Here we introduce the \emph{systems hypergraph}, a general and extendable formalism for representing abstract relational systems. A systems hypergraph combines a hypergraph, representing multidimensional relations among objects, with a hierarchical system of attributes representing system properties and their interdependencies. The attribute structure ensures that dependencies between system properties are patent and unambiguous, thereby clarifying assumptions and avoiding redundancy in data association. As an application we consider two formalisms widely used in systems biology -- chemical reaction networks and stochastic Petri nets -- and study their natural representation as systems hypergraphs. This allows us to relate the two formalisms rigorously, demonstrating in particular that stochastic Petri nets are strictly more general than chemical reaction networks in contrast to their commonly assumed equivalence. More broadly our work demonstrates the power of abstraction, and in particular its role in mediating between objects and relations in mathematical representations of biological complexity.
\end{abstract}

\footnotetext[1]{L.D. and S.T.V. contributed equally to this work.}
\footnotetext[2]{To whom correspondence should be addressed. E-mail: mstumpf@theosysbio.org.}

\section{Introduction}

Biological systems are inherently multifaceted, so favourably modelled from multiple conceptual perspectives. Systems are generally characterised by integration, organisation, interdependence, and interaction, for which relational approaches offer a judicious modelling foundation. Importantly, a relational perspective can make legible system-level properties not evident from the constituent parts considered in isolation -- a distinguishing characteristic of biological systems, and the premise of systems biology~\cite{FoxKeller2005Century}.

Biological systems are often viewed from a relational perspective using network metaphors, formalised as graphs: parts of a biological system become vertices, and their pairwise relationships become edges. The representation appears evident and graphs are therefore often described as providing a common language for representing complex systems~\cite{Mesarovic2004Search,Torres2021Why,Brown2025}. For molecular systems at the subcellular level, two graph-based formalisms have gained prominence: chemical reaction networks (CRNs)~\cite{Feinberg2019Foundations} and stochastic Petri nets (SPNs)~\cite{Koch2010Modeling}.

Designing a formalism for a specific class of systems provides a template that makes certain phenomena representable. For instance, CRNs were originally designed to represent chains of chemical transformations where reactants are typically converted into products by enzymes. This specificity simplifies model development through domain-specific assumptions, but those same assumptions also constrain the mathematical adaptability of the models. Moreover, alternative interpretations of a system may produce different formalisms, which in turn determine a model's accuracy, scope of applicability, limitations, and constructive connections with other areas of mathematics.

The efficacy of modelling a system from multiple perspectives is greatly improved when the interrelationships between the models are known~\cite{Vittadello2021Model}. Assumptions specific to each model can, however, obscure these interrelationships~\cite{Vittadello2022Group}. For instance, both CRNs and SPNs are used to represent chemical reaction systems, but they are based on different representational assumptions leading to distinct mathematical structures. While equivalence of the two formalisms is often purported, this connection is not immediate as formal comparison requires proper consideration of their distinct mathematical forms.

Abstraction offers a powerful way to relate and integrate different formalisms. When multiple formalisms represent the same kind of system we can identify their common underlying properties and produce a more abstract formalism with the originals as special cases. This helps focus attention on the essential properties, and can reveal subtle relationships between alternative formalisms.

Here we use hypergraphs to represent relational structures, as a natural abstraction and generalisation of graphs. While graphs represent pairwise relations, hypergraphs explicitly represent multidimensional relations -- a capability well matched to the multidimensional nature of biological systems~\cite{Klamt2009Hypergraphs,Taylor2015Higherorder}. We go further by introducing \emph{systems hypergraphs}, a general formalism that combines a hypergraph with a dependency hierarchy of \emph{attributes} on the hypergraph structure. Systems hypergraphs encompass both CRNs and SPNs -- facilitating the identification of equivalent data across the two formalisms, and demonstrating that their interchangeability only holds under certain conditions -- as well as related work representing CRNs as specialised hypergraphs~\cite{Jost2019Hypergraph}.

\section{Background}
Here we provide the general notation and definitions that are foundational for this article.

\subsection*{Notation}
$\NN$ (resp. $\NN_+$) is the set of nonnegative (resp. positive) integers and $\RR$ (resp. $\RR_{\ge 0}$) is the set of real (resp. nonnegative real) numbers. The power set of any set $V$ is $\powerset(V)$, which includes the empty set $\emptyset$, and for $W \subseteq \powerset(V)$ the downward closure of $W$ in $\powerset(V)$ with respect to inclusion is $\lowerset W \subseteq \powerset(W)$. In particular we refer to $\I \defeq \powerset\big(\{-1,+1\}\big) \setminus \emptyset$ as the \emph{orientation set}.

For a sequence of nonempty sets $(A_i)_{i = 1}^n$ with $n \in \NN_+$ the Cartesian product $A_1 \times \cdots \times A_n$ is denoted $\bigtimes_{i = 1}^n A_i$. For a family of nonempty sets $\{A_i\}_{i \in I}$ with nonempty finite index set $I$ the \emph{unordered Cartesian product} $\prod_{i \in I} A_i$ is the set of all functions $f \colon I \to \bigcup_{i \in I} A_i$ such that $f(i) \in A_i$ for all $i \in I$, and each such $f$ is an \emph{unordered $I$-tuple}. If the sets in $\{A_i\}_{i \in I}$ are pairwise disjoint then for notational simplicity we identify $f \in \prod_{i \in I} A_i$ with $\{f(i)\}_{i \in I}$.

For two sets $A$ and $B$ the set of all functions $f$ with domain $\Dom(f) \defeq A$ and codomain $\Codom(f) \defeq B$ is $\{A \rightarrow B\}$, and if $B \subseteq \RR$ then the support of $f$ is $\supp(f) \defeq \{\, a \in A \mid f(a) \ne 0 \,\}$. The set of all partial functions $f$ from $\Dom(f) \subseteq A$ to $B$, including the empty function, is $\{A \rightharpoonup B\}$. The set of all functions with domain in a family of sets $\A$ and codomain $B$ is $\{\A \rightrightarrows B \}$.

A \emph{multiset} (resp. \emph{positive multiset}) is a pair $(S,\mu)$ where $S$ is the \emph{underlying set}, and $\mu \colon S \to \NN$ (resp. $\mu \colon S \to \NN_+$) is the \emph{multiplicity function} which indicates the \emph{multiplicity}, or integer number of instances, of each element in $S$. If $\mu$ is a partial function then $(S,\mu)$ is a \emph{partial multiset} (resp. \emph{positive partial multiset}), and the support of $(S,\mu)$ is $\supp(S,\mu) \defeq \supp(\mu)$. For notational simplicity we may identify a multiset $(S,\mu)$ with its multiplicity function $\mu \colon S \to \NN$.

\begin{remark}
    In a multiset the instances of an element are considered indistinguishable with regard to intrinsic properties, however we assume that the instances can be individuated based on an extrinsic property, such as spatial position. Therefore, the instances are individual objects to which we can assign labels.
\end{remark}

\subsection*{Modelling considerations}
The two concepts \emph{absence} and \emph{abstraction} require careful treatment in this paper.

\begin{definition}[Absence]\label{def:absence}
    We want to distinguish between two cases of absence: \emph{absence of data} indicates a lack of knowledge, whereas \emph{data of absence} indicates knowledge of either nonexistence or nonoccurrence. While these two cases have distinct interpretations they may in practice be represented by the same data value and become indistinguishable, thereby requiring strategies for disambiguation.
\end{definition}

\begin{remark}
    The consideration of absence motivates our use of \emph{positive partial} multisets. Multisets are typically defined as total functions into $\NN$, but a zero multiplicity is then ambiguous: it could represent either the absence of data or data of absence. Positive partial multisets resolve this by encoding absence of data through partiality -- if a function is undefined on an element, no data is available -- while reserving explicit values for known quantities.
\end{remark}

Reality is often represented as a system of objects and relations: a relation attributes properties to objects, while relations do not themselves possess properties. Note that unary relations are often called \emph{properties} in first-order logic. Objectification allows us to logically transform a relation into an object, which we can then investigate and to which properties can be predicated. Objectification is formally described by the logical operation of \emph{hypostatic abstraction}, also called subjectal abstraction or positive reification. Charles Sanders Peirce\footnote[2]{Peirce made contributions across mathematics, philosophy, logic, and science.} considered hypostatic abstraction to be the principal operation in mathematics, and observed that hypostatic abstraction presupposes the distinct logical operation of \emph{prescissive abstraction}, which associates a relation to an object so that only properties of interest are attributed while ignoring the rest~\cite{Zeman1982Peirce}. 

\begin{definition}[Hypostatic abstraction]
    A \emph{hypostatic abstraction} transforms a relation $R$ over a domain $D$ into another logically equivalent relation $R^{\ast}$ over a domain $D^{\ast}$ such that: $D^{\ast}$ consists of the original objects in $D$ and a new distinct object $h(R)$, the \emph{hypostatic object}, which is the hypostatised form of $R$ representing the properties attributed by $R$; the information of $R$ is preserved in $R^{\ast}$, which has arity one greater than $R$. Hypostatic abstraction can result in two different hypostatic objects, whereby the relation given by the predicate `is an X' transforms into the relation given by the predicate: `is a member of the set of Xs' equivalently 'has the property of being X'; or, `possesses the character of being an X' equivalently `has X-ness'. Note that the hypostatic object is not directly defined: it's properties are indirectly characterised by the original predicate. We generally employ hypostatic abstraction in this paper without further comment.
\end{definition}

\begin{remark}
    The following two examples illustrate hypostatic abstraction. \begin{enumerate*}[label=(\roman*)] \item The non-symmetric binary predicate `is less than', with property `less than', is transformed to the (non-symmetric) ternary predicate `has less-than-ness', with hypostatic object `less-than-ness': so, the proposition `2 is less than 3' is transformed to the proposition `2 has less-than-ness in relation to 3', or `The relationship between 2 and 3 has the property of less-than-ness'.
    \item The symmetric ternary predicate `are collinear', with property `collinear', is transformed to the (non-symmetric) quaternary predicate `have collinearity', with hypostatic object `collinearity': so, the proposition `$x$, $y$, and $z$ are collinear' is transformed to the proposition `$x$, $y$, and $z$ share the property of collinearity'.
    \end{enumerate*}

    \medskip

    Hypostatic abstraction is foundational for objectifying aspects of experience, including conceptualisation, generalisation, and in scientific inquiry. The operation does not directly result in explanation, rather providing a means toward explanation by transforming relations into objects to which properties can then be predicated, thereby connecting first-order to second-order logic. Hypostatic abstraction requires that we accept the existence of the hypostatic object, which was unnecessary before the operation, so represents either physical reality or physical (not logical) impossibility. Even in the case of physical impossibility the hypostatic object can be of analytical value. It is important to note that logical consistency should be maintained throughout the process and subsequent analysis to arrive at conclusions relevant for the original relation.
\end{remark}

\subsection*{(Hyper)graphs}
Systems biology characterises biological systems primarily in terms of their \emph{structure}: the way a system is organised in space and time by the interactions of specific combinations of its parts. This can be represented by the relational structure of graphs or hypergraphs, whereby a set of objects represents the parts of the system and relations on those objects represent properties among the parts. Graphs are sufficient when binary relations provide an appropriate level of detail, whereas hypergraphs are required for a direct representation of multidimensional structure based on the notion of a hyperedge. Since the number of instances of a given hyperedge -- its multiplicity -- is also important, we work with \emph{multihypergraphs} throughout.

\begin{definition}[Graph, directed graph, bipartite graph, directed bipartite graph]
    An \emph{undirected graph}, or simply a \emph{graph}, $G \defeq (V,E)$ is a pair of sets where the \emph{edge set} $E$ is a set of $2$-subsets of the nonempty finite \emph{vertex set} $V$. The elements of $V$ are \emph{vertices} and the elements of $E$ are \emph{edges}. A \emph{directed graph} $D \defeq (V,F)$ is a pair of sets where $V$ is the nonempty finite vertex set and $F \subseteq V \times V$ is the \emph{arc set}. The elements of $F$ are \emph{directed edges} or \emph{arcs}. By definition there are no multiple edges in undirected/directed graphs. For two vertices $v, w \in V$ in an undirected (resp. directed) graph we say that $v$ is \emph{incident on} $w$ if $\{v,w\} \in E$ (resp. if $(v,w) \in F$). This definition of incidence carries to further relational structures defined below.

    A \emph{bipartite graph} is a triple $G \defeq (V,W,E)$ where $V$ and $W$ are two disjoint sets of \emph{vertices} referred to as the \emph{first part} and the \emph{second part} of the graph, respectively, and $E \subseteq \big\{\, \{v,w\} \mid \text{$(v,w) \in V \times W$} \,\big\}$ is a set of \emph{edges}. A \emph{directed bipartite graph} $D$ is a triple $D \defeq (V,W,F)$ with first part $V$, second part $W$, and \emph{arc set} $F \subseteq (V \times W) \sqcup (W \times V)$. An \emph{arc weight function} $\omega$ for $D$ satisfies $\omega \colon F \to \AA$ where $\AA \subseteq \RR$ is nonempty and closed under addition, and a \emph{second-part vertex weight function} $\delta$ for $D$ satisfies $\delta \colon W \to \VV$ where $\VV \subseteq \RR$ is nonempty. If $D$ has arc weight function $\omega$ and second-part vertex weight function $\delta$ then we write $D \defeq (V,W,F,\omega,\delta)$, or $D \defeq (V,W,F,(\omega,\AA),(\delta,\VV))$ to specify the codomains.
\end{definition}

\begin{definition}[Hypergraph, vertex labelling, multihypergraph]
    A \emph{hypergraph} $X \defeq (V,E)$ consists of a nonempty finite \emph{vertex set} $V$ and a \emph{hyperedge set} $E \subseteq \powerset(V)$ of \emph{hyperedges}, noting we allow the empty hyperedge $\emptyset$. Note that a graph is a hypergraph where the hyperedge set consists of 2-subsets of $V \!$. A \emph{vertex labelling} of $X$ is an injection from $V$ to a set of labels, and in this work we always assume that hypergraphs are vertex-labelled (henceforth simply labelled) so we can identify specific vertices.
    
    A \emph{multihypergraph} $M \defeq (V,E,\mu_E)$ is a hypergraph $(V,E)$ together with a hyperedge multiplicity function $\mu_E \colon E \to \NN_+$, and we say that $e \in E$ is a \emph{multihyperedge} with multiplicity $\mu_E(e)$. We assume that the hyperedges within a multihyperedge have their own identity, so the multiplicity of a multihyperedge serves only to indicate the number of hyperedges on the same subset of vertices. A general hyperedge in $M$ is therefore any hyperedge of a multihyperedge, with a total of $\sum_{e \in E} \mu_E (e)$ hyperedges in $M$. Depending on context we will refer to $E$ as either the set of multihyperedges or the set of (all) hyperedges: if required for clarity we will refer to $E$ as the multihyperedge set and the set of all hyperedges in $M$ as $(E,\mu_E)$.
\end{definition}

\begin{remark}
    Given a multihypergraph $(V,E,\mu_E)$, the set of hyperedges $(E,\mu_E)$ can represent different relational structures on $V$, specifically: \begin{enumerate*}[label={(\arabic*)}] \item a single multigrade relation that is unordered, homogeneous, and finitary;\label{rel_1} \item a (possibly nonunique) disjoint union of distinct relations each of the form in \ref*{rel_1}, whereby $E$ represents more than one form of relationship among the elements of $V$. Note that a single relation can be decomposed into a disjoint union of relations representing the same form of relationship: in particular a single relation can be decomposed into a disjoint union of unigrade relations where each contains a single element corresponding to a hyperedge (each of these unigrade relations is therefore a singleton). For a given multihypergraph these relational structures are all logically equivalent, so represent the same information.
    \end{enumerate*}
\end{remark}

\subsection*{Relationship between multihypergraphs and bipartite graphs}
From a mathematical perspective, multihypergraphs and bipartite graphs have a canonical relationship: every multihypergraph has an associated incidence (or Levi) graph, which is bipartite, and every bipartite graph is the incidence graph of some multihypergraph. The mathematical details are provided in the Appendix (Section~\ref{app:hypergraphs-bipartite-graphs}, Corollary~\ref{corr:multihypergraphs-bipartite}). This canonical relationship provides the symbolic connection between the two abstract structures, however for any application we also require a foundational logical connection which can be obtained through hypostatic abstraction.

So let $(V,E,\mu_E)$ be a multihypergraph where we assume (without loss of generality) that the relation structure is a disjoint union of (unigrade) singleton relations (each containing a single hyperedge). For a hyperedge $e$ let $R_e \defeq \{e\}$ be the associated relation, then under hypostatic abstraction the relation $R_e$ is transformed to the singleton relation $R^{\ast}_e \defeq \{e \cup \{h(R_e)\}\}$ where $h(R_e)$ is the hypostatic object which possesses the properties attributed to the vertices in $e$ by $R_e$. The relation $R^{\ast}_e$ attributes the properties possessed by $h(R_e)$ to each vertex in $e$, and since $R_e$ is an unordered relation it reduces to the binary relation $R^{\ast\ast}_e \defeq \{\, (v,h(R_e)) \mid v \in e \,\}$ with domain $e \times \{h(R_e)\}$, where $(v,h(R_e)) \in R^{\ast\ast}_e$ means that $v$ has the properties of $h(R_e)$. Importantly, the relation $R \defeq \bigcup_{e \in (E,\mu_E)} R^{\ast\ast}_e$ preserves all of the information in $R_e$, maintaining the logical connection. Note that $R$ corresponds to the (bipartite) incidence graph of the multihypergraph, where one part consists of vertices in $V$ and the other part consists of hypostatised hyperedges which have the properties predicated by the original hyperedges. While the hyperedges can now be studied as objects, it is essential to keep in mind that the physical impossibility of the hypostatic objects can lead to conclusions that are logically erroneous in applications. Moreover, if the original properties of the hypostatic objects are modified or there are logical inconsistencies among the original properties and any new attributed properties then any conclusions may be irrelevant for the original system.

This analysis shows that multihypergraphs and bipartite graphs will provide equivalent representations as long as the logical connections (including those from the application context) are preserved. The bipartite graph can be regarded as a `flattened' multihypergraph, where the multidimensional structure becomes more latent -- it is distributed across neighbourhoods rather than represented directly -- and there is typically a substantial increase in edge density from a multihypergraph to the corresponding bipartite graph. Hypergraphs allow a more faithful and direct representation of higher-order relations than graphs, however the associated bipartite graphs can facilitate analysis of the hypergraphs and therefore the represented systems.

It is important to note that the relation reduction from multihypergraphs to bipartite graphs is straightforward due to the unordered nature of the relations in the multihypergraph. If these relations were ordered then a reduction to binary relations may not be logically possible.

\subsection*{Chemical reaction networks and stochastic Petri nets}
Our definitions of chemical reaction networks and stochastic Petri nets are standard.

\begin{definition}[Chemical reaction network, kinetics, kinetic system]
    A \emph{chemical reaction network} (CRN) is a triple $(\S,\C,\R)$ of finite sets where: the \emph{species set} $\S$ consists of molecular \emph{species} that are capable of reacting chemically; the \emph{complex set} $\C$ consists of positive partial multisets with underlying set $\S$, where each element of $\C$ is a \emph{complex} and the multiplicity of a species in a complex is its \emph{stoichiometric coefficient}; and the \emph{reaction set} $\R \subseteq \C \times \C$ is an irreflexive binary relation on $\C$ which specifies a set of \emph{reactions} between complexes. Each reaction is written $(c,c^{\prime})$, or $c \to c^{\prime}$, with $c$ the \emph{reactant complex} and $c^{\prime}$ the \emph{product complex}. Species in the complexes $c$ and $c^{\prime}$ are called \emph{reactants} and \emph{products}, respectively. We say that $(\S,\C,\R)$ is \emph{connected} when $\C = \bigcup_{(c,c^{\prime}) \in \R} \{c,c^{\prime}\}$, that is there are no \emph{isolated} complexes. A \emph{kinetics} $k$ for a CRN $(\S,\C,\R)$ is a map $k \in \{\R \to \RR_{\ge 0}\}$ such that $k((c,c^{\prime}))$ is the \emph{stochastic rate constant} of the reaction $(c,c^{\prime}) \in \R$. The quadruple $(\S,\C,\R,k)$ is a (chemical) \emph{kinetic system}.
\end{definition}

Complexes in chemical reaction networks are typically represented as ordered formal linear combinations, unordered formal linear combinations, or multisets of chemical species. We use multisets for complexes, and for completeness show the relationships between the three representational forms in Appending~\ref{app:complexes}.

\begin{remark}\label{rem:CRNZeros}
   CRNs are sometimes defined with stoichiometries in $\NN$, leading to an understanding of complexes as vector spaces. While mathematically convenient, this introduces an interpretational ambiguity between two distinct cases of absence (Definition~\ref{def:absence}): zero stoichiometry can indicate either the absence of a species from a complex or its presence with zero instances. A related ambiguity arises with \emph{net change} in CRN dynamics, where the species-wise difference in stoichiometry between product and reactant complexes can be zero for species that nonetheless participate in the reaction on both sides. We explore below how to resolve these ambiguities in interpreting data defined at the formalism level.
\end{remark}

\begin{definition}[Petri net, stochastic Petri net]
    A \emph{Petri net} is a 4-tuple $N \defeq (P,T,F,\omega)$ where $P \defeq \{P_1, \ldots, P_m\}$ is a finite set of \emph{places}; $T \defeq \{T_1, \ldots, T_n\}$ is a finite set of \emph{transitions} with $P \cap T = \emptyset$; $F \subseteq (P \times T) \cup (T \times P)$ is the \emph{flow relation} defining a set of directed \emph{arcs}; and $\omega \colon F \to \AA$ is a \emph{weight function}, where $\AA \in \{\NN,\NN_+\}$. We make the standard assumption that there are no isolated places in a Petri net. Note that $(P,T,F)$ is a directed bipartite graph with labelled vertices. A \emph{stochastic Petri net} (SPN) is a 5-tuple $N \defeq (P,T,F,\omega,r)$ where $(P,T,F,\omega)$ is a Petri net, and $r \colon T \to \RR_{\ge 0}$ is a function which assigns a \emph{stochastic rate constant} to each transition.
\end{definition}
    
\begin{remark}
    We define the weight function as taking values in $\NN$ for generality, and consider values in $\NN_+$ when relating SPNs to CRNs, in which complexes are specifically defined as positive partial multisets.
    
    We will assume throughout that CRNs are connected, since an isolated complex is redundant with respect to the reactions in the CRN. Moreover, an isolated complex has no representation in SPNs, since relations are defined on places directly and not via the intermediary concept of a complex. The connectedness of CRNs corresponds in part to the absence of isolated places in SPNs.
\end{remark}

\section{Systems hypergraphs}

Systems are primarily understood through their \emph{structure}: the interactions between parts that constitute, maintain, and transform a system over time. The basis of a representation is therefore a naming of parts, together with incidence data recording which parts interact. Of course, parts are systems themselves and choosing a label is simply a choice of reference point relative to a focus of analysis. This reduction is in fact necessary for analysis, but importantly is not final. It is in fact the purpose of a representation to identify parts on the basis of the existence or relevance of some relations on those parts.

Hypergraphs describe incidence data in terms of unordered relations on a set of vertices. In that setting the duality between part and system finds an analogue in the ones between object and relation, vertex and hyperedge. Hypergraphs themselves are, however, not sufficient to represent real systems in enough detail for modelling purposes. For this we need to augment hypergraphs with additional \emph{attributes} -- data associated with objects such as vertices and hypostatised hyperedges -- that encode the particular properties of the system being modelled.

Here we define \emph{systems hypergraphs}, a general class of augmented hypergraphs. We develop the framework in general terms, while illustrating key concepts through the special case of systems of \emph{abstract molecular transition (AMT)} type, which cover the kinds of systems traditionally represented by CRNs and SPNs.

The foundation of our framework is a labelled multihypergraph $(V,E,\mu_E)$. The vertex set $V$ represents the \emph{ground set} of the system constituted of elements treated as atomic, i.e. whose internal relational structure is considered fixed to define a minimum structural level at which relations can be defined. The hyperedge set $E$ then represents relations among any number of elements, with each hyperedge indicating a particular connection and distinct hyperedges potentially representing distinct types of connection.

We then augment this basic structure with \emph{attributes} to provide additional data beyond incidence. Attributes are associated to multihypergraphs via \emph{attribute functions}, with the scope of the attribute -- the specific substructures it annotates -- specified by \emph{domain functions}. Crucially, the way attributes are associated to substructures, and how attributes may depend on other attributes, provides important structural information. Attributes and their dependency relations are therefore key in understanding systems, allowing for additional data beyond the traditional relational structure defined by graphs or hypergraphs. This culminates in our definition of a systems hypergraph as a framework to collect this extended structural data in an integrated way.

\begin{definition}[Domain function]
    Let $X \defeq (V,E)$ be a hypergraph. A \emph{domain function} $\lambda$ for $X$ is a set-valued function with $\Dom(\lambda) \defeq E$ such that, for each $e \in E$, the set $\lambda(e) \subseteq \powerset(e) \sqcup \{\{e\}\}$ contains substructures of $e$ from either the vertex level (so subsets of vertices) or from the hyperedge level (so $e$ considered as a single object).
\end{definition}

We define domain functions on the hyperedge set since we want to associate data with objects that have a relationship with other objects: isolated objects have no interaction with the rest of the system. Our definition of a domain function can be extended to any level of structure (hypostatised if necessary), such as vertices, or subhypergraphs (including isolated vertices) of a given hypergraph or families of hypergraphs.

We only consider relations involving at least two objects. We reserve the singleton notation $\{x\}$ to denote the objectification (via hypostatic abstraction) of a relation $x$, so that $\{x\}$ is an object. For notational convenience we allow for the objectification of an object $y$, in which case we can logically identify $y$ and $\{y\}$.

\begin{example}\label{ex:domain}[Domain functions for AMT-type systems]
    For a hypergraph $X \defeq (V,E)$ the domain functions relevant for AMT-type systems are:
    \begin{enumerate*}[label={(\arabic*)}]
        \item The \emph{singleton function} from $E$ into $\powerset(E)$ where $e \mapsto \{e\}$ for $e \in E$.
        \item The \emph{powerset function} from $E$ into $\lowerset E$ where $e \mapsto \powerset(e)$ for $e \in E$.
        \item The \emph{two-fold nested singleton function} from $E$ into $\powerset(\powerset(E))$ where $e \mapsto \{\{e\}\}$ for $e \in E$, which we call the \emph{objectification function} for simplicity.
    \end{enumerate*}
\end{example}

The appropriate domain function depends on the substructure to which the type of attribute is associated: the singleton function when the same type of attribute is assigned to all vertices in a hyperedge; the powerset function when any subset of a hyperedge has a particular type of attribute assigned to all vertices in the subset; and the objectification function when the attribute is assigned to a hypostatised hyperedge.

\begin{remark}\label{rem:domains-context}
    Domain functions enable the annotation of each individual vertex in a hyperedge, based on incidence pairs in $V \times E$. In particular, the powerset function enables the association of different types of attributes to distinct subsets of vertices in a given hyperedge. The powerset function acts at an intermediate level of structure between the singleton function, which acts uniformly on all vertices of a hyperedge, and the objectification function, which acts on a hypostatised hyperedge.

    Domain functions ensure that multidimensional substructures are directly available for the precise association of properties in relation to context. Importantly, domain functions help to maintain the integrity of multidimensional data by preventing information distortion through loss of context.
\end{remark}

\begin{definition}[Attribute set, attribute function]
    Let $X \defeq (V,E)$ be a hypergraph, and let $\lambda$ be a domain function for $X$. An \emph{attribute set} $\AA$ is a nonempty set of attributes. For $e \in E$, each $\pi \in \{ \lambda(e) \rightrightarrows \AA \}$ is an \emph{$\AA$-attribute function}, or simply \emph{attribute function}, which associates attributes in $\AA$ to the substructures in $\lambda(e)$.
\end{definition}

Attributes represent any data needed to interpret the meaning of a structure in practice -- of particular relevance here since structures are used to represent physical systems. Moreover, the specific value of an attribute on a given structure can be understood as a \emph{labelling} of that structure to distinguish structurally equivalent objects with distinct properties.

\begin{example}\label{ex:attribute}[Attribute functions for AMT-type systems]
    Suppose $X \defeq (V,E)$ is a hypergraph, $\lambda$ is a domain function for $X$, and $\AA$ is an attribute set. Continuing from Example~\ref{ex:domain}, $\{ \lambda(e) \rightrightarrows \AA \}$ becomes:
    \begin{enumerate*}[label={(\arabic*)}]
        \item $\{ e \rightarrow \AA \}$ when $\lambda$ is the singleton function;
        \item $\{ e \rightharpoonup \AA \}$ when $\lambda$ is the powerset function;
        \item and $\{ \{e\} \rightarrow \AA \}$ when $\lambda$ is the objectification function.
    \end{enumerate*}
\end{example}

The multiplicity of hyperedges in a multihypergraph requires that attributes are assigned to substructures in a combinatorial manner. The corresponding multiplicities of attribute functions motivates our notion of an $n$-distribution function.

\begin{definition}[$n$-distribution function, induced distribution function, independence]\label{def:nDistFn}
Suppose $X \defeq (V,E,\mu_E)$ is a multihypergraph, $n \in \NN_+$, $(\lambda_i)_{i=1}^n$ is a sequence of domain functions for $X$, and $(\AA_i)_{i=1}^n$ is a sequence of attribute sets. An \emph{$n$-distribution function}, or simply a \emph{distribution function}, for $X$ with respect to $(\lambda_i)_{i=1}^n$ and $(\AA_i)_{i=1}^n$ is given by $\widehat{\pi} \colon E \to \bigcup_{e \in E} \big\{ \bigtimes_{i=1}^n \{ \lambda_i(e) \rightrightarrows \AA_i \} \rightarrow \NN \big\}$ such that for $e \in E$, denoting $\widehat{\pi}(e)$ by $\widehat{\pi}_e$, we have $\widehat{\pi}_e \colon \bigtimes_{i=1}^n \{ \lambda_i(e) \rightrightarrows \AA_i \} \rightarrow \NN$ and $\sum_{\pi \in \bigtimes_{i=1}^n \{ \lambda_i(e) \rightrightarrows \AA_i \}} \widehat{\pi}_e 
(\pi) = \mu_E (e)$. If the domain functions are clear from context then we may specify the attribute sets $(\AA_i)_{i=1}^n$ associated with the $n$-distribution function $\widehat{\pi}$ by writing $(\widehat{\pi},(\AA_i)_{i=1}^n)$.

If $P \subseteq [n]$ is nonempty with complement $P^{\prime}$ (both with induced linear orders) then the function $\widehat{\psi} \colon E \to \bigcup_{e \in E} \big\{ \bigtimes_{i \in P} \{ \lambda_i(e) \rightrightarrows \AA_i \} \rightarrow \NN \big\}$ such that $\widehat{\psi}_e \big((\pi_i)_{i \in P}\big) \defeq \sum_{(\pi_i)_{i \in P^{\prime}} \in \bigtimes_{i \in P^{\prime}} \{ \lambda_i(e) \rightrightarrows \AA_i \}} \widehat{\pi}_e \big((\pi_i)_{i=1}^n\big)$ for $e \in E$ and $(\pi_i)_{i \in P} \in \bigtimes_{i \in P} \{ \lambda_i(e) \rightrightarrows \AA_i \}$ is a $\left| P \right|$-distribution function, referred to as the \emph{$P$-induced $\left| P \right|$-distribution function}.

Suppose $m$, $n \in \NN_+$, $\widehat{\pi} \colon E \to \bigcup_{e \in E} \big\{ \bigtimes_{i=1}^m \{ \lambda_i(e) \rightrightarrows \AA_i \} \rightarrow \NN \big\}$ is an $m$-distribution function for $X$ with respect to the sequence of domain functions $(\lambda_i)_{i=1}^m$ and the sequence of attribute sets $(\AA_i)_{i=1}^m$, and $\widehat{\pi}^{\prime} \colon E \to \bigcup_{e \in E} \big\{ \bigtimes_{i=1}^n \{ \lambda^{\prime}_i(e) \rightrightarrows \AA^{\prime}_i \} \rightarrow \NN \big\}$ is an $n$-distribution function for $X$ with respect to the sequence of domain functions $(\lambda^{\prime}_i)_{i=1}^n$ and the sequence of attribute sets $(\AA^{\prime}_i)_{i=1}^n$. Then $\widehat{\pi}$ is \emph{independent} from $\widehat{\pi}^{\prime}$ if and only if for every $e \in E$ it holds that for $\pi \in \widehat{\pi}_e^{-1}(\NN_+)$, $\pi^{\prime} \in {\widehat{\pi}_e}^{\prime \, -1}(\NN_+)$, $j \in [m]$, and $k \in [n]$ the $\AA_j$-attribute function $\pi_j$ is independent from the $\AA^{\prime}_k$-attribute function $\pi^{\prime}_k$; otherwise $\widehat{\pi}$ is \emph{dependent} on $\widehat{\pi}^{\prime}$. If a distribution function $\widehat{\pi}$ is dependent on the $r \in \NN_+$ distribution functions $\{ \widehat{\pi}_1, \widehat{\pi}_2, \ldots, \widehat{\pi}_r \}$ then we denote the dependence as $\widehat{\pi} = \widehat{\pi}[\widehat{\pi}_1;\widehat{\pi}_2; \ldots;\widehat{\pi}_r]$.
\end{definition}

\begin{remark}\label{rem:multiplicity-matching}
    With reference to Definition~\ref{def:nDistFn}, $\widehat{\pi}_e \colon \bigtimes_{i=1}^n \{ \lambda_i(e) \rightrightarrows \AA_i \} \rightarrow \NN$ is a multiplicity function on the set of $n$-tuples $\bigtimes_{i=1}^n \{ \lambda_i(e) \rightrightarrows \AA_i \}$ of attribute functions which are defined on certain substructures of the hyperedge $e$. The condition $\sum_{\pi \in \bigtimes_{i=1}^n \{ \lambda_i(e) \rightrightarrows \AA_i \}} \widehat{\pi}_e (\pi) = \mu_E (e)$ ensures that the total number of $n$-tuples of attribute functions equals the multiplicity of $e$.
\end{remark}

\begin{example}[$n$-distribution functions for AMT-type systems]\label{ex:DistFun}\leavevmode
    Continuing from Example~\ref{ex:attribute}, let $X \defeq (V,E,\mu_E)$ be a multihypergraph. The first three distribution functions are the \emph{orientation}, \emph{multiset}, and \emph{hyperedge label functions}, which are 1-distribution functions, pairwise independent, and provide a multihypergraph with the properties of vertex orientation, vertex multiplicity, and hyperedge label, respectively. The fourth distribution function is the \emph{negative/positive-orientation multiset function}, which is a 2-distribution function, is dependent on the orientation and multiset functions, and is used to partition vertex multiplicity into negative and positive parts according to the specific orientation of a vertex in a hyperedge.
    \begin{enumerate}[label={(\arabic*)},topsep=3pt,itemsep=1pt,leftmargin=16pt]
        \item \emph{Multihypergraph orientation function, $\widehat{\sigma}$}: Let the domain function $\lambda$ for $X$ be the singleton function and the attribute set be the orientation set $\I$. A \emph{multihypergraph orientation function} $\widehat{\sigma} \colon E \to \bigcup_{e \in E} \{ \{ e \rightarrow \I \} \rightarrow \NN \}$ for $X$ with respect to $\lambda$ and $\I$ satisfies the conditions $\widehat{\sigma}_e \colon \{ e \rightarrow \I \} \rightarrow \NN$ and $\sum_{\sigma \in \{ e \rightarrow \I \}} \widehat{\sigma}_e (\sigma) = \mu_E (e)$ for $e \in E$. Note that if $e = \emptyset$ then $\{ e \rightarrow \I \}$ consists of only the empty function to $\I$. A multihypergraph with an orientation function $(V,E,\mu_E,\widehat{\sigma})$ is an \emph{oriented multihypergraph}. \label{ex:Orientation}
        \item \emph{Multihypergraph multiset function, $\widehat{\mu}$}: Let the domain function $\lambda$ for $X$ be the singleton function and the attribute set be $\AA \subseteq \RR$ which is nonempty and closed under addition. A \emph{multihypergraph multiset function} $\widehat{\mu} \colon E \to \bigcup_{e \in E} \{ \{e \rightarrow \AA\} \rightarrow \NN \}$ for $X$ with respect to $\lambda$ and $\AA$ satisfies the conditions $\widehat{\mu}_e \colon \{e \rightarrow \AA\} \to \NN$ and $\sum_{\mu \in \{e \rightarrow \AA\}} \widehat{\mu}_e(\mu) = \mu_E (e)$ for $e \in E$. Note that if $e = \emptyset$ then $\{e \rightarrow \AA\}$ consists of only the empty function to $\AA$. A multihypergraph with a multiset function $(V,E,\mu_E,\widehat{\mu})$ is a \emph{multiset multihypergraph}.
        \item \emph{Multihypergraph hyperedge label function, $\widehat{\rho}$}: Let the domain function $\lambda$ for $X$ be the objectification function and the attribute set be $\VV \subseteq \RR$ which is nonempty. A \emph{multihypergraph hyperedge label function} $\widehat{\rho} \colon E \to \bigcup_{e \in E} \{ \{ \{e\} \rightarrow \VV \} \rightarrow \NN \}$ for $X$ with respect to $\lambda$ and $\VV$ satisfies the conditions $\widehat{\rho}_e \colon \{ \{e\} \rightarrow \VV \} \to \NN$ and $\sum_{\rho \in \{ \{e\} \rightarrow \VV \}} \widehat{\rho}_e(\rho) = \mu_E (e)$ for $e \in E$. A multihypergraph with a hyperedge label function $(V,E,\mu_E,\widehat{\rho})$ is a \emph{hyperedge-labelled multihypergraph}.
        \item \emph{Multihypergraph negative/positive-oriented multiset function, $\widehat{\mu}^{\,\Minus / \Plus}$}: Let the domain function $\lambda$ for $X$ be the powerset function and the attribute set be $\AA \subseteq \RR$ which is nonempty and closed under addition. A \emph{multihypergraph negative/positive-oriented multiset function} $\widehat{\mu}^{\,\Minus / \Plus} \colon E \to \bigcup_{e \in E} \{ \{ e \rightharpoonup \AA \} \times \{ e \rightharpoonup \AA \} \rightarrow \NN \}$ for $X$ with respect to $(\lambda,\lambda)$ and $(\AA,\AA)$ satisfies the conditions $\widehat{\mu}^{\,\Minus / \Plus}_e \colon \{ e \rightharpoonup \AA \} \times \{ e \rightharpoonup \AA \} \to \NN$ and $\sum_{\text{$(\mu^\Minus,\mu^\Plus) \in \{ e \rightharpoonup \AA \} \times \{ e \rightharpoonup \AA$} \}} \widehat{\mu}^{\,\Minus / \Plus}_e \big((\mu^\Minus,\mu^\Plus)\big) = \mu_E (e)$ for $e \in E$, and $\widehat{\mu}^{\,\Minus / \Plus} = \widehat{\mu}^{\,\Minus / \Plus}[\widehat{\mu};\widehat{\sigma}]$. Note that if $e = \emptyset$ then $\{ e \rightharpoonup \AA \}$ consists of only the empty function to $\AA$.
        
        The $\{1\}$-induced 1-distribution function $\widehat{\mu}^{\,\Minus} \colon E \to \bigcup_{e \in E} \{ \{ e \rightharpoonup \AA \} \rightarrow \NN \}$ for $X$ with respect to $\lambda$ and $\AA$ such that $\widehat{\mu}^{\,\Minus}_e (\mu^{\Minus}) \defeq \sum_{\mu^{\Plus} \in  \{ e \rightharpoonup\AA \}} \widehat{\mu}^{\,\Minus / \Plus}_e \big((\mu^{\Minus},\mu^{\Plus})\big)$, for $e \in E$ and $\mu^{\Minus} \in \{ e \rightharpoonup \AA \}$ is the \emph{multihypergraph negative-oriented multiset function}. Similarly, the $\{1\}$-induced 1-distribution function $\widehat{\mu}^{\,\Plus} \colon E \to \bigcup_{e \in E} \{ \{ e \rightharpoonup \AA \} \rightarrow \NN \}$ for $X$ with respect to $\lambda$ and $\AA$ such that $\widehat{\mu}^{\,\Plus}_e (\mu^{\Plus}) \defeq \sum_{\mu^{\Minus} \in  \{ e \rightharpoonup \AA \}} \widehat{\mu}^{\,\Minus / \Plus}_e \big((\mu^{\Minus},\mu^{\Plus})\big)$, for $e \in E$ and $\mu^{\Plus} \in \{ e \rightharpoonup \AA \}$ is the \emph{multihypergraph positive-oriented multiset function}. \label{ex:NegPos-Oriented-Multiset}
    \end{enumerate}
\end{example}

The conditions that distribution functions must satisfy specify the precise way in which attributes may be dependent on certain system properties. Following Remark~\ref{rem:multiplicity-matching}, matching hyperedge multiplicity data is one condition that applies to all distribution functions. More complex attribute dependency patterns result in additional conditions: in Example~\ref{ex:DistFun}~\ref{ex:NegPos-Oriented-Multiset}, for instance, the particular instances of $\widehat{\mu}$ and $\widehat{\sigma}$ determine $\widehat{\mu}^{\,\Minus / \Plus}$, ensuring that the partitioning into negative and positive multiplicities is consistent with both vertex multiplicities and orientations.

A \emph{distribution hierarchy} formalises the dependencies between the relevant distribution functions as a graded poset.

\begin{definition}[Distribution hierarchy]\label{def:DistHierarchy}
Let $X \defeq (V,E,\mu_E)$ be a multihypergraph. A \emph{distribution hierarchy} for $X$ is a graded poset $Q \defeq (S,<,\rnk)$ where $S$ is a finite set of distribution functions for $X$, $<$ is a strict partial order on $S$, and $\rnk \colon S \to \NN$ is a rank function, such that:
    \begin{enumerate*}[label={(\arabic*)}]
        \item $\widehat{\pi} < \widehat{\pi}^{\prime}$ if and only if $\widehat{\pi}^{\prime}$ depends on $\widehat{\pi}$, for all $\widehat{\pi}$, $\widehat{\pi}^{\prime} \in S$;
        \item If $\widehat{\pi}^{\prime} \in \rnk^{-1}(i+1)$ then there exists $\widehat{\pi} \in \rnk^{-1}(i)$ such that $\widehat{\pi} < \widehat{\pi}^{\prime}$, for all $i \in \NN$;
        \item All minimal elements in $(S,<)$ have rank 0.
    \end{enumerate*}
\end{definition}

\begin{remark}
    With reference to Definition~\ref{def:DistHierarchy}, and for $i \in \NN$: the distribution functions in $\rnk^{-1}(i)$ all have rank $i$ so are pairwise incomparable, hence pairwise independent; if $\widehat{\pi} \in \rnk^{-1}(i)$ and $\widehat{\pi}^{\prime} \in \rnk^{-1}(i+1)$ then $\rnk(\widehat{\pi}^{\prime}) = \rnk(\widehat{\pi}) + 1$, so $\widehat{\pi}^{\prime} \nless \widehat{\pi}$, hence $\widehat{\pi}$ is independent from $\widehat{\pi}^{\prime}$; there exists a unique $n \in \NN$ such that $\rnk^{-1}(j) \ne \emptyset$ if and only if $j \in \NN$ and $j \le n$.
\end{remark}

\begin{remark}
    While the distribution hierarchy may be a nonlinear poset, which efficiently encodes the necessary information, there may be occasion where a linear ordering of the distribution functions is required. Proposition~\ref{prop:dist-hier} in Appendix~\ref{app:dist-hier} shows that such a linear order always exists, and is in fact a well order. The well ordering of the distribution hierarchy is, of course, not necessarily unique, however the linear ordering maintains the original dependencies.
\end{remark}

\begin{remark}\label{rem:Multihyperedges}
    While our foundational structure is a multihypergraph, we could instead begin with a hypergraph and arrive at a multihypergraph by associating the attribute of multiplicity to hyperedges. These two approaches are logically equivalent since, in this case, objectification does not result in information distortion as hyperedge multiplicity data is neither contextual (Remark~\ref{rem:domains-context}) nor dependent on other attributes. Moreover, we could begin with just a ground set of objects and then associate hyperedges as attributes, another logically equivalent approach. In this work we begin with a multihypergraph since it is the structure underlying our applications, and thereby allows a simplification of the attribute structure.
\end{remark}

\begin{example}[Distribution hierarchy for AMT-type systems]\label{ex:DistHierarchy}
    Let $X \defeq (V,E,\mu_E)$ be a multihypergraph. Continuing from Example~\ref{ex:DistFun}, the corresponding distribution hierarchy for $X$ is $Q \defeq (S,<,\rnk)$, where $S \defeq \{(\widehat{\sigma},\I),(\widehat{\mu},\AA),(\widehat{\rho},\VV),\allowbreak(\widehat{\mu}^{\,\Minus / \Plus},\AA)\}$, the comparable distribution functions are $\widehat{\sigma} < \widehat{\mu}^{\,\Minus / \Plus}$ and $\widehat{\mu} < \widehat{\mu}^{\,\Minus / \Plus}$, and $\rnk(\widehat{\sigma}) = \rnk(\widehat{\mu}) = \rnk(\widehat{\rho}) = 0$ and $\rnk(\widehat{\mu}^{\,\Minus / \Plus}) = 1$.
\end{example}

Distribution functions specify the multiplicity (or distribution) of attribute functions for each hyperedge, based on the attribute functions that are possible given the domain functions and attribute sets. A distribution hierarchy represents the dependency structure among the distribution functions of interest. To specify a particular attribute structure, out of all possible combinations, that is relevant for each instance of each hyperedge we introduce the notion of an \emph{attribute hierarchy}, which is induced by the distribution hierarchy. Finally, we define the \emph{association function} which specifies all of the required associations between the various attributes based on the attribute hierarchies.

\begin{definition}[Attribute hierarchy]\label{def:AttHierarchy}
    Let $X \defeq (V,E,\mu_E)$ be a multihypergraph, and let $Q \defeq (S,<,\rnk)$ be a distribution hierarchy for $X$. For each $e \in E$ denote by $Q_e$ the set of all graded posets such that $(S_e,<_e,\rnk_e) \in Q_e$ implies:
        \begin{enumerate*}[label={(\arabic*)}]
            \item The set $S_e$ is an element of the unordered Cartesian product $\prod_{\hat{\pi} \in S} \Dom(\widehat{\pi}_e)$;
            \item $\pi <_e \pi^{\prime}$ if and only if $\widehat{\pi} < \widehat{\pi}^{\prime}$, for all $\pi$, $\pi^{\prime} \in S_e$ where $\widehat{\pi}$, $\widehat{\pi}^{\prime} \in S$, $\pi \in \Dom(\widehat{\pi}_e)$, and $\pi^{\prime} \in \Dom(\widehat{\pi}^{\prime}_e)$;
            \item $\rnk_e(\pi) = \rnk(\widehat{\pi})$, for all $\pi \in S_e$ where $\widehat{\pi} \in S$ and $\pi \in \Dom(\widehat{\pi}_e)$.
        \end{enumerate*}
        Each $(S_e,<_e,\rnk_e) \in Q_e$ is an \emph{attribute hierarchy}.
\end{definition}

\begin{remark}
    With reference to Definition~\ref{def:AttHierarchy}, for each $e \in E$ and $(S_e,<_e,\rnk_e) \in Q_e$:
    \begin{enumerate*}[label={(\arabic*)}]
        \item It follows from the dependency structure of the distribution hierarchy that the factors in the unordered Cartesian product $\prod_{\hat{\pi} \in S} \Dom(\widehat{\pi}_e)$ are pairwise disjoint sets, hence $\left| S_e \right| = \left| S \right|$.
        \item There is only one graded poset in $Q_e$ with the ground set $S_e$, so the notation $(S_e,<_e,\rnk_e)$ is unambiguous.
        \item The map $f \colon S \to S_e$ such that $f(\widehat{\pi}) = \pi$, where $\Dom(\widehat{\pi}_e) \cap S_e = \{ \pi \}$, is a well-defined order isomorphism of the graded posets $Q$ and $(S_e,<_e,\rnk_e)$.
    \end{enumerate*}
\end{remark}

\begin{example}[Attribute hierarchy for AMT-type systems]\label{ex:AttHierarchy}
Continuing from Example~\ref{ex:DistHierarchy}, let $X \defeq (V,E,\mu_E)$ be a multihypergraph and let $Q \defeq (S,<,\rnk)$ be an AMT-type distribution hierarchy for $X$. For $e \in E$ and $(S_e,<_e, \linebreak[4] \rnk_e) \in Q_e$:
    \begin{enumerate*}[label={(\arabic*)}]
        \item $S_e = \{\sigma,\mu,\rho,(\mu^{\Minus},\mu^{\Plus})\}$, where $\sigma \in \Dom(\widehat{\sigma}_e)$, $\mu \in \Dom(\widehat{\mu}_e)$, $\rho \in \Dom(\widehat{\rho}_e)$, and $(\mu^{\Minus},\mu^{\Plus}) \in \Dom(\widehat{\mu}^{\,\Minus / \Plus}_e)$.
        \item $\rnk_e(\sigma) = \rnk_e(\mu) = \rnk_e(\rho) = 0$ and $\rnk_e((\mu^{\Minus},\mu^{\Plus})) = 1$.
        \item The comparable attribute function sequences are $\sigma <_e (\mu^{\Minus},\mu^{\Plus})$ and $\mu <_e (\mu^{\Minus},\mu^{\Plus})$.
    \end{enumerate*}
\end{example}

\begin{definition}[Association function]
Let $X \defeq (V,E,\mu_E)$ be a multihypergraph, and let $Q \defeq (S,<,\rnk)$ be a distribution hierarchy for $X$. An \emph{association function} $\chi \colon E \to \bigcup_{e \in E} \{ Q_e \rightarrow \NN \}$ for $X$ with respect to $Q$ satisfies, for $e \in E$:
    \begin{enumerate*}[label={(\arabic*)}]
        \item
        $\chi_e \colon Q_e \to \NN$.
        \item $\widehat{\pi}_e(\pi) = \sum_{\{\, (S_e,<_e,\rnk_e) \in Q_e \mid \pi \in S_e \,\}} \chi_e((S_e,<_e,\rnk_e))$ for $\widehat{\pi} \in S$ and $\pi \in \Dom(\widehat{\pi}_e)$. 
    \end{enumerate*}
\end{definition}

\begin{figure}[!t]
    \centering
    \resizebox{.68\textwidth}{!}{%
\begin{tikzpicture}

\node [font=\LARGE] at (-3.55,10.5) {\normalfont\sffamily\fontsize{11}{9}\selectfont(\emph{A})};
\draw[line width=1pt, rotate around={41:(0,8.75)}] (0,8.75) ellipse (2.25cm and 1.5cm) node {\small $e_4$};
\draw[line width=1pt] (0.75,7) ellipse (2cm and 1.25cm) node {\small $e_3$};
\draw[line width=1pt, rotate around={-28:(2.5,9.25)}] (2.5,9.25) ellipse (2cm and 1cm) node {\small $e_1$};
\draw[line width=1pt, rotate around={22:(-0.25,7.5)}] (-0.25,7.5) ellipse (0.5cm and 0.25cm) node {\small $e_6$};
\draw[line width=1pt] (-0.75,9.25) -- (-3.5,8.75)node[pos=0.55,above]{$e_5$};
\draw[line width=1pt] (2.75,8.5) -- (1.75,6.75)node[pos=0.65,above,xshift=-.17cm]{$e_2$};
\draw[fill={rgb,255:red,235;green,235;blue,235}, line width=1pt] (2.75,8.5) circle (0.25cm) node {\small $v_5$};
\draw[fill={rgb,255:red,235;green,235;blue,235}, line width=1pt] (1.75,6.75) circle (0.25cm) node {\small $v_8$};
\draw[fill={rgb,255:red,235;green,235;blue,235}, line width=1pt] (-3.5,8.75) circle (0.25cm) node {\small $v_1$};
\draw[fill={rgb,255:red,235;green,235;blue,235}, line width=1pt] (-0.75,9.25) circle (0.25cm) node {\tiny $v_2$};
\draw[fill={rgb,255:red,235;green,235;blue,235}, line width=1pt] (0.25,7.75) circle (0.25cm) node {\small $v_7$};
\draw[fill={rgb,255:red,235;green,235;blue,235}, line width=1pt] (1.25,9.75) circle (0.25cm) node {\small $v_3$};
\draw[fill={rgb,255:red,235;green,235;blue,235}, line width=1pt] (3,9.5) circle (0.25cm) node {\small $v_4$};
\draw[fill={rgb,255:red,235;green,235;blue,235}, line width=1pt] (0.25,6.25) circle (0.25cm) node {\small $v_6$};
\draw[line width=1pt, rotate around={-134:(3.75,7.25)}] (3.75,7.25) ellipse (0.5cm and 0.25cm) node {\small $e_7$};

\node [font=\LARGE] at (6.6,10.5) {\normalfont\sffamily\fontsize{11}{9}\selectfont(\emph{B})};
\node[font=\Large] at (6.9,7.6) {$Q_e \defeq$};
\node[font=\Large] at (9,7.1) {$\rho$};
\draw[color={rgb,255:red,146;green,146;blue,146}, line width=1pt, dashed] (8,10.1) -- (8,6.1);
\draw[color={rgb,255:red,146;green,146;blue,146}, line width=1pt, dashed] (10,10.1) -- (10,6.1);
\node[font=\Large] at (9,9.85) {$0$};
\node[font=\Large] at (9,8.1) {$\mu$};
\draw[line width=1pt] (7.5,7.45) -- (8.75,7.2);
\draw[line width=1pt] (8.75,8) -- (7.5,7.75);

\node [font=\LARGE] at (12.6,10.5) {\normalfont\sffamily\fontsize{11}{9}\selectfont(\emph{C})};
\node[font=\Large] at (12.9,7.6) {$Q_e \defeq$};
\node[font=\Large] at (15,8.6) {$\sigma$};
\node[font=\Large] at (15,7.6) {$\mu$};
\node[font=\Large] at (15,6.6) {$\rho$};
\node[font=\Large] at (17.25,8.1) {$(\mu^{\Minus}, \mu^{\Plus})$};
\draw[line width=1pt] (13.5,7.85) -- (14.75,8.6);
\draw[line width=1pt] (13.5,7.6) -- (14.75,7.6);
\draw[line width=1pt] (13.5,7.35) -- (14.75,6.6);
\draw[line width=1pt] (15.25,8.6) -- (16.5,8.35);
\draw[line width=1pt] (16.5,7.85) -- (15.25,7.6);
\draw[color={rgb,255:red,146;green,146;blue,146}, line width=1pt, dashed] (14,10.1) -- (14,6.1);
\draw[color={rgb,255:red,146;green,146;blue,146}, line width=1pt, dashed] (16,10.1) -- (16,6.1);
\node[font=\Large] at (15,9.85) {$0$};
\node[font=\Large] at (17.25,9.35) {$1$};

\end{tikzpicture}}%
    \vspace{-.6cm}
    \caption{Attribute hierarchy for AMT-type systems and extensibility of the attribute system. (\emph{A}) Multihypergraph $X \defeq (V, E, \mu_E)$ representing multidimensional relations $E \defeq \{e_1,\ldots,e_7\}$ on ground set $V \defeq \{v_1,\ldots,v_8\}$, with $\mu_E$ arbitrarily defined. (\emph{B}) Attribute hierarchy $Q_e$ for $X$ with $S_e \defeq (\mu,\rho)$, $<$ given by the solid lines, and $\rnk$ for each distribution function delimited by the dashed lines, defining a labelled multiset multihypergraph. (\emph{C}) Attribute hierarchy for $X$ defining a labelled multihypergraph with the added properties of orientation and negative/positive oriented multiplicity, corresponding to the attribute hierarchy in Example~\ref{ex:AttHierarchy}.}
    \label{fig:sys-hyp}
\end{figure}

Figure~\ref{fig:sys-hyp} illustrates the flexibility of the attribute system: augmenting a structure with a richer description requires only a change of distribution hierarchy while leaving the underlying relational structure of the multihypergraph unchanged. The distribution hierarchy in Figure~\ref{fig:sys-hyp}~(\emph{B}) is extended to that of Figure~\ref{fig:sys-hyp}~(\emph{C}), giving two different interpretations of the same underlying structure shown in Figure~\ref{fig:sys-hyp}~(\emph{A}).

\begin{notation}
    In Example~\ref{ex:AssocFun} we employ the standard notion of a \emph{composition} of an integer $n \in \NN_+$, which is a finite sequence in $\NN_+$ such that the elements, called \emph{parts}, sum to $n$. For $m$, $n \in \NN_+$ with $m \le n$ we denote by $C(n,m)$ the set of all compositions of $n$ with $m$ parts.
\end{notation}

\begin{example}[Association functions for AMT-type systems]\label{ex:AssocFun}
Continuing from Example~\ref{ex:AttHierarchy}, let $X \defeq (V,E,\mu_E)$ be a multihypergraph and let $Q \defeq (S,<,\rnk)$ be an AMT-type distribution hierarchy for $X$. The relevant association functions $\chi \colon E \to \bigcup_{e \in E} \{ Q_e \rightarrow \NN \}$ for $X$ with respect to $Q$ describe how the particular instances of $\widehat{\mu}$ and $\widehat{\sigma}$ determine the particular instances of $\widehat{\mu}^{\,\Minus / \Plus}$. Specifically, for $e \in E$ and $(S_e,<_e,\rnk_e) \in \chi_e^{-1}(\NN_+)$, where $S_e \defeq \{\sigma,\mu,\rho,(\mu^{\Minus},\mu^{\Plus})\}$, the vertex multiplicity $\mu$ is partitioned into negative $\mu^{\Minus}$ and positive $\mu^{\Plus}$ parts according to the vertex orientation $\sigma$ as follows:
    \begin{itemize}[topsep=3pt,itemsep=1pt,leftmargin=12pt]
        \item $\Dom(\mu^\Minus) = \{\, v \in e \mid -1 \in \sigma(v) \,\}$ and $\Dom(\mu^\Plus) = \{\, v \in e \mid +1 \in \sigma(v) \,\}$.
        \item $\mu^\Minus(v) = \mu(v)$ when $v \in e$ and $\sigma(v) = \{-1\}$.
        \item $\mu^\Plus(v) = \mu(v)$ when $v \in e$ and $\sigma(v) = \{+1\}$.
        \item $\big(\mu^\Minus(v),\mu^\Plus(v)\big) \in C(\mu(v),2)$ when $v \in e$ and $\sigma(v) = \{-1,+1\}$.
    \end{itemize}
\end{example}

\begin{remark}\label{rem:RelData}
    Example~\ref{ex:AssocFun} illustrates how defining attributes relationally increases the level of resolution on structures. Here vertex multiplicity is first defined globally via $\mu$, before being partitioned into oriented vertex multiplicities defined relationally to $\sigma$ via $\mu^{\Minus}$ and $\mu^{\Plus}$. This second, local level of resolution therefore provides a finer-grained way to associate data to the same structure. This is relevant when the same data can be encoded in multiple ways. The same global multiplicity value can for instance arise from multiple compositions of negative- and positive-oriented multiplicities, and multiplicity data therefore needs to be defined relationally to provide enough context to distinguish between distinct ways of reaching the same value.
\end{remark}

\begin{definition}[Systems hypergraph]
    Let $X \defeq (V,E,\mu_E)$ be a multihypergraph, let $Q \defeq (S,<,\rnk)$ be a distribution hierarchy for $X$, and let $\chi \colon E \to \bigcup_{e \in E} \{ Q_e \rightarrow \NN \}$ be an association function for $X$ with respect to $Q$. Then we call $(V,E,\mu_E,Q,\chi)$ a \emph{systems hypergraph}.
\end{definition}

Systems hypergraphs represent all possible ways to annotate a multihypergraph with a collection of attributes. This lets us define \emph{AMT-type systems hypergraphs}, a particular class of systems hypergraphs designed to represent systems of AMT type. 

\begin{example}[AMT-type systems hypergraph]\label{ex:SysHyp}
    A systems hypergraph $(V,E,\mu_E,Q,\chi)$ is AMT-type when $Q$ and $\chi$ are AMT-type.
\end{example}

Systems hypergraphs are abstract relational systems designed to represent attributed relational structures with full generality. A systems hypergraph integrates a multihypergraph, representing the foundational structure, with a dependency structure of attributes that associates properties to the multihypergraph. This affords maximum independence, generality, efficiency, and precision for attribute definition and association. Further, it facilitates the avoidance of redundancy: an example of redundancy would be defining the same attribute individually on all separate instances of a structure, rather than a single global definition augmented with multiplicity data. Systems hypergraphs also incorporate the flexibility to represent independent attributes in either of two natural and logically equivalent ways: within the attribute structure; or directly associated with the underlying multihypergraph.

\section{Stochastic Petri nets generalise chemical reaction networks}\label{sec:spns-crns}

Before presenting our main results we precisely establish the relationship between CRNs and SPNs. These two formalisms arose in the context of specific applications, and certain aspects of their mathematical notation are arbitrarily defined for representational convenience while others are ambiguous. Formalisms can be difficult to generalise and therefore to integrate when the assumptions behind such choices are left implicit. Here we carefully define and compare the two formalisms, showing in particular that SPNs are in fact strictly more general than CRNs.

Our analysis reveals the similarities and differences between the two formalisms, motivating the abstraction and generalisation of systems hypergraphs. This exemplifies the fact that the particular mathematical form of a representation can constrain its ability to represent certain properties of systems.

To establish the connection between CRNs and SPNs we need to consider isomorphism classes of bipartite graphs, since they allow for multiple second part vertices to be incident on the same set of first part vertices.

\begin{definition}[Type-1 isomorphism and isomorphism classes of directed bipartite graphs]\label{def:BGiso}
    Let $\D$ be the set of all labelled directed bipartite graphs with arc weight functions and second-part vertex weight functions, and let $D \defeq (V,W,F,\omega,\delta)$ and $D^{\prime} \defeq (V^{\prime},W^{\prime},F^{\prime},\omega^{\prime},\delta^{\prime})$ be in $\D$. A type-$1$ bipartite graph isomorphism $\phi$ is a directed graph isomorphism $\phi \colon D \to D^{\prime}$ such that: \begin{enumerate*}[label=(\arabic*)] \item $\phi(V) = V^{\prime}$ and $\phi(W) = W^{\prime}$; \item $\phi$ preserves vertex labels of the first part; \item $\Codom(\omega) = \Codom(\omega^{\prime})$ and $\omega^{\prime}\big(\phi(e)\big) = \omega(e)$ for all $e \in F$; \item $\Codom(\delta) = \Codom(\delta^{\prime})$, $\phi\big(\Dom(\delta)\big) = \Dom(\delta^{\prime})$, and $\delta^{\prime}\big(\phi(v)\big) = \delta(v)$ for all $v \in \Dom(\delta)$. \end{enumerate*} If $D$ and $D^{\prime}$ are type-$1$ isomorphic then we write $D \cong_1 D^{\prime}$.

    Let $R_1$ be the equivalence relation on $\D$ such that $D \mathrel{R_1} D^{\prime}$ if and only if there exists a type-$1$ bipartite graph isomorphism $\pi \colon D \to D^{\prime}$, for all $D$, $D^{\prime} \in \D$. Then the quotient set $\widetilde{\D} \defeq \D/R_1 = \{\, [D]_1 \mid D \in \D \,\}$ consists of the type-$1$ isomorphism classes of $\D$. Each type-1 equivalence class is a directed bipartite graph that is arc weighted, vertex weighted, with labelled vertices in the first part and unlabelled vertices in the second part.
\end{definition}

\begin{definition}[Symmetric vertices, symmetric arc]\label{def:symm}
    Let $G \defeq (V,E)$ be a graph, either undirected or directed, which may also have attributes associated with the vertices and edges. Two vertices $u$, $v \in V$ are \emph{symmetric} if and only if the permutation $(u,v)$ of $V$ is an automorphism of $G$. If $G$ is directed then an arc $uv \in E$ is \emph{symmetric} if and only if $vu \in E$ with the same attributes as $uv$.
\end{definition}

Proposition~\ref{prop:CRN-SPN} provides an informal statement of the relationship between CRNs and SPNs: the formal statement and complete proof are in the Appendix (Proposition~\ref{prop:CRN-SPN-supp}).

\begin{proposition}[(\emph{Informal version}) Correspondence between chemical reaction networks and stochastic Petri nets] \label{prop:CRN-SPN}
     Let $\Z$ be the set of all connected chemical reaction networks with a kinetics $Z \defeq (\S,\C,\R,k)$. Let $\U$ be the set of all stochastic Petri nets $U \defeq (P,T,F,(\omega,\NN_+),(r,\RR_{\ge 0}))$ such that all distinct pairs of vertices in $T$ are asymmetric, and for each vertex $z \in T$ there exists an asymmetric arc incident on $z$. Let $\widetilde{\U}$ be the set of all type-1 isomorphism classes of $\U$. Then there exists a bijection $\kappa \colon \Z \to \widetilde{\U}$ such that:
        \begin{enumerate}[label=(\arabic*),topsep=3pt,itemsep=1pt,leftmargin=16pt]
            \item $\kappa\big((\S,\C,\R,k)\big) = \big[(\S,\R,F,\omega,k)\big]_1$ where $F = F(\S,\R)$ and $\omega = \omega(\S,\R)$.
            \item $\kappa^{-1}\big(\big[(P,T,F,\omega,r)\big]_1\big) = (P,\C,\R,k)$ where $\R = \R(P,T,F,\omega)$, $\C = \C(P,T,F,\omega)$, and $k = k(T,r)$.
        \end{enumerate}
\end{proposition}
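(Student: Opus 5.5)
We build $\kappa$ explicitly, check that it lands in $\widetilde{\U}$, build a candidate inverse $\kappa^{-1}$, and verify that the two compositions are identities: on CRNs exactly, and on SPNs up to type-1 isomorphism. For a connected kinetic system $Z=(\S,\C,\R,k)$ we take $P\defeq\S$ and $T\defeq\R$, with the reactions as unlabelled second-part vertices. The flow relation is
\[
F(\S,\R)\defeq\{(s,(c,c'))\mid s\in\supp(c)\}\cup\{((c,c'),s)\mid s\in\supp(c')\},
\]
and the weights are $\omega(s,(c,c'))\defeq c(s)$ and $\omega((c,c'),s)\defeq c'(s)$. Because complexes are \emph{positive} partial multisets, these weights lie in $\NN_+$, and the correspondence between arcs and support elements is exact. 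The rate function is $r\defeq k$.

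We first check that $\kappa(Z)$ satisfies the defining conditions of $\U$.
\begin{enumerate*}[label=(\roman*)]
\item No place is isolated. Each species occurs in some complex, and by connectedness every complex occurs in some reaction.
\item Every transition $(c,c')$ has an asymmetric arc. Since $\R$ is irreflexive, $c\neq c'$ as positive partial multisets. So there is a species $s$ that lies in one support but not the other, or in both with different multiplicities. In either case one of the arcs at $s$ has no reverse arc carrying the same weight.
\item Distinct transitions are asymmetric. Swapping two transitions $t=(c_1,c_1')$ and $t'=(c_2,c_2')$ is an automorphism exactly when their in-arcs and out-arcs agree in both endpoints and weights. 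That forces $c_1=c_2$ and $c_1'=c_2'$, hence $t=t'$. Rates are second-part vertex weights, so they must also agree, but this is not needed here.
\end{enumerate*}
Finally we take the type-1 class $[\cdot]_1$. This class does not depend on how reactions are named as vertices.

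For the inverse, take a representative $U=(P,T,F,\omega,r)$ and define, for each $t\in T$, the pre-complex $c_t$ and post-complex $c'_t$. Here $c_t$ has domain $\{p\mid(p,t)\in F\}$ and $c_t(p)=\omega(p,t)$, and $c'_t$ is defined symmetrically from the out-arcs. These are positive partial multisets because $\omega$ takes values in $\NN_+$. Set
\[
\C\defeq\{c_t,c'_t\mid t\in T\},\qquad \R\defeq\{(c_t,c'_t)\mid t\in T\},\qquad k((c_t,c'_t))\defeq r(t).
\]
Three things must be shown.
\begin{enumerate*}[label=(\alph*)]
\item \emph{Well-definedness on classes.} A type-1 isomorphism fixes the labelled places and preserves arc weights and $\delta=r$, so it carries $(c_t,c'_t,r(t))$ to $(c_{\phi t},c'_{\phi t},r'(\phi t))$. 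The images are therefore equal.
\item \emph{Irreflexivity.} The asymmetric arc at $t$ gives $c_t\neq c'_t$; this needs care when an in-arc and out-arc share a place but differ in weight.
\item \emph{$k$ is a function and $t\mapsto(c_t,c'_t)$ is injective.} This is the main step. If $t\neq t'$ had the same pre- and post-complexes, then $t$ and $t'$ would have identical incidence and arc weights. The transposition $(t,t')$ would then be an automorphism of the arc-weighted graph, contradicting asymmetry of distinct transition pairs.
\end{enumerate*}
Connectedness of the resulting CRN holds by construction, and every species appears because $U$ has no isolated places.

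The obstacle I expect is the precise reading of Definition~\ref{def:symm} when attributes include the vertex weight $r$. If symmetry also requires $r(t)=r(t')$, then two transitions with equal complexes but different rates would count as "asymmetric", and injectivity would fail. Such an SPN would have no CRN preimage, since $\R$ is a set. I would therefore fix, in the formal version, that symmetry of transitions is tested on the arc-weighted directed graph, ignoring $\delta$. This is presumably the source of the strictness claim "SPNs strictly generalise CRNs". Granting that, the composites are routine. $\kappa^{-1}\kappa(Z)=Z$ because $c_{(c,c')}=c$, $c'_{(c,c')}=c'$, and $\C$ is recovered by connectedness. $\kappa\kappa^{-1}[U]_1=[U]_1$ via the type-1 isomorphism $t\mapsto(c_t,c'_t)$, which is the identity on places and preserves weights and rates.
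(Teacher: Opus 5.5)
Your proof is correct and follows the paper's four-step route for the formal version (Proposition~\ref{prop:CRN-SPN-supp}). Those steps are:
\begin{enumerate}
\item well-definedness of $\kappa$;
\item well-definedness of the inverse on type-1 classes, including irreflexivity and that $k$ is a function because $t\mapsto(c_t,c'_t)$ is injective;
\item the identity $\kappa^{-1}\circ\kappa=\mathrm{id}$, using connectedness to recover $\C$;
\item the identity $\kappa\circ\kappa^{-1}=\mathrm{id}$, witnessed by the type-1 isomorphism $t\mapsto(c_t,c'_t)$.
\end{enumerate}
Your reading of the symmetry condition is the one the formal statement uses: it imposes $\T_{\text{in}}(z)\cap\T_{\text{out}}(z)=\{z\}$ and $\dot U_{\text{in}}[z]\neq\dot U^{\ast}_{\text{out}}[z]$ on $\dot U=(P,T,F,\omega)$, i.e.\ the arc-weighted graph without the rates $r$, and these are exactly your pairwise-asymmetry and asymmetric-arc conditions.
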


Understanding the relationship between CRNs and SPNs invites complexity: it requires us to consider how multiple data interrelate when moving between the two formalisms. This information is most naturally represented by multidimensional relations -- the individual maps that make up the bijection -- on the structure of each formalism, rather than by relating structures pairwise only or CRNs and SPNs themselves directly. This does not mean that the mapping cannot be decomposed into binary incidences -- we show this in Figure~\ref{fig:overview}~(\emph{B}) -- but that the relationship between CRNs and SPNs is only understood through the overall incidence structure of the mappings themselves. In practice this means that conceptually-related structures of the two formalisms may not correspond symmetrically under the map, rather the data a structure represents can be distributed across overlapping but distinct structures through each part of the overall mapping, whose inverse is therefore not straightforwardly the inverse of all its parts.

The asymmetry between CRNs and SPNs is concrete. For instance, constructing a complex set from an SPN requires using data from all parts of the underlying Petri net, whereas in the other direction the data which is represented by the complex set is distributed across the bipartite incidence structure: the supports of positive partial multisets (from the complex sets directly) are represented as transitions; and arc weights (stoichiometric coefficients) represent data from both the species and reactions sets. This asymmetry is also represented in our use of two intermediary notions when establishing the correspondence formally: neighbourhood maps, to associate complexes with sets of arcs; and twin-vertices maps, to recover arc directionality (see Definitions~\ref{def:neighbourhood-map} and~\ref{def:twin-vertices-maps} in Appendix~\ref{app:CRN-SPN}). These simplify the overall mapping by describing how certain local data must be grouped to map across formalisms. In particular, the neighbourhood map illustrates the relevance of context: the species making up a complex must be distributed over arcs that are only defined via local incidences of the relevant vertices.

Crucially, a precise statement of this relationship shows that a bijection between CRNs and SPNs only holds under certain conditions. These conditions, described in Definitions~\ref{def:BGiso} and~\ref{def:symm} and necessary in establishing Proposition~\ref{prop:CRN-SPN}, mean that SPNs make available more structure than CRNs. Type-1 isomorphism classes specifically account for label permutations at the transition level that produce distinct bipartite graphs which share the same relational structure; and symmetry requirements ensure structural uniqueness of transitions, by excluding multiple instances of a transition on the same set of places and by requiring the existence of at least one outgoing and one incoming arc for all transitions.

Having to account for extra structure is evidence of the higher generality of SPNs. We understand generality as precision here, since we are interested in representing systems accurately. Enforcing structural conditions then means that some information represented in SPNs becomes indistinguishable in CRNs. In practical terms, symmetry requirements translate to the inability of CRNs to represent the data corresponding to distinct transitions on the same set of places. Where multiple such transitions exist, they cannot be labelled individually in a CRN and therefore cannot be represented. Type-1 isomorphism classes account for the greater capacity for permutation symmetry of bipartite graphs.

The correspondence also motivates our use of positive partial multisets for representing complexes. When complexes represent vector spaces (Remark~\ref{rem:CRNZeros}), a zero stoichiometry may ambiguously translate to either a zero arc weight or an absent arc in the SPN representation -- a practical instance of Definition~\ref{def:absence}, where data about absence and absence of data coincide. The extra structure of the bipartite representation distinguishes the two meanings by construction, and defining complexes as positive partial multisets achieves the same distinction within CRNs by representing absence of data through function partiality.

\paragraph{CRNs and SPNs: differently expressive frames on the same systems.}

Despite their interchangeability in certain situations, CRNs and SPNs instantiate very different mathematical structures with nonequivalent representational capabilities. These differences become clear when their relationship is made explicit: CRNs are a class of the more general SPN formalism, and their usually claimed equivalence holds only under the specific conditions described in Proposition~\ref{prop:CRN-SPN}.

Their primary demarcation is the possibility to distinguish between distinct relations on the same set of objects. The CRN construction assumes that the differentiation of structurally equivalent relations is not required, whereas the more general SPN allows for distinct but structurally equivalent processes that have different properties, as in stochastic processes. SPNs expose more structure to be labelled, resulting in a more precise representation, while CRNs are more efficient at representing objects.

The broader lesson is of a fundamental trade-off between objects and relations. Objects make certain aspects of systems readily representable, while relations provide greater precision and generality. Moving away from objectification by increasing structural resolution makes representations more complex, as relations that are otherwise equivalent become distinguishable. It is then relations -- not objects -- that are the relevant unit of systems: different relational structures can change distinguishability (therefore, what appears as an object), and it follows that relational data are more appropriately represented by patterns of symmetry on relations~\cite{Crane2018Edge}. Objects are then better understood as equivalence classes of relations, and representation as the process of identifying the appropriate relations that constitute the observed structure.

\section{Systems hypergraphs generalise stochastic Petri nets and chemical reaction networks}

We now present the integration of both CRNs and SPNs within the same framework, by their joint representation as specific systems hypergraphs. The propositions are stated informally, with the formal statements and proofs provided in the Appendix (Propositions~\ref{prop:SH-rep-SPN-supp} and~\ref{prop:SH-rep-CRN-supp}). The relationships between all formalisms are summarised in Figure~\ref{fig:overview}~(\emph{A}).

\begin{definition}[SPN-type systems hypergraph]
    An \emph{SPN-type systems hypergraph} is an AMT-type systems hypergraph where $\AA \in \{\NN,\NN_+\}$ and $\VV \defeq \RR_{\ge 0}$. When considering a particular $\AA$ we refer to an SPN($\AA$)\emph{-type systems hypergraph}. 
\end{definition}

\begin{proposition}[(Informal version) Representation of stochastic Petri nets as SPN-type systems hypergraphs]\label{prop:SH-rep-SPN}
Let $\AA \in \{\NN,\NN_+\}$, let $\Y$ be the set of all stochastic Petri nets $Y \defeq (P,T,F,(\omega,\AA),(r,\RR_{\ge 0}))$ with type-1 isomorphism classes $\widetilde{\Y}$, and let $\Hh_{\text{SPN}}$ be the set of all SPN($\AA$)-type systems hypergraphs $H \defeq (V,E,\mu_E,Q,\chi)$. There exists a bijection $\Gamma \colon \widetilde{\Y} \to \Hh_{\text{SPN}}$ such that:
    \begin{enumerate}[label=(\arabic*),topsep=3pt,itemsep=1pt,leftmargin=16pt]
        \item $\Gamma\big([(P,T,F,\omega,r)]_1\big) = (P,E,\mu_E,Q,\chi)$ where $E = E(P,T,F)$, $\mu_E = \mu_E(P,T,F)$, $\widehat{\sigma} = \widehat{\sigma}(P,T,F)$, $\widehat{\mu} = \widehat{\mu}(P,T,F,\omega)$, $\widehat{\rho} = \widehat{\rho}(P,T,F,r)$, $\widehat{\mu}^{\,\Minus / \Plus} = \widehat{\mu}^{\,\Minus / \Plus}(P,T,F,\omega)$, and $\chi = \chi(P,T,F,\omega,r)$.
        \item $\Gamma^{-1}\big((V,E,\mu_E,Q,\chi)\big) = [(V,T,F,\omega,r)]_1$ where $T = T(E,\mu_E,\widehat{\sigma},\widehat{\mu},\widehat{\rho},\widehat{\mu}^{\,\Minus / \Plus},\chi)$, $F = F(V,E,\mu_E,\widehat{\sigma},\widehat{\mu},\widehat{\rho},\widehat{\mu}^{\,\Minus / \Plus},\chi)$, $\omega = \omega(V,E,\mu_E,\widehat{\sigma},\widehat{\mu},\widehat{\rho},\widehat{\mu}^{\,\Minus / \Plus},\chi)$, and $r = r(E,\mu_E,\widehat{\sigma},\widehat{\mu},\widehat{\rho},\widehat{\mu}^{\,\Minus / \Plus},\chi)$.
    \end{enumerate}
\end{proposition}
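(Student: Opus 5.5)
We construct $\Gamma$ explicitly on a representative, check that it is constant on type-1 classes, and then construct an inverse $\Gamma^{-1}$. The guiding idea is the hypostatic correspondence between multihypergraphs and their incidence (Levi) graphs, formalised in Corollary~\ref{corr:multihypergraphs-bipartite}. Each transition becomes a hyperedge instance, and arc data become attribute functions on that instance.

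\textbf{Forward map.} Fix a representative $(P,T,F,\omega,r)$.
\begin{enumerate*}[label=(\roman*)]
\item For $t \in T$, let ${}^\bullet t \defeq \{\, p \mid (p,t) \in F \,\}$ and $t^\bullet \defeq \{\, p \mid (t,p) \in F \,\}$, and put $e_t \defeq {}^\bullet t \cup t^\bullet$.
\item Set $E \defeq \{\, e_t \mid t \in T \,\}$ and $\mu_E(e) \defeq |\{\, t \mid e_t = e \,\}|$.
\item For each $t$, define a tuple $\big(\sigma_t,\mu_t,\rho_t,(\mu^\Minus_t,\mu^\Plus_t)\big)$. Here $\sigma_t(p)$ contains $-1$ iff $p \in {}^\bullet t$ and contains $+1$ iff $p \in t^\bullet$. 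Next, $\mu^\Minus_t(p) \defeq \omega(p,t)$ on ${}^\bullet t$ and $\mu^\Plus_t(p) \defeq \omega(t,p)$ on $t^\bullet$. Then $\mu_t(p) \defeq \mu^\Minus_t(p)+\mu^\Plus_t(p)$, with absent terms omitted. Finally, $\rho_t(\{e_t\}) \defeq r(t)$.
\item Let $\chi_e$ count, for each attribute hierarchy $S_e$, the transitions $t$ with $e_t = e$ whose tuple equals $S_e$. The four distribution functions are the corresponding marginals.
\end{enumerate*}
The multiplicity-matching conditions of Definition~\ref{def:nDistFn} hold because we sum over the $\mu_E(e)$ transitions. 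The association condition holds by construction, as do the rules of Example~\ref{ex:AssocFun}. For $\sigma(p)=\{-1,+1\}$ we get $(\mu^\Minus(p),\mu^\Plus(p)) \in C(\mu(p),2)$ when $\AA = \NN_+$. When $\AA = \NN$, zero weights are allowed, and I would note that the composition condition must be read with parts in $\AA$. Every ingredient is invariant under relabelling of the second part, so $\Gamma$ is well defined on $[\,\cdot\,]_1$ by conditions (1)–(4) of Definition~\ref{def:BGiso}.

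\textbf{Inverse map.} Given $(V,E,\mu_E,Q,\chi)$, create a set $T$ of fresh transition symbols. For each $e$ and each $S_e \in Q_e$ we create $\chi_e(S_e)$ symbols; by the association condition applied to $\widehat{\sigma}$ and the multiplicity identity, the total for $e$ is $\mu_E(e)$. To such a symbol $t$ with tuple $(\sigma,\mu,\rho,(\mu^\Minus,\mu^\Plus))$ we attach:
\begin{enumerate*}[label=(\roman*)]
\item arcs $(p,t)$ for $p \in \Dom(\mu^\Minus)$ with weight $\mu^\Minus(p)$;
\item arcs $(t,p)$ for $p \in \Dom(\mu^\Plus)$ with weight $\mu^\Plus(p)$;
\item the rate $r(t) \defeq \rho(\{e\})$.
\end{enumerate*}
The choice of fresh symbols is arbitrary only up to a second-part relabelling, so the type-1 class is well defined.

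\textbf{Bijectivity.} We check both compositions.
\begin{enumerate*}[label=(\roman*)]
\item $\Gamma^{-1}\circ\Gamma$ returns a net with the same places, the same arcs and weights per transition, and the same rates, up to renaming $T$; this is a type-1 isomorphism.
\item $\Gamma\circ\Gamma^{-1}$ recovers $E$ and $\mu_E$, since $e_t = \Dom(\mu^\Minus)\cup\Dom(\mu^\Plus) = e$ by the domain rules of Example~\ref{ex:AssocFun}. It also recovers $\chi$, since transitions are counted by their tuples, and the marginals are then recovered automatically.
\end{enumerate*}

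\textbf{Main obstacle.} Some conventions in the statement are not forced by the definitions, and we must fix them so that the correspondence is exactly bijective. First, the empty hyperedge corresponds to transitions with no arcs, which are allowed in $\Y$. The ground set must equal $P$, which the no-isolated-places assumption makes consistent only if each vertex lies in some hyperedge, so we impose this on $\Hh_{\text{SPN}}$ as well. The condition $\widehat{\mu} = \widehat{\mu}[\widehat{\sigma}]$-consistency must be encoded entirely by $\chi$ through Example~\ref{ex:AssocFun}. I would first try to show that the rules of Example~\ref{ex:AssocFun} make $\mu$ and $\sigma$ determined by $(\mu^\Minus,\mu^\Plus)$, which is the crux of injectivity. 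For $\AA = \NN$, zero weights must be distinguished from absent arcs via partiality; this is exactly the role of positive partial multisets and $\rightharpoonup$.
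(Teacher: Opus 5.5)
Your proposal is correct and follows essentially the paper's route. The paper obtains this result by specialising Theorem~\ref{theorem:Corr-SH-DBGW} to $t=1$ and $\VV=\RR_{\ge 0}$, and that theorem's proof builds your forward map (hyperedges $\N(w)$, with multiplicities and $\chi$ counted over twin classes of transitions, i.e.\ your tuple count) and your inverse (one transition $\big(e,(S_e,<_e,\rnk_e),i\big)$ per unit of $\chi_e$), then checks $K\circ K_r=\mathrm{id}$ and $K_r\circ K(B)\cong_1 B$ just as you do. Your two caveats are genuine points that the paper's one-line reduction passes over: for $\AA=\NN$ the rule $(\mu^\Minus(v),\mu^\Plus(v))\in C(\mu(v),2)$ must be read with zero parts allowed, and the no-isolated-places convention forces $\Hh_{\text{SPN}}$ to be restricted to systems hypergraphs whose hyperedges cover $V$.
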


Under the forward mapping, some data are sent to the multihypergraph while other data are sent to the attribute structure: places are fixed as vertices, and transitions encode the relational structure of hyperedges through the flow relation, which also contains additional relational data -- hyperedge multiplicity and vertex orientation, both systems hypergraph attributes. Vertex multiplicity, by contrast, explicitly draws on both structural and relational data at once, the latter given by arc weight functions.

The reverse direction draws on data from all parts of the systems hypergraph structure, evidence of the higher efficiency of the framework at partitioning information as shown in Figure~\ref{fig:overview}~(\emph{C}). Unlike Proposition~\ref{prop:CRN-SPN}, this mapping does not require asymmetry of second-part vertices: systems hypergraphs can represent multihyperedges directly, with each instance distinguished by a hyperedge label corresponding to a stochastic rate constant.

The map also exposes a limitation of the SPN framework: a transition on a given set of places must be represented as many times as it is distinctly labelled by arc and vertex labels. This redundancy in representation motivates the use of multiplicity and the attribute structure in systems hypergraphs for maximum efficiency.

\begin{definition}[CRN-type systems hypergraph]
    A \emph{CRN-type systems hypergraph} is an AMT-type systems hypergraph where: $\AA \defeq \NN_+$ and $\VV \defeq \RR_{\ge 0}$; $\chi$ satisfies $\sum_{\{\, (S_e,<_e,\rnk_e) \in Q_e \mid \text{$\sigma$, $\mu$, $(\mu^\Minus,\mu^\Plus) \in S_e$} \,\}} \chi_e((S_e,<_e,\rnk_e)) \in \{0,1\}$ for all $e \in E$, $\sigma \in \{e \rightarrow \I\}$, $\mu \in \{e \rightarrow \NN_+\}$, and $(\mu^\Minus,\mu^\Plus) \in \{ e \rightharpoonup \NN_+ \} \times \{ e \rightharpoonup \NN_+ \}$; and, if $e \in E$, $(S_e,<_e,\rnk_e) \in Q_e$, and $\chi_e((S_e,<_e,\rnk_e)) > 0$ then $(\mu^\Minus,\mu^\Plus) \in S_e$ satisfies $\mu^\Minus \ne \mu^\Plus$.
\end{definition}

\begin{proposition}[(Informal version) Representation of chemical reaction networks as CRN-type systems hypergraphs] \label{prop:SH-rep-CRN}
Let $\Z$ be the set of all connected chemical reaction networks with a kinetics $Z \defeq (\S,\C,\R,k)$, and let $\Hh_{\text{CRN}}$ be the set of all CRN-type systems hypergraphs $H \defeq (V,E,\mu_E,Q,\chi)$. There exists a bijection $\Theta \colon \Z \to \Hh_{\text{CRN}}$ such that:
    \begin{enumerate}[label=(\arabic*),topsep=3pt,itemsep=1pt,leftmargin=16pt]
        \item $\Theta\big((\S,\C,\R,k)\big) = (\S,E,\mu_E,Q,\chi)$ where $E = E(\R)$, $\mu_E = \mu_E(\R)$, $\widehat{\sigma} = \widehat{\sigma}(\R)$, $\widehat{\mu} = \widehat{\mu}(\R)$, $\widehat{\rho} = \widehat{\rho}(\R,k)$, $\widehat{\mu}^{\,\Minus / \Plus} = \widehat{\mu}^{\,\Minus / \Plus}(\R)$, and $\chi = \chi(\R,k)$.
        \item $\Theta^{-1}\big((V,E,\mu_E,Q,\chi)\big) = (V,\C,\R,k)$ where $\R = \R(V,E,\mu_E,\widehat{\sigma},\widehat{\mu},\widehat{\rho},\widehat{\mu}^{\,\Minus / \Plus},\chi)$, $\C = \C(V,E,\mu_E,\widehat{\sigma},\widehat{\mu},\widehat{\rho},\widehat{\mu}^{\,\Minus / \Plus},\chi)$, and $k = k(V,E,\mu_E,\widehat{\sigma},\widehat{\mu},\widehat{\rho},\widehat{\mu}^{\,\Minus / \Plus},\chi)$.
    \end{enumerate}
\end{proposition}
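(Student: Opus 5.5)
The plan is to obtain $\Theta$ by composing bijections that are already available, rather than building it from scratch. By Proposition~\ref{prop:CRN-SPN} there is a bijection $\kappa \colon \Z \to \widetilde{\U}$, where $\widetilde{\U}$ is the set of type-1 classes of SPNs with $\AA = \NN_+$ whose transitions are pairwise asymmetric and each carry an asymmetric arc. By Proposition~\ref{prop:SH-rep-SPN} with $\AA = \NN_+$ there is a bijection $\Gamma \colon \widetilde{\Y} \to \Hh_{\text{SPN}}$. Since $\widetilde{\U} \subseteq \widetilde{\Y}$ (the symmetry conditions are invariant under type-1 isomorphism), the restriction $\Gamma|_{\widetilde{\U}}$ is a bijection onto its image. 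It then suffices to show that
\[
\Gamma(\widetilde{\U}) = \Hh_{\text{CRN}},
\]
and to set $\Theta \defeq \Gamma \circ \kappa$. The claimed dependencies of $E$, $\mu_E$, $\widehat{\sigma}$, $\widehat{\mu}$, $\widehat{\rho}$, $\widehat{\mu}^{\,\Minus / \Plus}$ and $\chi$ on $\R$ and $k$ follow by substituting $P = \S$, $T = \R$, $F = F(\S,\R)$, $\omega = \omega(\S,\R)$ and $r = k$ into the formulas for $\Gamma$. Concretely, for a reaction $c \to c'$:
\begin{itemize}
    \item the hyperedge is $\supp c \cup \supp c'$;
    \item the orientation is $-1$ on $\supp c$ and $+1$ on $\supp c'$;
    \item the oriented multisets are $(\mu^\Minus,\mu^\Plus) = (c,c')$;
    \item the total multiplicity is $\mu = c + c'$ on the union;
    \item the label is $k(c \to c')$.
\end{itemize}

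The core step is a dictionary between the two symmetry conditions defining $\U$ and the two extra conditions defining CRN-type systems hypergraphs.

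\emph{Asymmetric arc condition.} A transition $z$ has an asymmetric arc if and only if its in-weight function (its reactant multiset $\mu^\Minus$) differs from its out-weight function ($\mu^\Plus$), since both are positive partial functions on places. This matches the condition $\mu^\Minus \ne \mu^\Plus$ whenever $\chi_e > 0$, and on the CRN side it matches irreflexivity of $\R$.

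\emph{Pairwise asymmetry of transitions.} Two transitions $z, z'$ are symmetric if and only if swapping them is an automorphism preserving arcs, weights and rate labels. This requires them to have identical oriented weighted neighbourhoods and equal rates. I will show that the CRN-type condition, that the sum of $\chi_e$ over attribute hierarchies containing a fixed $(\sigma,\mu,(\mu^\Minus,\mu^\Plus))$ lies in $\{0,1\}$, is what forbids two transitions with the same oriented multisets. On the CRN side this corresponds to $\R$ being a set rather than a multiset.

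I expect the main obstacle to lie in this second condition, and specifically in the rate label. Pairwise asymmetry as stated in Definition~\ref{def:symm} includes attributes, so two transitions with the same arcs but different rates are asymmetric. Such transitions would then give two hierarchies with the same $(\sigma,\mu,(\mu^\Minus,\mu^\Plus))$, which violates the $\{0,1\}$ condition. The first thing I would try is to check that in the formal version (Proposition~\ref{prop:CRN-SPN-supp}) symmetry of transitions is taken with respect to the arc structure and weights only, ignoring $r$. In that case the dictionary is exact, because distinct reactions in a CRN have distinct (reactant, product) pairs. If instead the rates are included, I would restate the correspondence with the appropriate definition of symmetry so that both sides agree.

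With the dictionary established, surjectivity onto $\Hh_{\text{CRN}}$ follows by applying $\Gamma^{-1}$ to an element of $\Hh_{\text{CRN}}$ and checking that the resulting SPN lies in $\U$. Injectivity is inherited from $\Gamma$ and $\kappa$. As a final check I will verify the formula for $\Theta^{-1}$ directly:
\begin{itemize}
    \item each pair $(\mu^\Minus,\mu^\Plus)$ appearing with $\chi_e = 1$ gives the reaction $\mu^\Minus \to \mu^\Plus$, with rate read off from $\rho$;
    \item $\C$ is the set of all reactants and products of these reactions, which makes the CRN connected automatically.
\end{itemize}
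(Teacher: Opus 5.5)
Your proposal is correct and follows essentially the same route as the paper: it sets $\Theta \defeq \Gamma|_{\widetilde{\U}} \circ \kappa$ and proves $\Gamma(\widetilde{\U}) = \Hh_{\text{CRN}}$ in both directions by matching $\T_{\text{in}}(z) \cap \T_{\text{out}}(z) = \{z\}$ with the $\{0,1\}$ condition on $\chi$, and $\dot{U}_{\text{in}}[z] \ne \dot{U}_{\text{out}}^{\ast}[z]$ with $\mu^{\Minus} \ne \mu^{\Plus}$. Your concern about rate constants resolves the way you hoped: in the formal version both conditions are imposed with respect to $\dot{U} \defeq (P,T,F,(\omega,\NN_+))$, so $r$ plays no role in the symmetry and your dictionary is exact.
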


As in the SPN case, species are fixed as vertices, with orientation and multiplicity data recovered from reactions alone. Vertex multiplicities and multihyperedge labels coincide with stoichiometric coefficients and stochastic rate constants, respectively. In the reverse direction, all parts of the systems hypergraph again inform each CRN substructure, illustrating the same efficiency in information partitioning.

This mapping raises an ambiguity analogous to the one arising in the CRN-to-SPN correspondence. When a reaction includes a species with zero stoichiometry, it is unclear whether the corresponding hyperedge should be incident on that species with multiplicity zero, or whether the incidence should be absent altogether. The first interpretation produces a complete hypergraph in which most incidences have multiplicity zero, reducing representational efficiency -- both multiplicity and oriented multiplicity data would be zero, encoding the same information twice. The two interpretations map to distinct systems hypergraphs and it is therefore important to be able to distinguish them.

The hypergraph perspective suggests a natural resolution. Here, relations are characterised primarily by their context of application, and it is therefore more meaningful to define hyperedges on subsets of the ground set than on the total space of objects. This clarifies the role of structure as delineating a local space of objects on which relations are defined, and requires only that local hyperedge supports be aligned to the global ground set to relate the two interpretations.

\begin{figure}[t]
    \centering
    \resizebox{\textwidth}{!}{%
\begin{tikzpicture}

\node [font=\LARGE] at (0,0) {\normalfont\sffamily\fontsize{11}{9}\selectfont(\emph{A})};
\node [font=\Large, align=center] at (1,-1.5) {Systems\\hypergraphs};
\node [font=\Large, align=center] at (5,-1.5) {Systems\\hypergraphs\\(SPN-type)};
\node [font=\Large, align=center] at (5,-5) {Systems\\hypergraphs\\(CRN-type)};
\node [font=\Large] at (9,-1.5) {SPNs};
\node [font=\Large, align=center] at (9,-3.25) {SPNs\\(symmetry)};
\node [font=\Large] at (9,-5) {CRNs};
\node [font=\LARGE, rotate around={-90:(0,0)}] at (9,-2.25) {$\hookrightarrow$};
\node [font=\LARGE] at (7.25,-1.5) {$\hookrightarrow$};
\node [font=\LARGE] at (7.25,-5) {$\hookrightarrow$};
\node [font=\LARGE, rotate around={-90:(0,0)}] at (9,-4.25) {$\hookrightarrow$};
\node [font=\LARGE, rotate around={-90:(0,0)}] at (5,-3.25) {$\hookrightarrow$};
\node [font=\LARGE] at (3,-1.5) {$\hookrightarrow$};

\node [font=\LARGE] at (11.75,0) {\normalfont\sffamily\fontsize{11}{9}\selectfont(\emph{B})};
\node [font=\Large] at (12.5,-1.25) {S};
\node [font=\Large] at (12.5,-2.5) {C};
\node [font=\Large] at (12.5,-3.75) {R};
\node [font=\Large] at (12.5,-5) {k};
\node [font=\Large] at (14.25,-1.25) {P};
\node [font=\Large] at (14.25,-2.25) {T};
\node [font=\Large] at (14.25,-3.25) {F};
\node [font=\Large] at (14.25,-4.25) {$\omega$};
\node [font=\Large] at (14.25,-5) {r};
\node [font=\Large] at (13.5,-0.5) {$\kappa\phantom{^-1}$};
\draw [->, >=stealth] (12.75,-1.25) -- (14,-1.25);
\draw [->, >=stealth] (12.75,-5) -- (14,-5);
\draw [->, >=stealth] (12.75,-1.25) -- (14,-3);
\draw [->, >=stealth] (12.75,-1.25) -- (14,-4.25);
\draw [->, >=stealth] (12.75,-3.75) -- (14,-4.25);
\draw [->, >=stealth] (12.75,-3.75) -- (14,-3.25);
\draw [->, >=stealth] (12.75,-2.5) -- (14,-2.25);
\node [font=\Large] at (15.75,-1.25) {P};
\node [font=\Large] at (15.75,-2.25) {T};
\node [font=\Large] at (15.75,-3.25) {F};
\node [font=\Large] at (15.75,-4.25) {$\omega$};
\node [font=\Large] at (15.75,-5) {r};
\node [font=\Large] at (17.5,-1.25) {S};
\node [font=\Large] at (17.5,-2.5) {C};
\node [font=\Large] at (17.5,-3.75) {R};
\node [font=\Large] at (17.5,-5) {k};
\node [font=\Large] at (16.75,-0.5) {$\kappa^{-1}$};
\draw [->, >=stealth] (16,-1.25) -- (17.25,-1.25);
\draw [->, >=stealth] (16,-5) -- (17.25,-5);
\draw [->, >=stealth] (16,-1.25) -- (17.25,-2.25);
\draw [->, >=stealth] (16,-1.25) -- (17.25,-3.5);
\draw [->, >=stealth] (16,-2.25) -- (17.25,-2.5);
\draw [->, >=stealth] (16,-2.25) -- (17.25,-3.5);
\draw [->, >=stealth] (16,-2.25) -- (17.25,-4.75);
\draw [->, >=stealth] (16,-3.25) -- (17.25,-2.5);
\draw [->, >=stealth] (16,-3.25) -- (17.25,-3.75);
\draw [->, >=stealth] (16,-4.25) -- (17.25,-2.5);
\draw [->, >=stealth] (16,-4.25) -- (17.25,-3.75);

\node [font=\LARGE] at (20,0) {\normalfont\sffamily\fontsize{11}{9}\selectfont(\emph{C})};
\draw [line width=1pt, rotate around={-31:(29.5,-0.75)}] (29.5,-0.75) ellipse (0.5cm and 0.25cm) node {\normalsize $e_1$};
\draw [line width=1pt, rotate around={13:(23,-2.5)}] (23,-2.5) ellipse (3cm and 1.75cm);
\draw [line width=1pt] (23.75,-2.5) ellipse (2.25cm and 1cm);
\draw [line width=1pt] (28.75,-2.5) ellipse (1.25cm and 1.25cm) node {\normalsize $e_3$};
\draw [line width=1pt, rotate around={19:(28.75,-1.75)}] (28.75,-1.75) ellipse (0.5cm and 0.25cm) node {\normalsize $e_2$};
\draw [line width=1pt] (22.25,-2.5) ellipse (2.25cm and 1.25cm);
\draw [line width=1pt] (23,-2.5) ellipse (1.5cm and 0.75cm) node {\normalsize $e_2$};
\draw [line width=1pt , rotate around={25:(22.5,-2.25)}] (22.5,-2.25) ellipse (0.5cm and 0.25cm) node {\normalsize $e_1$};
\draw [fill={rgb,255:red,235;green,235;blue,235}] (22,-2.5) circle (0.25cm) node {\normalsize $P$};
\draw [fill={rgb,255:red,235;green,235;blue,235}] (23.5,-2.25) ellipse (0.25cm and 0.25cm) node {\normalsize $T$};
\draw [fill={rgb,255:red,235;green,235;blue,235}] (20.75,-2.75) ellipse (0.25cm and 0.25cm) node {\normalsize $\omega$};
\draw [fill={rgb,255:red,235;green,235;blue,235}] (25.25,-2.75) ellipse (0.25cm and 0.25cm) node {\normalsize $r$};
\draw [fill={rgb,255:red,235;green,235;blue,235}] (23.75,-2.75) ellipse (0.25cm and 0.25cm) node {\normalsize $F$};
\draw [fill={rgb,255:red,235;green,235;blue,235}] (30,-1) circle (0.25cm) node {\normalsize $S$};
\draw [fill={rgb,255:red,235;green,235;blue,235}] (30.75,-2.5) ellipse (0.25cm and 0.25cm) node {\normalsize $C$};
\draw [fill={rgb,255:red,235;green,235;blue,235}] (28.5,-3) ellipse (0.25cm and 0.25cm) node {\normalsize $k$};
\draw [fill={rgb,255:red,235;green,235;blue,235}] (28.25,-2) circle (0.25cm) node {\normalsize $R$};
\node [font=\normalsize] at (21,-2) {$e_3$};
\node [font=\normalsize] at (25.25,-2) {$e_4$};
\node [font=\normalsize] at (23.5,-1) {$e_5$};
\node [font=\Large] at (23.25,-5) {SPN case};
\node [font=\Large] at (29.25,-5) {CRN case};

\end{tikzpicture}}%
    \vspace{-.6cm}
    \caption{Relational structures discussed in this work. (\emph{A}) Relations between the main formalisms discussed in this work given by inclusion. (\emph{B}) Incidence structure of Proposition~\ref{prop:CRN-SPN} mapping in the forward (left) and backwards (right) directions represented as directed bipartite graphs. (\emph{C)} Incidence structures of Propositions~\ref{prop:SH-rep-SPN} (a), left, and~\ref{prop:SH-rep-CRN} (b), right, in the forward direction, represented as unoriented hypergraphs. Hyperedges categorise information more precisely in the SPN case: $e_1$ represents objects; $e_2$, relational structure, hyperedge multiplicity, and orientation data; $e_3$, vertex multiplicity; $e_4$, hyperedge labels; and $e_5$, attribute dependency, the most complex category, incident on all parts.}
    \label{fig:overview}
\end{figure}

\paragraph{Systems hypergraphs: representing the structure of data.}

Systems hypergraphs are a natural abstraction and generalisation of both CRNs and SPNs, constructed iteratively by relating each part of the two formalisms to determine the most natural joint extension. Here hypergraphs serve as an intermediate perspective to ground both formalisms in a shared representational space, a process that clearly shows how CRNs and SPNs are two distinct frames on the same systems. The result is a framework that inherits the representational capabilities of both formalisms at once while also going further, most importantly via the hierarchical attribute system.

Our choice of abstraction level still reflects the specific need to relate CRNs and SPNs, and required striking a balance between generality and specificity. Our definition of systems hypergraphs could be generalised (and abstracted) further -- for instance, by representing all system structure and properties, including hyperedges themselves, as attributes. A fully general definition of attributed relational systems would also generate non-graph structures, beyond the requirements and scope of this work. System hypergraphs seem to represent an appropriate level of generalisation for our systems of interest: hyperedge multiplicity is built directly into the hypergraph structure (Remark~\ref{rem:Multihyperedges}), and the AMT-type specialisation (Example~\ref{ex:SysHyp}) makes explicit the specific representational needs relevant here. 

An important insight is that both CRNs and SPNs represent multidimensional structure, but do so less explicitly than hypergraphs. In CRNs, complexes already encode multidimensional relations on species, but reactions represent further binary relations on complexes and multidimensional structure ends up being distributed across two structural levels. SPNs represent the same data through binary relations between places and transitions that only encode higher-dimensional relations collectively. In both cases, the data represented directly by hypergraphs is distributed across multiple parts of the formalisms. So, graphs can (indirectly) represent multidimensional structure, however additional steps are required to build up structure from low-dimensional parts, also requiring the maintenance of logical consistency. CRNs and SPNs turn out to be two specialised frameworks that can represent multidimensional structure, along different intermediate structures -- complexes and the bipartite constraint, respectively -- but other ways are possible.

The relevance of multiplicity data at the relational level is another key result here. This only becomes obvious in the SPN case, where multihypergraphs are explicitly needed to represent bipartite graphs: a multihyperedge represents multiple transitions on the same places. The result is a more efficient representation of multidimensional data, as hyperedge multiplicity represents structural repetition in a more integrated way, and further identifies the need to distinguish instances within a multiplicity, addressed by the attribute system.

More abstractly, the attribute system clarifies the role of data in disambiguating structural symmetry. In contrast to Section~\ref{sec:spns-crns}, where graph permutation invariance led to label exchangeability only accounted for by isomorphism classes, attributes distinguish otherwise equivalent structures by augmenting them with data at multiple levels of resolution (Remark~\ref{rem:RelData}). Here multiplicity serves as the basic schema allowing distinguishability, representing distinct instances of the same structure explicitly. How precisely instances can be distinguished then depends on the complexity of the attribute dependency structure.

\section{Conclusion}

In this paper we contribute the following results:
\begin{enumerate*}[label={(\arabic*)}]
    \item \emph{Systems hypergraphs}, a general and extendable mathematical formalism for representing systems based on relational structure and a dependency structure of associated properties.
    \item A formal and constructive proof that stochastic Petri nets are strictly more general than chemical reaction networks, providing a precise correspondence between the two formalisms.
    \item Formal and constructive maps between chemical reaction networks (resp. stochastic Petri nets) and their corresponding representations as systems hypergraphs.
    \item A formal and constructive proof of the mathematical and logical equivalence between multihypergraphs and bipartite graphs, a result that is more subtle than usually acknowledged.
\end{enumerate*}

Systems hypergraphs represent the structural organisation of systems in two complementary ways: multidimensional relations on a ground set of objects, corresponding to the fundamental structure of the system; and a dependency structure of attributes associated with the multidimensional relations. An independent attribute is defined without reference to other attributes, whereas a dependent attribute depends on the existence and specification of other attributes for context. Systems hypergraphs integrate the two structures, providing an efficient and precise representation of the structure of data. In this sense, systems hypergraphs can be interpreted as both relational databases and database management systems, where a data structure corresponds to a systems hypergraph, and a map between systems hypergraphs provides a data representation.

Since systems hypergraphs provide a general description of abstract relational structures they have connections with a broad range of existing network configurations including multilayer networks~\cite{Kivela2014Multilayer}. Further, the attribute structure of systems hypergraphs enables a precise representation of more complex relational properties including intrinsic/extrinsic relations and relationships among relations (via hypostatic abstraction), essential for understanding relational systems and developing mechanistic models.

In this work we have implemented abstraction while emphasising the need to maintain logical consistency, particularly with regard to transforming a relation to an object which is formalised through the operation of hypostatic abstraction. While the symbolic transformation of a relation to an object is straightforward mathematically, any application of this process requires the preservation of information to ensure a sound logical connection within the given context. Without such logical consideration, there is no certainty that any analysis of the abstraction can be interpreted for the original system without logical error.

Our realisation of both stochastic Petri nets and chemical reaction networks as systems hypergraphs emphasises a fundamental trade-off between relations and objects when considering representational accuracy. Higher-dimensional relations can provide a more realistic representation of phenomena, however they may present technical and computational challenges. Transforming a higher-dimensional relation to a (hypostatic) object, however, results in an indirect and more abstract representation with a large increase in the number of relations. It is clear that both approaches are required: higher-dimensional relations as the accurate representation, and (hypostatic) objects to provide insight into the higher-dimensional representation.

Accuracy of relational structure, including dimensionality, is nevertheless fundamental for understanding a system. Knowledge resides in identifying patterns: in recognising shared aspects in what is otherwise different, and then forming categories of data and connections between these categories. It is relations that make this possible by describing shared properties of distinct objects. Higher-dimensional relations integrate information which makes patterns more easily identifiable, whereas lower-dimensional relations isolate incidences which can minimise their relevance and result in dissociated representations.

These considerations are relevant for our comparison of stochastic Petri nets and chemical reaction networks. Relating similar information is complicated by being distributed across different substructures in the two formalisms. More importantly, the dissociation of information obfuscated the nonequivalence of the two formalisms, which only became clear under the abstraction of systems hypergraphs. Accessibility of information is therefore key in building understanding, with explanatory power coming from integration.

This has practical consequences for systems biology. Any modelling effort must reconcile the demand for representational accuracy with mathematical or computational tractability, and requires reducing complex systems to more tractable ones. However, there is no guarantee that higher-dimensional properties are preserved by simpler, lower-dimensional representations. For instance, systems of differential equations -- the traditional representation of choice to model dynamical systems, including in biological contexts~\cite{Mesarovic1968Systems} -- often rely on simplifying assumptions of temporal or spatial symmetry rooted in physics that do not necessarily apply to biological systems, as physical complexity differs from evolved biological complexity~\cite{FoxKeller2005Century}.

The challenge of biological systems arises from their intricate and dynamic relational structure, and new mathematical perspectives are needed to better understand these systems~\cite{Vittadello2025Mathematical}. We propose systems hypergraphs as one such perspective. The formalism is sufficiently abstract and general to represent relational systems without enforcing a predefined level of detail, yet is readily applicable in concrete applications: it encompasses both chemical reaction networks and stochastic Petri nets, as well as their downstream representations including differential equations~\cite{Diaz2022HyperGraphsjl}. By integrating distinct modelling perspectives within a shared representational space, systems hypergraphs offer a vantage point from which to survey the landscape of systems biology formalisms, and identify where new abstractions are needed.

\section*{Acknowledgements}

We thank the members of the Theoretical Systems Biology group for stimulating discussions and for their continued support.

\section*{Author contributions}

L.D. contributed the main ideas and results, and drafted and completed the manuscript. S.T.V. developed the main results, established the mathematical framework, and drafted the manuscript. M.P.H.S. edited and reviewed the manuscript.

\newpage
\begin{appendices}

\section{Notation}
Here we discuss our notation additional to that in the main document.

\medskip
\noindent
We denote by $\NNex \defeq \NN \cup \{\infty\}$ the set of extended positive integers including zero. For $n \in \NNex$ we denote by $[n] $ the first $n$ positive integers, noting $[0] = \emptyset$ and $[+\infty] = \NN_+$.

For two sets $A$ and $B \subseteq \RR$ the set of all functions $f$ with domain $\Dom(f) \defeq A$, codomain $\Codom(f) \defeq B$, and finite support $\left| \supp(f) \right| < +\infty$ is denoted $\{A \circlearrow B\}$. We denote by $\indfun_C \colon A \to \{0,1\}$ the indicator function of $C \subseteq A$, and for notational simplicity $\indfun_a \defeq \indfun_{\{a\}}$ for $a \in A$.

Let $A$ and $B$ be two sets. For a pair of partial functions $(f,g) \in \{A \rightharpoonup B\} \times \{A \rightharpoonup B\}$ we define their \emph{sum} as the partial function $f \boxplus g \in \{A \rightharpoonup B\}$ such that $\Dom(f \boxplus g) \defeq \Dom(f) \cup \Dom(g)$, and for $a \in \Dom(f \boxplus g)$:
    \begin{equation*}
        f \boxplus g(a) = 
            \begin{cases}
                f(a) & \text{if $a \in \Dom(f) \setminus \Dom(g)$},\\
                g(a) & \text{if $a \in \Dom(g) \setminus \Dom(f)$},\\
                f(a) + g(a) & \text{if $a \in \Dom(f) \cap \Dom(g)$}.
        \end{cases}
    \end{equation*}

If $f = (f_i)_{i=1}^n$ and $g = (g_i)_{i=1}^n$ are two finite sequences of functions, where $n \in \NN_+$ and the range of $g_i$ is a subset of the domain of $f_i$ for $i \in [n]$, then the composition of $f$ and $g$ is denoted by $f \circledcirc g \defeq (f_i \circ g_i)_{i=1}^n$. If $h$ is a function with the range of $h$ a subset of the domain of $f_i$ for $i \in [n]$ then the composition of $f$ and $h$ is denoted by $f \circledcirc h \defeq (f_i \circ h)_{i=1}^n$, where we may consider $h$ as the constant sequence $h = (h)_{i=1}^n$.

\section{Relationships between the representational forms of complexes in chemical reaction networks}\label{app:complexes}

Chemical complexes are often written in three forms: ordered formal linear combinations, unordered formal linear combinations, or multisets of chemical species. If ordering of the species in the complexes is not required then the complexes can be described with unordered formal linear combinations or multisets, which may simplify the description. When complexes are written as formal linear combinations it is often unclear whether they should be understood as ordered or unordered despite each representation having a different implications for the resulting formalism. In this work we use multisets for complexes, and for completeness we detail here the relationships between the three representational forms of complexes.

The relationship between chemical reaction networks where the complexes are either ordered formal linear combinations or unordered formal linear combinations is described in the following definition and remark.

\begin{definition}[Ordered/unordered formal linear combinations]
    Let $A$ be a countable set. Then
    \begin{equation*}
        \F_A \defeq \big\{\, \textstyle\sum_{i=1}^m \alpha_i a_i \mid \text{$m \in \NNex$, $\alpha \in \{[m] \circlearrow \NN\}$, and $a_i \in A$ for $i \in [m]$} \,\big\}
    \end{equation*}
    is the set of all \emph{ordered formal linear combinations} of elements of $A$ with nonnegative integer coefficients that are finitely nonzero. If $m = 0$ then $\sum_{\emptyset} \defeq \sum_{i=1}^m \alpha_i a_i \in \F_A$ is the \emph{empty formal linear combination}. Further, $\L_A \defeq \{A \circlearrow \NN\}$ is the set of all \emph{unordered formal linear combinations} of elements of $A$.
\end{definition}

\begin{remark}\leavevmode
    \begin{itemize}[topsep=3pt,itemsep=1pt,leftmargin=12pt]
        \item The map $\Theta_A \colon \F_A \to \L_A$ such that $\Theta_A\big(\sum_{i=1}^m \alpha_i a_i\big) \defeq \sum_{i=1}^m \alpha_i \indfun_{a_i}$ is a well-defined surjection, noting that $\Dom(\indfun_{a_i}) \defeq A$ for each $a_i \in A$, and where we identify $\Theta_A\big(\sum_{i=1}^0 \alpha_i a_i\big) = \sum_{i=1}^0 \alpha_i \indfun_{a_i}$ with the zero function in $\L_A$. Let $R$ be the equivalence relation on $\F_A$ given by $f \mathbin{R} g$ if and only if $\Theta_A(f) = \Theta_A(g)$, for $f$, $g \in \F_A$. Then $\Theta_A$ induces a bijection between the quotient set $\Q_A \defeq \F_A/R$ and $\L_A$ given by $[f] \mapsto \Theta_A(f)$.
        
        Each equivalence class in the quotient set $\Q_A$ consists of all possible rearrangements of the terms of a given ordered formal linear combination, both termwise and also splitting/combining terms corresponding to the same elements of $A$. We may therefore consider a formal linear combination in an equivalence class in the quotient set $\Q_A$ without a specified order. In particular when $\big[\sum_{a \in A} \nu_a a\big]_R = \big[\sum_{i=1}^m \alpha_i a_i\big]_R$ we may take $\sum_{a \in A} \nu_a a$ as the representative.
        \item Let $U$ be the equivalence relation on $\F_A \times \F_A$ such that $(f,g) \mathbin{U} (f^{\prime},g^{\prime})$ if and only if $f \mathbin{R} f^{\prime}$ and $g \mathbin{R} g^{\prime}$, for all $(f,g)$, $(f^{\prime},g^{\prime}) \in \F_A \times \F_A$. Then $(\F_A \times \F_A)/U = \Q_A \times \Q_A$, and in particular $[(f,g)]_U = [f]_R \times [g]_R$.
        \item In terms of chemical reaction networks, where $A$ is the species set:
            \begin{itemize}[topsep=3pt,itemsep=1pt,leftmargin=12pt]
                \item When complexes are unordered finite linear combinations then $\L_A$ is the complex set, and $\L_A \times \L_A$ is the set of all reactions.
                \item When complexes are ordered finite linear combinations then $\F_A$ is the complex set, and $\F_A \times \F_A$ is the set of all reactions. Taking the quotient $\Q_A \defeq \F_A/R$ gives $\L_A$, up to bijection, and taking the associated quotient $(\F_A \times \F_A)/U = \Q_A \times \Q_A$ gives $\L_A \times \L_A$, up to bijection.
            \end{itemize}
    \end{itemize}
\end{remark}

The relationship between chemical reaction networks where the complexes are positive partial multisets and either ordered or unordered formal linear combination is given by the following proposition. The result stated in the proposition is standard, and the existence of the correspondence follows from the bijections between $\Q_A$ and $\L_A$ and between $\M_A$ and $\L_A$, however we specify the correspondence in detail and provide a direct proof for completeness. The proposition shows that an unordered formal linear combination, which is a function so already has a reduced form, may be identified with the corresponding set of ordered formal linear combinations that are rearrangements of each other and may not be in reduced form.

\begin{proposition}
    Suppose $A$ is a countable set, and $\M_A \defeq \{\,(A,\mu) \mid \text{$\mu \in \{A \rightharpoonup \NN_+\}$ and $\left| \Dom(\mu) \right| < +\infty$} \,\}$ is the collection of all positive partial multisets with underlying set $A$ and multiplicity functions with finite domains. Then $\psi \colon \M_A \to \Q_A$ where $(A,\mu) \mapsto \big[\sum_{a \in \Dom(\mu)} \mu(a)a\big]$ is a bijection with inverse given by $[f] \mapsto \big(A,\Theta_A(f)|_{\supp(\Theta_A(f))}\big)$.
\end{proposition}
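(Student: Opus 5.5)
The plan is to prove directly that $\psi$ is well defined, that the proposed inverse $\phi \colon [f] \mapsto \big(A,\Theta_A(f)|_{\supp(\Theta_A(f))}\big)$ is well defined, and that the two compositions $\phi \circ \psi$ and $\psi \circ \phi$ are identities. Throughout, the working tool is the surjection $\Theta_A$ and the equivalence relation $R$ from the preceding remark: $f \mathbin{R} g$ if and only if $\Theta_A(f) = \Theta_A(g)$.

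\textbf{Well-definedness.} For $\psi$, take $(A,\mu) \in \M_A$. Since $\Dom(\mu)$ is finite, I enumerate it as $a_1,\dots,a_m$ with $m = |\Dom(\mu)|$ and form the ordered combination $\sum_{i=1}^m \mu(a_i)a_i \in \F_A$. When $\Dom(\mu) = \emptyset$ this is the empty combination. A different enumeration changes only the order of terms, so $\Theta_A$ sends both enumerations to $\sum_{a\in\Dom(\mu)}\mu(a)\indfun_a$. Hence the class $[\,\cdot\,]_R$ does not depend on the enumeration.

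For $\phi$, the value depends only on $\Theta_A(f)$, which is constant on $R$-classes by the definition of $R$. Moreover $\Theta_A(f) \in \L_A = \{A \circlearrow \NN\}$ has finite support. Its restriction to that support therefore takes values in $\NN_+$ and has finite domain, so $\phi([f]) \in \M_A$.

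\textbf{$\phi\circ\psi = \mathrm{id}$.} I compute
\begin{equation*}
\Theta_A\Big(\textstyle\sum_{a\in\Dom(\mu)}\mu(a)a\Big)=\sum_{a\in\Dom(\mu)}\mu(a)\indfun_a .
\end{equation*}
This function equals $\mu(a)$ at each $a\in\Dom(\mu)$ and $0$ elsewhere. Because $\mu$ takes values in $\NN_+$, its support is exactly $\Dom(\mu)$, and its restriction to that support is $\mu$. Hence $\phi(\psi((A,\mu))) = (A,\mu)$.

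\textbf{$\psi\circ\phi = \mathrm{id}$.} Let $\nu \defeq \Theta_A(f)|_{\supp(\Theta_A(f))}$. Then
\begin{equation*}
\Theta_A\Big(\textstyle\sum_{a\in\Dom(\nu)}\nu(a)a\Big)=\sum_{a\in\supp(\Theta_A(f))}\Theta_A(f)(a)\,\indfun_a=\Theta_A(f),
\end{equation*}
where the last equality holds because $\Theta_A(f)$ vanishes off its support. Therefore $\sum_{a\in\Dom(\nu)}\nu(a)a \mathbin{R} f$, so $\psi(\phi([f])) = [f]$.

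\textbf{Main obstacle.} The delicate points are bookkeeping about zeros and the empty case, not real mathematics. First, an ordered combination in $\F_A$ may contain terms with coefficient $0$ or repeated elements; both are absorbed by $\Theta_A$, which is exactly why $\phi$ restricts to the support. Second, the empty multiset $\mu$ with $\Dom(\mu) = \emptyset$ must correspond to the class of the empty combination, whose image is the zero function. Third, $m \in \NNex$ allows infinite formal sums with only finitely many nonzero coefficients. In that case I would note that $\Theta_A$ is still a finite sum of indicator functions after discarding the zero terms, so all the computations above are unchanged.
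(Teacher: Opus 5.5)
Your proof is correct and is essentially the paper's argument. The paper writes $\psi = \Lambda_A^{-1} \circ \eta$, where $\Lambda_A \colon \Q_A \to \L_A$ is the bijection $[f] \mapsto \Theta_A(f)$ and $\eta \colon (A,\mu) \mapsto \sum_{a \in \Dom(\mu)} \mu(a)\indfun_a$. Your two composition checks unpack this factorisation: they verify, which the paper only asserts, that $\eta$ is a bijection with inverse $g \mapsto (A, g|_{\supp(g)})$, and this gives the stated inverse formula. Your handling of the enumeration of $\Dom(\mu)$, zero coefficients, and the case $m = \infty$ is sound and fills in details the paper leaves implicit.
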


\begin{proof}
    Denote by $\Lambda_A$ the bijection from $\Q_A$ onto $\L_A$. The map $\eta \colon \M_A \to \L_A$ such that $(A,\mu) \mapsto \sum_{a \in \Dom(\mu)} \mu(a) \indfun_a$ is a bijection. Then $\psi \defeq \Lambda_A^{-1} \circ \eta$ is a bijection. In particular, if $A = \emptyset$ then $\M_A$ contains only $(A,\mu)$ where $\mu$ is the empty function, and $\Q_A$ contains only $\big[\sum_{\emptyset}\big] = \big\{\sum_{\emptyset}\big\}$.
\end{proof}

\section{Correspondence between chemical reaction networks and stochastic Petri nets}\label{app:CRN-SPN}

We assume that CRNs are connected, as isolated complexes cannot be represented in SPNs: an isolated complex in a CRN provides the underlying subset of the species and the multiplicity of these species, however the orientation given by the position in a reaction (as reactant or product) is absent and we cannot assign corresponding arcs in SPNs.

\begin{definition}[Neighbourhood map, twins]\label{def:neighbourhood-map}
    Suppose $G = (V,E)$ is an undirected or directed graph. The \emph{open neighbourhood map} $\N \colon V \to \powerset(V)$ is such that $\N(v)$ is the subset of vertices adjacent to, and excluding, $v \in V$. The \emph{closed neighbourhood map} $\bar{\N} \colon V \to \powerset(V)$ is such that $\bar{\N}(v) = \N(v) \cup \{v\}$ for $v \in V$. Two vertices $v$, $w \in V$ are \emph{twins} if $\N(v) = \N(w)$.

    Moreover, for a directed graph, the \emph{open incoming neighbourhood map} $\N_{\text{in}} \colon V \to \powerset(V)$ is such that, for $v \in V$, $\N_{\text{in}}(v)$ is the subset of vertices, excluding $v$, that are incoming adjacent to $v$; and, the \emph{open outgoing neighbourhood map} $\N_{\text{out}} \colon V \to \powerset(V)$ is such that, for $v \in V$, $\N_{\text{out}}(v)$ is the subset of vertices, excluding $v$, that are outgoing adjacent to $v$. The \emph{closed incoming neighbourhood map} $\bar{\N}_{\text{in}} \colon V \to \powerset(V)$ is such that $\bar{\N}_{\text{in}}(v) = \N_{\text{in}}(v) \cup \{v\}$ for $v \in V$; and, the \emph{closed outgoing neighbourhood map} $\bar{\N}_{\text{out}} \colon V \to \powerset(V)$ is such that $\bar{\N}_{\text{out}}(v) = \N_{\text{out}}(v) \cup \{v\}$ for $v \in V$.
\end{definition}

\begin{remark}
    If $v$ and $w$ are twin vertices and $v \ne w$ then $w \notin \N(v)$, equivalently $v \notin \N(w)$. Our definition of twins is sometimes called \emph{false twins} in the literature: in this case, two vertices $v$, $w \in V$ are \emph{true twins} if $\bar{\N}(v) = \bar{\N}(w)$, and if $v \ne w$ then $w \in \N(v)$, equivalently $v \in \N(w)$. We have no need for this delineation here.
\end{remark}

\begin{notation}[Set of all incident edges]
    If $G = (V,E)$ is an undirected or a directed graph then for $v \in V$ the set of all edges or arcs in $E$ incident on $v$ is denoted $E(v)$.
\end{notation}

The following definition extends the notion of a type-1 isomorphism for directed bipartite graphs, as employed in the main document, to include the case of a type-0 isomorphism.

\begin{definition}[Type-t isomorphism and isomorphism classes of directed bipartite graphs]
    Let $\D$ be the set of all labelled directed bipartite graphs with arc weight functions and second-part vertex weight functions, and let $D \defeq (V,W,F,\omega,\delta)$ and $D^{\prime} \defeq (V^{\prime},W^{\prime},F^{\prime},\omega^{\prime},\delta^{\prime})$ be in $\D$. Let $t \in \{0,1\}$. A type-$t$ directed bipartite graph isomorphism $\phi \colon D \to D^{\prime}$ is a directed graph isomorphism such that: \begin{enumerate*}[label=(\arabic*)] \item $\phi(V) = V^{\prime}$ and $\phi(W) = W^{\prime}$; \item If $t = 0$ then $\phi$ preserves no vertex labels, and if $t = 1$ then $\phi$ preserves vertex labels of only the first part; \item $\Codom(\omega) = \Codom(\omega^{\prime})$ and $\omega^{\prime}\big(\phi(e)\big) = \omega(e)$ for all $e \in F$; \item $\Codom(\delta) = \Codom(\delta^{\prime})$, $\phi\big(\Dom(\delta)\big) = \Dom(\delta^{\prime})$, and $\delta^{\prime}\big(\phi(v)\big) = \delta(v)$ for all $v \in \Dom(\delta)$. \end{enumerate*} If $D$ and $D^{\prime}$ are type-$t$ isomorphic then we write $D \cong_t D^{\prime}$. Note that type-$1$ isomorphism implies type-$0$ isomorphism.

    Let $R_t$ be the equivalence relation on $\D$ such that $D \mathrel{R_t} D^{\prime}$ if and only if there exists a type-$t$ directed bipartite graph isomorphism $\phi \colon D \to D^{\prime}$, for all $D$, $D^{\prime} \in \D$. Then the quotient set $\widetilde{\D} \defeq \D/R_t = \{\, [D]_t \mid D \in \D \,\}$ consists of the type-$t$ isomorphism classes of $\D$. Each type-0 equivalence class is an \emph{unlabelled} directed bipartite graph that is arc weighted and second-part vertex weighted, and each type-1 equivalence class is a directed bipartite graph that is arc weighted, second-part vertex weighted, with labelled vertices in the first part and unlabelled vertices in the second part.
\end{definition}

\begin{definition}[Sub-bipartite graph]
    Let $G \defeq (V,W,E,\omega,\delta)$ be a vertex-labelled undirected (resp. directed) bipartite graph with edge (resp. arc) weight function $\omega$ and vertex weight function $\delta$. If $C \subseteq V \cup W$ then the \emph{induced sub-bipartite graph} $G[C]$ of $G$ with respect to $C$ is $G[C] \defeq (V \cap C,W \cap E(C),\omega|_{E(C)},\delta|_{(V \cap C) \cup (W \cap C)})$ where $E(C)$ is the set of all edges (resp. arcs) in $E$ with vertices in $(V \cap C) \cup (W \cap C)$.

    If $G$ is directed and $w \in W$ then the \emph{$w$-induced incoming sub-bipartite graph} is $G_{\text{in}}[w] \defeq (\N_{\text{in}}(w),\{w\},E_\text{in}(w),\omega|_{E_\text{in}(w)},\allowbreak\delta|_{\bar{\N}_{\text{in}}(w)})$ where $E_\text{in}(w) \defeq \{\, (v,w) \in E \mid v \in \N_{\text{in}}(w)\,\}$; and the \emph{$w$-induced outgoing sub-bipartite graph} is $G_{\text{out}}[w] \defeq (\N_{\text{out}}(w),\{w\},E_\text{out}(w),\omega|_{E_\text{out}(w)},\delta|_{\bar{\N}_{\text{out}}(w)})$ where $E_\text{out}(w) \defeq \{\, (w,v) \in E \mid v \in \N_{\text{out}}(w)\,\}$.

    We employ the same notation for sub-bipartite graphs without vertex weight functions.
\end{definition}

\begin{definition}[Twin-vertices maps]\label{def:twin-vertices-maps}
    Suppose $G = (V,E)$ is an undirected or directed graph. The \emph{twin-vertices map} $\T \colon V \to \powerset(V)$ is given by $\T(v) \defeq \{\, w \in V \mid \N(w) = \N(v) \,\}$ for $v \in V$. Note that if $G$ is undirected then $w \in \T(v)$ if and only if $v$ and $w$ are symmetric vertices. If $G$ is directed then the \emph{directed twin-vertices map} $\T_{\text{dir}} \colon V \to \powerset(V)$ is such that $\T_{\text{dir}}(v) \defeq \{\, w \in V \mid \text{$\N_{\text{in}}(w) = \N_{\text{in}}(v)$ and $\N_{\text{out}}(w) = \N_{\text{out}}(v)$} \,\}$ for $v \in V$. Note that: $w \in \T_{\text{dir}}(v)$ if and only if $v$ and $w$ are symmetric vertices; and $\T_{\text{dir}}(v) \subseteq \T(v)$.

    Suppose $D \defeq (V,W,F,\omega)$ is a directed bipartite graph with arc weight function $\omega$. The \emph{arc weight twin-vertices map} $\T_{{\text{awgt}}} \colon W \to \powerset(W)$ is given by $\T_{{\text{awgt}}}(v) \defeq \{\, w \in \T(v) \mid \text{$\sum_{e \in F(w) \cap F(x)} \omega(e) = \sum_{e \in F(v) \cap F(x)} \omega(e)$ for all $x \in \N(v)$} \,\}$ for $v \in W$. The \emph{incoming arc weight twin-vertices map} $\T_{{\text{in}}} \colon W \to \powerset(W)$ is given by $\T_{{\text{in}}}(v) \defeq \{\, w \in \T(v) \mid D_{\text{in}}[w] \cong_1 D_{\text{in}}[v] \,\}$ for $v \in W$, and the \emph{outgoing arc weight twin-vertices map} $\T_{{\text{out}}} \colon W \to \powerset(W)$ is given by $\T_{{\text{out}}}(v) \defeq \{\, w \in \T(v) \mid D_{\text{out}}[w] \cong_1 D_{\text{out}}[v] \,\}$ for $v \in W$. Note that the type-1 isomorphisms in the definitions of $\T_{{\text{in}}}$ and $\T_{{\text{out}}}$ are with respect to the graph $D$ and exclude any other associated information such as vertex weight functions.

    Suppose $D \defeq (V,W,F,\delta)$ is a directed bipartite graph with vertex weight function $\delta$. The \emph{vertex weight twin-vertices map} $\T_{{\text{vwgt}}} \colon \Dom(\delta) \to \powerset\big(\Dom(\delta)\big)$ is given by $\T_{{\text{vwgt}}}(v) \defeq \{\, w \in \T(v) \cap \Dom(\delta) \mid \delta(v) = \delta(w) \,\}$ for $v \in \Dom(\delta)$.
\end{definition}

\begin{remark}
    The sets $\{\, \T(v) \mid v \in V \,\}$ and $\{\, \T_{\text{dir}}(v) \mid v \in V \,\}$ each partition $V$. Further, $\{\, \T_{\text{dir}}(w) \mid w \in \T(v) \,\}$ partitions $\T(v)$ for $v \in V$, the sets $\{\, \T_{{\text{awgt}}}(v) \mid v \in W \,\}$, $\{\, \T_{{\text{vwgt}}}(v) \mid v \in W \,\}$, $\{\, \T_{{\text{in}}}(v) \mid v \in W \,\}$, and $\{\, \T_{{\text{out}}}(v) \mid v \in W \,\}$ each partition $W$.
\end{remark}

\begin{definition}[Converse of a directed graph]
If $D = (V,F)$ is a directed graph then the \emph{converse} of $D$ is the directed graph $D^{\ast} \defeq (V,F^{\ast})$ with the arcs from $D$ reversed, so $(u,v) \in F^{\ast}$ if and only if $(v,u) \in F$. We may regard the converse graph as the result of applying a bijective transformation from $F$ to $F^{\ast}$, so that any functions on $F$ are preserved: for example, if $\omega \colon F \to \AA$ is an arc weight function for $D$ then the corresponding arc weight function $\omega^{\ast} \colon F^{\ast} \to \AA$ for $D^{\ast}$ satisfies $\omega^{\ast}(u,v) \defeq \omega(v,u)$.
\end{definition}

\begin{proposition}[Correspondence between chemical reaction networks and stochastic Petri nets] \label{prop:CRN-SPN-supp}
    Let $\Z$ be the set of all connected chemical reaction networks with a kinetics $Z \defeq (\S,\C,\R,k)$. Let $\U$ be the set of all stochastic Petri nets $U \defeq (P,T,F,(\omega,\NN_+),(r,\RR_{\ge 0}))$ such that $\T_{\text{in}}(z) \cap \T_{\text{out}}(z) = \{z\}$ and $\dot{U}_{\text{in}}[z] \ne \dot{U}_{\text{out}}^{\ast}[z]$ for all $z \in T$ with respect to $\dot{U} \defeq (P,T,F,(\omega,\NN_+))$, and let $\widetilde{\U}$ be the set of all type-1 isomorphism classes of $\U$. Then there exists a bijection $\kappa \colon \Z \to \widetilde{\U}$ such that:
        \begin{enumerate}[label=(\arabic*)]
            \item $\kappa\big((\S,\C,\R,k)\big) = \big[(\S,\R,F,\omega,k)\big]_1$ where:
                \begin{enumerate}[label=(\arabic{enumi}.\arabic*),topsep=0pt]
                    \item $F \defeq \big\{\, \big(s,(f,g)\big) \in \S \times \R \mid s \in \Dom(f) \,\big\} \cup \big\{\, \big((f,g),s\big) \in \R \times \S \mid s \in \Dom(g) \,\big\}$.
                    \item $\omega \colon F \to \NN_+$ is given by $\omega\big(s,(f,g)\big) \defeq f(s)$ when $\big(s,(f,g)\big) \in F$ and $\omega\big((f,g),s\big) \defeq g(s)$ when $\big((f,g),s\big) \in F$.
                \end{enumerate}
            \item $\kappa^{-1}\big(\big[(P,T,F,\omega,r)\big]_1\big) = (P,\C,\R,k)$ where: \label{prop:kappa-inv}
                \begin{enumerate}[label=(\arabic{enumi}.\arabic*),topsep=0pt]
                    \item $\R \defeq \{\, (f_x,g_x) \mid x \in T \,\}$ such that, for each $x \in T$, we have:
                    \begin{enumerate}[label=(\arabic{enumi}.\arabic{enumii}.\arabic*)]
                        \item $f_x$, $g_x \in \{P \rightharpoonup \NN_+\}$.
                        \item $f_x(y) \defeq \omega(y,x)$ for $y \in \Dom(f_x) \defeq \N_{\text{in}}(x)$.
                        \item $g_x(y) \defeq \omega(x,y)$ for $y \in \Dom(g_x) \defeq \N_{\text{out}}(x)$.
                    \end{enumerate}
                    \item $\C \defeq \bigcup_{(f,g) \in \R} \{f,g\}$.
                    \item $k \colon \R \to \RR_{\ge 0}$ is given by $k(f_x,g_x) \defeq r(x)$ for $x \in T$.
                \end{enumerate}
        \end{enumerate}
\end{proposition}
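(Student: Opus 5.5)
We prove the statement by constructing both maps explicitly and checking that each lands in the right set, is well defined, and that the two composites are identities. The forward map $\kappa$ is defined as in item (1): the places are the species, the transitions are the reactions themselves, and each arc with its weight records a stoichiometric coefficient on the appropriate side. Write $\lambda$ for the map given by item (2), defined on representatives $(P,T,F,\omega,r)$. The plan has four stages:
\begin{enumerate*}[label=(\roman*)]
\item $\kappa(Z)$ is a stochastic Petri net lying in $\U$;
\item $\lambda$ produces a connected CRN with kinetics and is constant on type-1 classes;
\item $\lambda\circ\kappa=\mathrm{id}_\Z$;
\item $\kappa\circ\lambda=\mathrm{id}_{\widetilde{\U}}$.
\end{enumerate*}

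\textbf{Step (i).} For each reaction $x=(f,g)$ the construction of $F$ and $\omega$ gives $\N_{\text{in}}(x)=\Dom(f)$ and $\N_{\text{out}}(x)=\Dom(g)$, with weights equal to the values of $f$ and $g$. Hence $\dot U_{\text{in}}[x]$ encodes exactly $f$ and $\dot U_{\text{out}}[x]$ encodes exactly $g$.
\begin{itemize}
\item Weights lie in $\NN_+$ because complexes are positive partial multisets.
\item There are no isolated places, using connectedness. Strictly, a species might appear in no complex. I would handle this either by noting that the paper implicitly takes $\S=\bigcup_{c\in\C}\Dom(c)$, or by adding that standing assumption.
\item The condition $\dot U_{\text{in}}[x]\neq \dot U^{\ast}_{\text{out}}[x]$ is equivalent to $f\neq g$, which is irreflexivity of $\R$.
\item If $w\in\T_{\text{in}}(x)\cap\T_{\text{out}}(x)$, then the type-1 isomorphisms fix the place labels. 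So $w$ has the same in-multiset and out-multiset as $x$, hence $w=(f,g)=x$ as elements of $\R$, since $\R$ is a set. This gives $\T_{\text{in}}(x)\cap\T_{\text{out}}(x)=\{x\}$.
\end{itemize}

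\textbf{Step (ii).} Each $f_x,g_x$ is a positive partial multiset on $P$ and $f_x\ne g_x$ by the second condition, so $\R$ is irreflexive. $\C$ is connected by construction.

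The delicate point is $k$: it must be well defined, that is, distinct transitions $x\neq y$ must yield distinct pairs $(f_x,g_x)\neq(f_y,g_y)$. Otherwise two transitions collapse to one reaction, and if $r(x)\neq r(y)$ the value $k$ is ambiguous. This is exactly what the hypothesis $\T_{\text{in}}(z)\cap\T_{\text{out}}(z)=\{z\}$ provides. Equal pairs would make $x,y$ twins (since $\N(x)=\N_{\text{in}}(x)\cup\N_{\text{out}}(x)$) and make their in- and out-subgraphs type-1 isomorphic, so $y\in\T_{\text{in}}(x)\cap\T_{\text{out}}(x)$. One subtlety is a place lying in both $\N_{\text{in}}$ and $\N_{\text{out}}$; this is handled because the neighbourhoods are compared separately.

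Invariance under type-1 isomorphism follows because such an isomorphism only renames transitions while preserving place labels, arc weights and $r$. It therefore maps $x\mapsto\phi(x)$ with $(f_x,g_x)=(f_{\phi(x)},g_{\phi(x)})$ and $r(x)=r(\phi(x))$, giving the same $(P,\C,\R,k)$.

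\textbf{Step (iii).} Computing $\lambda(\kappa(Z))$, the transition $x=(f,g)$ returns $(f_x,g_x)=(f,g)$. So $\R$ is recovered, $k$ is recovered, and $\C$ is recovered by connectedness, and $P=\S$.

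\textbf{Step (iv).} Given a representative $U$, the net $\kappa(\lambda(U))$ has transition set $\R=\{(f_x,g_x)\}$. The map $x\mapsto (f_x,g_x)$ is a bijection $T\to\R$ by the injectivity established in step (ii). I then check that it is a type-1 isomorphism: it fixes places, carries arc $(y,x)$ to $(y,(f_x,g_x))$ with weight $\omega(y,x)$ (and similarly for outgoing arcs), and preserves rates.

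I expect the main obstacle to be step (ii)'s injectivity argument, namely translating the graph-theoretic symmetry conditions, stated via $\T_{\text{in}}$, $\T_{\text{out}}$ and the converse graph, precisely into "distinct transitions have distinct (reactant, product) pairs" and "reactant $\neq$ product". I would first prove a small lemma that, for transitions $x,y$ of $U$, $y\in\T_{\text{in}}(x)$ if and only if $f_y=f_x$, and $y\in\T_{\text{out}}(x)$ if and only if $g_y=g_x$, using that type-1 isomorphisms of star-shaped subgraphs are determined by fixing place labels. Every other stage then reduces to this lemma.
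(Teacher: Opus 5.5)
Your proposal is correct and follows the paper's proof essentially step for step. It has the same four stages: well-definedness of $\kappa$, well-definedness and type-1 invariance of the inverse, and the two composites being identities. It uses $\T_{\text{in}}(z)\cap\T_{\text{out}}(z)=\{z\}$ in the same key way, to make $x\mapsto(f_x,g_x)$ injective, so that $k$ is well defined and the vertex map in the last step is a bijection.

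Two small points. Your remark that a species occurring in no complex would produce an isolated place is a genuine issue that the paper passes over silently. Your auxiliary lemma should be stated for the intersection $\T_{\text{in}}(x)\cap\T_{\text{out}}(x)$, because membership in $\T_{\text{in}}(x)$ alone also requires $\N(y)=\N(x)$; your Step (ii) already handles this correctly.
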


\begin{proof}
    The proof consists of four steps, where we show:
        \begin{enumerate*}[label={(\arabic*)}]
            \item The map $\kappa$ is well defined. \label{kappa}
            \item The map $\eta \colon \widetilde{\U} \to \Z$ in Part~\ref{prop:kappa-inv} of this proposition is well defined. \label{eta}
            \item $\eta \circ \kappa$ is the identity map on $\Z$. \label{eta-kappa}
            \item $\kappa \circ \eta$ is the identity map on $\widetilde{\U}$. \label{kappa-eta}
        \end{enumerate*}
    It then follows, in particular, that $\eta = \kappa^{-1}$.

    Step~\ref{kappa}: To show that $\kappa$ is well defined, let $Z \defeq (\S,\C,\R,k) \in \Z$ and we show that $U \defeq (\S,\R,F,\omega,k) \in \U$. Note that $U$ is a stochastic Petri net with $\omega$ positive.

    For $z \defeq (f,g) \in \R$ we show that $\T_{\text{in}}(z) \cap \T_{\text{out}}(z) = \{z\}$. If $x \defeq (f^{\prime},g^{\prime}) \in \R$ with $x \in \T_{\text{in}}(z) \cap \T_{\text{out}}(z)$ then $\dot{U}_{\text{in}}[x] \cong_1 \dot{U}_{\text{in}}[z]$ and $\dot{U}_{\text{out}}[x] \cong_1 \dot{U}_{\text{out}}[z]$, so: $\N_{\text{in}}(x) = \N_{\text{in}}(z)$ and $\N_{\text{out}}(x) = \N_{\text{out}}(z)$, hence $\Dom(f^{\prime}) = \N_{\text{in}}(x) = \N_{\text{in}}(z) = \Dom(f)$ and $\Dom(g^{\prime}) = \N_{\text{out}}(x) = \N_{\text{out}}(z) = \Dom(g)$; $f^{\prime}(s) = \omega\big(s,(f^{\prime},g^{\prime})\big) = \omega\big(s,(f,g)\big) = f(s)$ for $s \in \Dom(f^{\prime}) = \Dom(f)$; and $g^{\prime}(s) = \omega\big((f^{\prime},g^{\prime}),s\big) = \omega\big((f,g),s\big) = g(s)$ for $s \in \Dom(g^{\prime}) = \Dom(g)$. Therefore $x = (f^{\prime},g^{\prime}) = (f,g) = z$.

    For $z \defeq (f,g) \in \R$ we show $\dot{U}_{\text{in}}[z] \ne \dot{U}_{\text{out}}^{\ast}[z]$. Now, $\dot{U}_{\text{in}}[z] = \dot{U}_{\text{out}}^{\ast}[z]$ if and only if $\N_{\text{in}}(z) = \N_{\text{out}}(z)$ and $\omega\big((s,z)\big) = \omega\big((z,s)\big)$ for all $s \in \N_{\text{in}}(z) = \N_{\text{out}}(z)$, if and only if $\Dom(f) = \Dom(g)$ and $f(s) = g(s)$ for all $s \in \Dom(f) = \Dom(g)$, if and only if $f = g$. Since the binary relation $\R$ is irreflexive the result follows.

    Step~\ref{eta}: To show that $\eta$ is well defined, let $U \defeq (P,T,F,\omega,r)$ and $U^{\prime} \defeq (P,T^{\prime},F^{\prime},\omega^{\prime},r^{\prime})$ be in $\U$ with $U^{\prime} \in [U]_1$, and we show that $Z \defeq \eta\big([U]_1\big) = (P,\C,\R,k)$ is equal to $Z^{\prime} \defeq \eta\big([U^{\prime}]_1\big) = (P,\C^{\prime},\R^{\prime},k^{\prime})$, and $Z \in \Z$. Denote by $\psi \colon U \to U^{\prime}$ a type-1 isomorphism.

    We first show that $\R = \R^{\prime}$. If $(f_x,g_x) \in \R$ for $x \in T$ then $(f_{\psi(x)},g_{\psi(x)}) \in \R^{\prime}$ where $\psi(x) \in T^{\prime}$. So $\Dom(f_{\psi(x)}) = \N_{\text{in}}\big(\psi(x)\big) = \N_{\text{in}}(x) = \Dom(f_x)$, and $f_{\psi(x)}(y) = \omega^{\prime}\big(y,\psi(x)\big) = \omega(y,x) = f_x(y)$ for $y \in \Dom(f_x) = \Dom(f_{\psi(x)})$, imply $f_{\psi(x)} = f_x$. Similarly $g_{\psi(x)} = g_x$, so $(f_x,g_x) = (f_{\psi(x)},g_{\psi(x)}) \in \R^{\prime}$. Therefore $\R \subseteq \R^{\prime}$, and a similar argument gives $\R^{\prime} \subseteq \R$. Since $\R = \R^{\prime}$ it follows that $\C = \C^{\prime}$.

    We show $k = k^{\prime}$. Note, in particular, that $k$ is well defined since it follows from the condition $\T_{\text{in}}(z) \cap \T_{\text{out}}(z) = \{z\}$ for $z \in T$ with respect to $\dot{U}$ that $(f_x,g_x) = (f_y,g_y)$ implies $x = y$. Now, if $(f_x,g_x) \in \R$ for some $x \in T$ then, since $(f_x,g_x) = (f_{\psi(x)},g_{\psi(x)})$, we have $k^{\prime}(f_x,g_x) = k^{\prime}(f_{\psi(x)},g_{\psi(x)}) = r^{\prime}(\psi(x)) = r(x) = k(f_x,g_x)$. We conclude that $k = k^{\prime}$. Therefore $Z = Z^{\prime}$.

    Finally we show $Z \in \Z$. By the definition of $\R$, and since $\omega$ is positive, each $f \in \C$ is a positive partial multiset with underlying set $P$. Moreover, $Z$ is connected by the definition of $\C$. The binary relation $\R$ is irreflexive, since $(f_x,g_x) \in \R$ for some $x \in T$ with $f_x = g_x$ implies $\N_{\text{in}}(x) = \N_{\text{out}}(x)$ and $\omega(y,x) = f_x(y) = g_x(y) = \omega(x,y)$ for $y \in \N_{\text{in}}(x) = \N_{\text{out}}(x)$, hence $\dot{U}_{\text{in}}[x] = \dot{U}_{\text{out}}^{\ast}[x]$.

    Step~\ref{eta-kappa}: For $(\S,\C,\R,k) \in \Z$ we have $\eta \circ \kappa\big((\S,\C,\R,k)\big) = \eta\big(\big[(\S,\R,F,\omega,k)\big]_1\big) = (\S,\C^{\prime},\R^{\prime},k^{\prime})$. We show that $\R = \R^{\prime}$. Let $x \defeq (f,g) \in \R$. Then $f_x(y) = \omega(y,x) = \omega\big(y,(f,g)\big) = f(y)$ for $y \in \Dom(f_x) = \N_{\text{in}}(x) = \Dom(f)$, hence $f_x = f$. Similarly, $g_x = g$. So $x = (f,g) = (f_x,g_x) \in \R^{\prime}$. Therefore $\R \subseteq \R^{\prime}$. For the reverse inclusion, let $(f_x,g_x) \in \R^{\prime}$ for some $x \defeq (f,g) \in \R$. Then $f_x(y) = \omega(y,x) = \omega\big(y,(f,g)\big) = f(y)$ for $y \in \Dom(f_x) = \N_{\text{in}}(x) = \Dom(f)$, hence $f_x = f$. Similarly, $g_x = g$, so $(f_x,g_x) = (f,g) \in \R$. Therefore $\R^{\prime} \subseteq \R$. We conclude that $\R = \R^{\prime}$, and it follows that $\C^{\prime} = \bigcup_{(f,g) \in \R^{\prime}} \{f,g\} = \bigcup_{(f,g) \in \R} \{f,g\} = \C$. Further, for $x \defeq (f,g) \in \R = \R^{\prime}$ we have $(f_x,g_x) = (f,g)$ so $k^{\prime}(f,g) = k^{\prime}(f_x,g_x) = k(x) = k(f,g)$, hence $k = k^{\prime}$. We conclude that $\eta \circ \kappa$ is the identity map on $\Z$.

    Step~\ref{kappa-eta}: For $\big[(P,T,F,\omega,r)\big]_1 \in \widetilde{\U}$ we have $\kappa \circ \eta\big(\big[(P,T,F,\omega,r)\big]_1\big) = \kappa\big((P,\C,\R,k)\big) = \big[(P,\R,F^{\prime},\omega^{\prime},k)\big]_1$. It suffices to show that the vertex map $\psi \colon P \cup T \to P \cup \R$ such that $\psi|_P$ is the identity map and $\psi|_T (x) = (f_x,g_x)$ for $x \in T$ is a type-1 isomorphism from $B \defeq (P,T,F,\omega,r)$ to $B^{\prime} \defeq (P,\R,F^{\prime},\omega^{\prime},k)$.

    $\psi|_P$ is a bijection onto $P$, so we show that $\psi|_T$ is a bijection onto $\R$. Surjectivity of $\psi|_T$ follows from the definition of $\R$. For injectivity, let $x$, $z \in T$ with $\psi|_T(x) = \psi|_T(z)$, hence $(f_x,g_x) = (f_z,g_z)$. So $\N_{\text{in}}(x) = \Dom(f_x) = \Dom(f_z) = \N_{\text{in}}(z)$ and $\N_{\text{out}}(x) = \Dom(g_x) = \Dom(g_z) = \N_{\text{out}}(z)$, and then $\omega(y,x) = f_x(y) = f_z(y) = \omega(y,z)$ for all $y \in \N_{\text{in}}(x) = \N_{\text{in}}(z)$, and $\omega(x,y) = g_x(y) = g_z(y) = \omega(z,y)$ for all $y \in \N_{\text{out}}(x) = \N_{\text{out}}(z)$. So $x \in \T(z)$, $\dot{B}_{\text{in}}[x] \cong_1 \dot{B}_{\text{in}}[z]$, and $\dot{B}_{\text{out}}[x] \cong_1\dot{B}_{\text{out}}[z]$, hence $x \in \T_{\text{in}}(z) \cap \T_{\text{out}}(z) = \{z\}$, thus $x = z$. Therefore $\psi|_T$ is an injection.

    We show that $\psi$ preserves arcs. If $(p,t) \in F \cap (P \times T)$ then $\psi(p,t) = \big(\psi(p),\psi(t)\big) = \big(p,(f_t,g_t)\big) \in F^{\prime} \cap (P \times \R)$, since $p \in \N_{\text{in}}(t) = \Dom(f_t)$. If $(t,p) \in F \cap (T \times P)$ then a similar argument gives $\psi(t,p) \in F^{\prime} \cap (\R \times P)$. Conversely, let $\big(p,(f_t,g_t)\big) \in F^{\prime} \cap (P \times \R)$ where $p \in \Dom(f_t)$ and $t \in T$. Then $\psi^{-1}\big(\big(p,(f_t,g_t)\big)\big) = \big(\big(\psi^{-1}(p),\psi^{-1}(f_t,g_t)\big)\big) = (p,t) \in F \cap (P \times T)$, since $p \in \Dom(f_t) = \N_{\text{in}}(t)$. A similar argument gives $(t,p) \in F \cap (T \times P)$ for $\big((f_t,g_t),p\big) \in F^{\prime} \cap (\R \times P)$.

    For preservation of arc weights, if $(p,t) \in F \cap (P \times T)$ then $\omega^{\prime}\big(\psi(p),\psi(t)\big) = \omega^{\prime}\big(p,(f_t,g_t)\big) = f_t(p) = \omega(p,t)$, and similarly for $(t,p) \in F \cap (T \times P)$.

    For preservation of stochastic rate constants, if $x \in T$ then $k\big(\psi(x)\big) = k\big((f_x,g_x)\big) = r(x)$.

    We show that $\psi$ preserves arcs. If $(p,t) \in F \cap (P \times T)$ then $\psi(p,t) = \big(\psi(p),\psi(t)\big) = \big(p,(f_t,g_t)\big) \in F^{\prime} \cap (P \times \R)$, since $p \in \N_{\text{in}}(t) = \Dom(f_t)$. If $(t,p) \in F \cap (T \times P)$ then a similar argument gives $\psi(t,p) \in F^{\prime} \cap (\R \times P)$. Conversely, let $\big(p,(f_t,g_t)\big) \in F^{\prime} \cap (P \times \R)$ where $p \in \Dom(f_t)$ and $t \in T$. Then $\psi^{-1}\big(\big(p,(f_t,g_t)\big)\big) = \big(\big(\psi^{-1}(p),\psi^{-1}(f_t,g_t)\big)\big) = (p,t) \in F \cap (P \times T)$, since $p \in \Dom(f_t) = \N_{\text{in}}(t)$. A similar argument gives $(t,p) \in F \cap (T \times P)$ for $\big((f_t,g_t),p\big) \in F^{\prime} \cap (\R \times P)$.

    For preservation of arc weights, if $(p,t) \in F \cap (P \times T)$ then $\omega^{\prime}\big(\psi(p),\psi(t)\big) = \omega^{\prime}\big(p,(f_t,g_t)\big) = f_t(p) = \omega(p,t)$, and similarly for $(t,p) \in F \cap (T \times P)$.

    For preservation of stochastic rate constants, if $x \in T$ then $k\big(\psi(x)\big) = k\big((f_x,g_x)\big) = r(x)$.
\end{proof}

\section{Existence of a well-ordered distribution hierarchy for systems hypergraphs}\label{app:dist-hier}

\begin{definition}[Well-ordered distribution hierarchy]
Let $X \defeq (V,E,\mu_E)$ be a multihypergraph, and let $Q \defeq (S,<,\rnk)$ be a distribution hierarchy for $X$. Suppose $\prec$ is a well ordering on $S$ such that: $j$, $k \in \NN$ with $j < k$ implies $\widehat{\pi} \prec \widehat{\pi}^{\prime}$ for all $\widehat{\pi} \in \rnk^{-1}(j)$ and $\widehat{\pi}^{\prime} \in \rnk^{-1}(k)$, and $\rnk^{-1}(j)$ is an interval in $(S,\prec)$ for $j \in \NN$. Then $(S,\prec)$ is a \emph{well-ordered distribution hierarchy} for $X$.
\end{definition}

The existence of a well ordering on $S$ follows from the existence of a linear order on $S$, which is shown in the following Proposition 2, and from $S$ being finite. Note that the well ordering of the distribution hierarchy is not necessarily unique, since the distribution functions within an interval are mutually independent so can be permuted within the interval.

\begin{proposition}\label{prop:dist-hier}
Let $X \defeq (V,E,\mu_E)$ be a multihypergraph, and let $Q \defeq (S,<,\rnk)$ be a distribution hierarchy for $X$. Then there exists a well-ordered strict linear extension $\prec$ of $<$ on $S$ such that $\rnk^{-1}(i)$ is an interval in $(S,\prec)$ for $i \in \NN$.
\end{proposition}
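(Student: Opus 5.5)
The plan is to build $\prec$ explicitly by ordering first by rank and then arbitrarily within each rank level. Since $S$ is finite, fix any strict linear order $\triangleleft$ on $S$; one exists, for instance via an enumeration of $S$. Define $\widehat{\pi} \prec \widehat{\pi}^{\prime}$ if and only if either $\rnk(\widehat{\pi}) < \rnk(\widehat{\pi}^{\prime})$, or $\rnk(\widehat{\pi}) = \rnk(\widehat{\pi}^{\prime})$ and $\widehat{\pi} \triangleleft \widehat{\pi}^{\prime}$. This is the lexicographic order on the pair $(\rnk(\widehat{\pi}),\widehat{\pi})$ in $\NN \times (S,\triangleleft)$. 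It is a strict linear order, since a lexicographic product of strict linear orders is again one. It is a well order because $S$ is finite, so every nonempty subset has a $\prec$-least element.

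Next I will show that $\prec$ extends $<$. The key claim is that $\widehat{\pi} < \widehat{\pi}^{\prime}$ implies $\rnk(\widehat{\pi}) < \rnk(\widehat{\pi}^{\prime})$, i.e.\ that $\rnk$ is strictly order-preserving. This is the main step, and it is the only place the axioms of Definition~\ref{def:DistHierarchy} really enter. The preceding remark already records that elements of equal rank are pairwise incomparable, and I take this as part of the meaning of a graded poset. That remark also gives that an element of rank $i+1$ is not below one of rank $i$.

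For a self-contained argument I intend to use the standard grading convention implicit in calling $Q$ a graded poset: $\rnk$ is monotone and increases by exactly one along covering relations. Given $\widehat{\pi} < \widehat{\pi}^{\prime}$, finiteness of $S$ gives a saturated chain $\widehat{\pi} = \widehat{\pi}_0 \lessdot \widehat{\pi}_1 \lessdot \cdots \lessdot \widehat{\pi}_m = \widehat{\pi}^{\prime}$ with $m \ge 1$. Summing the rank increments then gives $\rnk(\widehat{\pi}^{\prime}) = \rnk(\widehat{\pi}) + m > \rnk(\widehat{\pi})$, and hence $\widehat{\pi} \prec \widehat{\pi}^{\prime}$. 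If one reads ``graded'' only through conditions (1)--(3) of Definition~\ref{def:DistHierarchy}, this monotonicity has to be taken as part of the definition, and I expect this interpretive point to be the main obstacle. I would state the convention explicitly at this point.

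Finally, I will check that each $\rnk^{-1}(i)$ is an interval of $(S,\prec)$. Suppose $\widehat{\pi},\widehat{\pi}^{\dprime} \in \rnk^{-1}(i)$ and $\widehat{\pi} \prec \widehat{\pi}^{\prime} \prec \widehat{\pi}^{\dprime}$. The definition of $\prec$ forces $i \le \rnk(\widehat{\pi}^{\prime}) \le i$, so $\widehat{\pi}^{\prime} \in \rnk^{-1}(i)$. The same observation shows that $j<k$ implies that every element of rank $j$ precedes every element of rank $k$. So $(S,\prec)$ is in fact a well-ordered distribution hierarchy.
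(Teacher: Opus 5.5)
Your proof is correct and follows essentially the paper's route: the paper defines the rank order $\widehat{\pi} \lhd \widehat{\pi}^{\prime}$ if and only if $\rnk(\widehat{\pi}) < \rnk(\widehat{\pi}^{\prime})$ and takes a linear extension of it via the order-extension principle. Your lexicographic order is an explicit such extension (every linear extension of $\lhd$ has your form), and the interval argument is the same. Your saturated-chain argument, using the standard convention that rank increases by one along covers, also justifies a step the paper only asserts (that $(S,<)$ is a subposet of $(S,\lhd)$), and you are right that this does not follow from conditions (1)--(3) alone.
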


\begin{proof}
    Define the binary relation $\lhd$ on $S$ such that $\widehat{\pi} \lhd \widehat{\pi}^{\prime}$ if and only if $\rnk(\widehat{\pi}) < \rnk(\widehat{\pi}^{\prime})$. Then $(S,\lhd)$ is a strict poset: asymmetry and transitivity follow from the respective asymmetry and transitivity of $(<,\NN)$. By the order-extension principle the strict partial order $\lhd$ on $S$ can be extended to a strict linear order $\lhd^{\ast}$ of $S$. Since $(S,<)$ is a subposet of $(S,\lhd)$ the strict linear order $\lhd^{\ast}$ is also a strict linear extension of $<$, so let $\prec \defeq \lhd^{\ast}$.

    Let $i \in \NN$ with $\widehat{\pi}$, $\widehat{\pi}^{\prime} \in \rnk^{-1}(i)$ and $\widehat{\pi}^{\dprime} \in S$ such that $\widehat{\pi} <^{\ast} \widehat{\pi}^{\dprime} <^{\ast} \widehat{\pi}^{\prime}$. Note that $\rnk(\widehat{\pi}) = \rnk(\widehat{\pi}^{\prime}) = i$. If $\rnk(\widehat{\pi}^{\dprime}) > i$ then, in particular, $\widehat{\pi}^{\prime} \lhd \widehat{\pi}^{\dprime}$ and hence $\widehat{\pi}^{\prime} \prec \widehat{\pi}^{\dprime}$, so we must have $\rnk(\widehat{\pi}^{\dprime}) \le i$. Similarly, if $\rnk(\widehat{\pi}^{\dprime}) < i$ then, in particular, $\widehat{\pi}^{\dprime} \lhd \widehat{\pi}$ and hence $\widehat{\pi}^{\dprime} \prec \widehat{\pi}$, so we must have $\rnk(\widehat{\pi}^{\dprime}) \ge i$. Therefore $\rnk(\widehat{\pi}^{\dprime}) = i$, so $\widehat{\pi}^{\dprime} \in \rnk^{-1}(i)$, and it follows that $\rnk^{-1}(i)$ is an interval in $(S,\prec)$.
\end{proof}

\begin{example}[Well-ordered distribution hierarchy]
Let $X \defeq (V,E,\mu_E)$ be a multihypergraph. Continuing from Example 16 in the main document, there are six well-orderings on $S$ determined by the six permutations of the set $\{\widehat{\sigma}, \widehat{\mu}, \widehat{\rho}\}$. For example, one well ordering $\prec^{\ast}$ on $S$ is $\widehat{\sigma} \prec^{\ast} \widehat{\mu} \prec^{\ast} \widehat{\rho} \prec^{\ast} \widehat{\mu}^{\,\Minus / \Plus}$.
\end{example}

\section{Correspondence between systems hypergraphs and bipartite graphs}\label{app:hypergraphs-bipartite-graphs}

In this section we establish a technical result which formalises the relationship between AMT-type systems hypergraphs and directed bipartite graphs with arc weight functions and second-part vertex weight functions.

We first recall the standard notions of homomorphism for both hypergraphs and graded posets. Given two hypergraphs $(V,E)$ and $(V^{\prime},E^{\prime})$ a hypergraph homomorphism is a vertex map $\phi \colon V \to V^{\prime}$ such that $e \in E$ implies $\phi(e) \in E^{\prime}$; the homomorphism $\phi$ is an isomorphism when $\phi$ is bijective and, for every $e \subseteq V$, $e \in E$ if and only if $\phi(e) \in E^{\prime}$. Given two graded posets $(P,<,\rnk)$ and $(P^{\prime},<^{\prime},\rnk^{\prime})$ a graded poset homomorphism is a map $f \colon P \to P^{\prime}$ that is order preserving, so $p < q$ implies $f(p) <^{\prime} f(q)$ for all $p$, $q \in P$, and rank preserving, so $\rnk^{\prime}(f(p)) = \rnk(p)$ for all $p \in P$.

\begin{definition}[Systems hypergraph isomorphism]
    Let $H \defeq (V,E,\mu_E,Q,\chi)$ where $Q \defeq (S,<,\rnk)$ and $\chi \colon E \to \bigcup_{e \in E} \{ Q_e \rightarrow \NN \}$, and $H^{\prime} \defeq (V^{\prime},E^{\prime},\mu^{\prime}_{E^{\prime}},Q^{\prime},\chi^{\prime})$ where $Q^{\prime} \defeq (S^{\prime},<^{\prime},\rnk^{\prime})$ and $\chi^{\prime} \colon E^{\prime} \to \bigcup_{e^{\prime} \in E^{\prime}} \{ Q^{\prime}_{e^{\prime}} \rightarrow \NN \}$ be two systems hypergraphs. A \emph{systems hypergraph isomorphism} $(\phi,\psi) \colon H \to H^{\prime}$ consists of a multihypergraph isomorphism $\phi \colon (V,E,\mu_E) \to (V^{\prime},E^{\prime},\mu^{\prime}_{E^{\prime}})$, which preserves multihyperedges in the sense that $\mu^{\prime}_{E^{\prime}}\big(\phi(e)\big) = \mu_E(e)$ for all $e \in E$, and an order isomorphism $\psi \colon Q \to Q^{\prime}$ of graded posets, such that:
        \begin{itemize}[topsep=3pt,itemsep=1pt,leftmargin=12pt]
            \item (\textbf{Preservation of the distribution functions}) If $\widehat{\pi} \in S$ is an $n$-distribution function for $X \defeq (V,E,\mu_E)$ with respect to the sequence of domain functions $(\lambda_i)_{i=1}^n$ for $X$, and the sequence of attribute sets $(\AA_i)_{i=1}^n$, for some $n \in \NN_+$, then $\widehat{\pi}^{\prime} \defeq \psi(\widehat{\pi}) \in S^{\prime}$ is an $n$-distribution function with respect to the domain functions $(\lambda^{\prime}_i)_{i=1}^n$ and attribute sets $(\AA^{\prime}_i)_{i=1}^n$ where:
                \begin{itemize}[topsep=3pt,itemsep=1pt,leftmargin=12pt]
                    \item $\lambda^{\prime}_i = \phi \circ \lambda_i \circ \phi^{-1}$ and $\AA^{\prime}_i = \AA_i$ for $i \in [n]$.
                    \item For $e \in E$, noting that $\Dom(\widehat{\pi}_e) = \bigtimes_{i=1}^n \{ \lambda_i(e) \rightrightarrows \AA_i \}$, $\Dom(\widehat{\pi}^{\prime}_{\phi(e)}) = \bigtimes_{i=1}^n \{ \phi \circ \lambda_i(e) \rightrightarrows \AA_i \}$, and the map  $\Phi_{\hat{\pi},e} \colon \Dom(\widehat{\pi}_e) \to \Dom(\widehat{\pi}^{\prime}_{\phi(e)})$ where $\pi \mapsto \pi \circledcirc \phi^{-1}$ for $\pi \in \Dom(\widehat{\pi}_e)$ is a bijection with inverse map $\pi^{\prime} \mapsto \pi^{\prime} \circledcirc \phi$ for $\pi^{\prime} \in \Dom(\widehat{\pi}^{\prime}_{\phi(e)})$, we have $\widehat{\pi}^{\prime}_{\phi(e)}\big(\Phi_{\hat{\pi},e}(\pi)\big) = \widehat{\pi}_e(\pi)$ for $\pi \in \Dom(\widehat{\pi}_e)$.
                \end{itemize}
            \item (\textbf{Preservation of the association function}) Denoting for each $e \in E$ the bijection $\Psi_e \colon Q_e \to Q^{\prime}_{\phi(e)}$ where $\Psi_e \big((S_e,<_e,\rnk_e)\big) = (S^{\prime}_{\phi(e)},<^{\prime}_{\phi(e)},\rnk^{\prime}_{\phi(e)})$ with $S^{\prime}_{\phi(e)} \defeq \{\, \Phi_{\hat{\pi},e}(\pi) \mid \pi \in S_e \cap \Dom(\widehat{\pi}_e) \,\}$, $\pi <_e \eta$ if and only if $\Phi_{\hat{\pi},e}(\pi) <_{\phi(e)} \Phi_{\hat{\eta},e}(\eta)$ for all $\pi \in S_e \cap \Dom(\widehat{\pi}_e)$ and $\eta \in S_e \cap \Dom(\widehat{\eta}_e)$, and $\rnk_e(\pi) = \rnk_{\phi(e)}(\Phi_{\hat{\pi},e}(\pi))$ for all $\pi \in S_e \cap \Dom(\widehat{\pi}_e)$, we have $\chi^{\prime}_{\phi(e)}\big(\Psi_e \big((S_e,<_e,\rnk_e)\big)\big) = \chi_e((S_e,<_e,\rnk_e))$ for $e \in E$ and $(S_e,<_e,\rnk_e) \in Q_e$.
            \end{itemize}
    For convenience, and with regard to our terminology for bipartite graph isomorphisms, we refer to systems hypergraph isomorphisms without preservation of vertex labels as \emph{type-0}, and systems hypergraph isomorphisms that preserve vertex labels as \emph{type-1}.
\end{definition}

\begin{notation}[Isomorphism classes of systems hypergraphs]
    Let $\Hh$ be the set of all vertex-labelled systems hypergraphs. Let $R_t$ be the equivalence relation on $\Hh$ such that $H \mathrel{R_t} H^{\prime}$ if and only if there exists a type-$t$ systems hypergraph isomorphism $(\phi,\psi) \colon H \to H^{\prime}$, for all $H$, $H^{\prime} \in \Hh$. Then the quotient set $\widetilde{\Hh} \defeq \Hh/R_t = \{\, [H]_t \mid H \in \Hh \,\}$ consists of the type-$t$ isomorphism classes of $\Hh$. Each type-$0$ equivalence class is an \emph{unlabelled} systems hypergraph. Note that each type-1 equivalence class $[H]_1 \in \widetilde{\Hh}$ is a singleton, that is $[H]_1 = \{H\}$. 
\end{notation}

\begin{definition}[Orientation maps induced from directed bipartite graphs]
    Let $D \defeq (V,W,F)$ be a directed bipartite graph. For each $x \in V \cup W$ let $\sigma_x \colon \N(x) \to \I$ be the \emph{orientation map} such that, for $y \in \N(x)$:
    \begin{equation*}
        \sigma_x(y) \defeq
            \begin{cases} \{-1\} & \text{if $y \in \N_{\text{in}}(x) \setminus \N_{\text{out}}(x)$},\\ \{+1\} & \text{if $y \in \N_{\text{out}}(x) \setminus \N_{\text{in}}(x)$},\\ \{-1,+1\} & \text{if $y \in \N_{\text{in}}(x) \cap \N_{\text{out}}(x)$}.
            \end{cases}
    \end{equation*}
    Note that if $\N(x) = \emptyset$ then $\sigma_x$ is the empty function to $\I$.
\end{definition}

\begin{definition}[Multiset maps induced from directed bipartite graphs with arc weights]
    Let $D \defeq (V,W,F,\omega)$ be a directed bipartite graph with arc weight function $\omega \colon F \to \AA$ where $\AA \subseteq \RR$ is nonempty and closed under addition. For each $x \in V \cup W$ let:
        \begin{itemize}[topsep=3pt,itemsep=1pt,leftmargin=12pt]
            \item $\mu_x \colon \N(x) \to \AA$ be the \emph{multiset map} such that, for $y \in \N(x)$, $\mu_x(y) \defeq \sum_{e \in F(x) \cap F(y)} \omega(e)$.
            \item $\mu^\Minus_x \colon \N(x) \to \AA$ be the \emph{negative oriented-multiset map} such that $\Dom(\mu^\Minus_x) \defeq \N_{\text{in}}(x)$, hence $\mu^\Minus_x$ may be a partial map, and $\mu^\Minus_x(y) \defeq \omega(y,x)$ for $y \in \N_{\text{in}}(x)$.
            \item $\mu^\Plus_x \colon \N(x) \to \AA$ be the \emph{positive oriented-multiset map} such that $\Dom(\mu^\Plus_x) \defeq \N_{\text{out}}(x)$, hence $\mu^\Plus_x$ may be a partial map, and $\mu^\Plus_x(y) \defeq \omega(x,y)$ for $y \in \N_{\text{out}}(x)$.
        \end{itemize}
    Note that if $\N(x) = \emptyset$ then $\mu_x$, $\mu^\Minus_x$, and $\mu^\Plus_x$ are empty functions, and if $\N(x) \ne \emptyset$ with $\N_{\text{in}}(x) = \emptyset$ (resp. $\N_{\text{out}}(x) = \emptyset$) then $\mu^\Minus_x$ (resp. $\mu^\Plus_x$) is the empty partial function.
\end{definition}

\begin{definition}[Hyperedge label maps induced from directed bipartite graphs with vertex weights]
    Let $D \defeq (V,W,F,\delta)$ be a directed bipartite graph with vertex weight function $\delta \colon V \cup W \to \VV$ where $\VV \subseteq \RR$ is nonempty. For each $x \in V \cup W$ let $\rho_x \colon \big\{\N(x)\big\} \to \VV$ be the \emph{hyperedge label map} such that $\rho_x\big(\N(x)\big) \defeq \delta(x)$.
\end{definition}

\begin{definition}[Attribute hierarchy induced from directed bipartite graphs with arc and vertex weights]
    Let $D \defeq (V,W,F,\omega,\delta)$ be a directed bipartite graph with arc weight function $\omega \colon F \to \AA$ where $\AA \subseteq \RR$ is nonempty and closed under addition, and with vertex weight function $\delta \colon V \cup W \to \VV$ where $\VV \subseteq \RR$ is nonempty. For each $x \in V \cup W$ the \emph{induced attribute hierarchy} $(S_x,<_x,\rnk_x)$ satisfies $S_x \defeq \{ \sigma_x,\mu_x,\rho_x,(\mu_x^\Minus,\mu_x^\Plus) \}$, $\rnk_x \colon S_x \to \NN$ with $\rnk_x(\sigma_x) = \rnk_x(\mu_x) = \rnk_x(\rho_x) = 0$ and $\rnk_x((\mu_x^\Minus,\mu_x^\Plus)) = 1$, and the comparable attribute sequences are $\sigma_x <_x (\mu_x^\Minus,\mu_x^\Plus)$ and $\mu_x <_x (\mu_x^\Minus,\mu_x^\Plus)$.
\end{definition}

\begin{theorem}[Correspondence between AMT-type systems hypergraphs and directed bipartite graphs with arc weights and second-part weights] \label{theorem:Corr-SH-DBGW}
    Let $t \in \{0,1\}$, let $\B$ be the set of all vertex-labelled directed bipartite graphs with arc weight functions and second-part weight functions, and type-$t$ isomorphism classes $\widetilde{\B}$, and let $\Hh$ be the set of all AMT-type systems hypergraphs with type-$t$ isomorphism classes $\widetilde{\Hh}$. There exists a surjection $K \colon \B \to \Hh$ such that:
    \begin{enumerate}[label=(\arabic*)]
        \item For $(V,W,F,(\omega,\AA),(\delta,\VV)) \in \B$ we have $K\big((V,W,F,\omega,\delta)\big) = (V,E,\mu_E,Q,\chi) \in \Hh$, where:
            \begin{enumerate}[label=(\arabic{enumi}.\arabic*),topsep=0pt]
                \item $E \defeq \{\, \mathcal{N}(w) \mid w \in W \,\}$.
                \item $\mu_E \colon E \to \NN_+$ satisfies $\mu_E\big(\mathcal{N}(w)\big) \defeq \left| \T(w) \right|$ for $w \in W$.
                \item $\widehat{\sigma} \colon E \to \bigcup_{e \in E} \{ \{ e \rightarrow \I \} \rightarrow \NN \}$ satisfies $\widehat{\sigma}_{\N(w)}^{-1} (\NN_+) = \{\, \sigma_z \mid z \in \T(w) \,\}$ and $\widehat{\sigma}_{\N(w)} (\sigma_z) = \left| \T_{\text{dir}}(z) \right|$, for $w \in W$ and $\sigma_z \in {\widehat{\sigma}_{\N(w)}}^{-1} (\NN_+)$.
                \item $\widehat{\mu} \colon E \to \bigcup_{e \in E} \{ \{e \rightarrow \AA\} \rightarrow \NN \}$ satisfies $\widehat{\mu}_{\N(w)}^{-1} (\NN_+) = \{\, \mu_z \mid z \in \T(w) \,\}$ and $\widehat{\mu}_{\N(w)} (\mu_z) = \left| \T_{\text{awgt}}(z) \right|$, for $w \in W$ and $\mu_z \in {\widehat{\mu}_{\N(w)}}^{-1} (\NN_+)$.
                \item $\widehat{\rho} \colon E \to \bigcup_{e \in E} \{ \{ \{e\} \rightarrow \VV \} \rightarrow \NN \}$ satisfies $\widehat{\rho}_{\N(w)}^{-1} (\NN_+) = \{\, \rho_z \mid z \in \T(w) \,\}$ and $\widehat{\rho}_{\N(w)} (\rho_z) = \left| \T_{\text{vwgt}}(z) \right|$, for $w \in W$ and $\rho_z \in {\widehat{\rho}_{\N(w)}}^{-1} (\NN_+)$.
                \item $\widehat{\mu}^{\,\Minus / \Plus} \colon E \to \bigcup_{e \in E} \{ \{ e \rightharpoonup \AA \} \times \{ e \rightharpoonup \AA \} \rightarrow \NN \}$ satisfies $\widehat{\mu}^{\,\Minus / \Plus \; -1}_{\N(w)} (\NN_+) = \{\, (\mu^\Minus_z,\mu^\Plus_z) \mid z \in \T(w) \,\}$ and $\widehat{\mu}^{\,\Minus / \Plus}_{\N(w)} ((\mu^\Minus_z,\mu^\Plus_z)) = \left| \T_{\text{in}}(z) \cap \T_{\text{out}}(z) \right|$, for $w \in W$ and $(\mu^\Minus_z,\mu^\Plus_z) \in \widehat{\mu}^{\,\Minus / \Plus \; -1}_{\N(w)} (\NN_+)$.
                \item $\chi \colon E \to \bigcup_{e \in E} \{ Q_e \rightarrow \NN \}$ satisfies $\chi_{\N(w)}^{-1} (\NN_+) = \{\, (S_z,<_z,\rnk_z) \mid z \in \T(w) \,\}$ and $\chi_{\N(w)}\big((S_z,<_z,\rnk_z)\big) =\\ \left| \T_{\text{dir}}(z) \cap \T_{\text{awgt}}(z) \cap \T_{\text{vwgt}}(z) \cap \T_{\text{in}}(z) \cap \T_{\text{out}}(z) \right|$, for $w \in W$.
            \end{enumerate}
        \item A right inverse $K_r \colon \Hh \to \B$ of $K$ is given by $K_r\big((V,E,\mu_E,Q,\chi)\big) = (V,W,F,(\omega,\AA),(\delta,\VV)) \in \B$, for $(V,E,\mu_E,Q,\chi) \in \Hh$ with $S = \{(\widehat{\sigma},\I),(\widehat{\mu},\AA),(\widehat{\rho},\VV),(\widehat{\mu}^{\,\Minus / \Plus},\AA)\}$, where:
            \begin{enumerate}[label=(\arabic{enumi}.\arabic*),topsep=0pt]
                \item $W \defeq \big\{\, \big(e,(S_e,<_e,\rnk_e),i\big) \mid \text{$e \in E$, $(S_e,<_e,\rnk_e) \in \chi_e^{-1}(\NN_+)$, and $i \in [\chi_e\big((S_e,<_e,\rnk_e)\big)]$} \,\big\}$.
                \item $F \subseteq (V \times W) \sqcup (W \times V)$ such that $(v,(e,(S_e,<_e,\rnk_e),i)) \in F$ if and only if $v \in e$ and $-1 \in \sigma(v)$ where $\sigma \in S_e$, and $((e,(S_e,<_e,\rnk_e),i),v) \in F$ if and only if $v \in e$ and $+1 \in \sigma(v)$ where $\sigma \in S_e$.
                \item $\omega \colon F \to \AA$ is given by $\omega\big((v,(e,(S_e,<_e,\rnk_e),i))\big) = \mu^\Minus(v)$ if $(v,(e,(S_e,<_e,\rnk_e),i)) \in F$ and $\omega\big(((e,(S_e,<_e,\rnk_e),i),v)\big) = \mu^\Plus(v)$ if $((e,(S_e,<_e,\rnk_e),i),v) \in F$, where $(\mu^\Minus,\mu^\Plus) \in S_e$.
                \item $\delta \colon V \cup W \to \VV$ with $\Dom(\delta) = W$ is given by $\delta\big((e,(S_e,<_e,\rnk_e),i)\big) = \rho(e)$ for all $(e,(S_e,<_e,\rnk_e),i\big) \in W$ where $\rho \in S_e$.
            \end{enumerate} 
        \item $K_r \circ K \colon \B \to \B$ satisfies $K_r \circ K(B) \cong_t B$ for $B \in \B$. \label{thm:KrK}
        \item $K$ induces a bijection $\widetilde{K} \colon \widetilde{\B} \to \widetilde{\Hh}$ such that:
            \begin{enumerate}[label=(\arabic{enumi}.\arabic*),topsep=0pt]
                \item $\widetilde{K}\big([B]_t\big) = [K(B)]_t$ for $[B]_t \in \widetilde{\B}$.
                \item $\widetilde{K}^{-1}\big([H]_t\big) = [K_r(H)]_t$ for $[H]_t \in \widetilde{\Hh}$. \label{thm:eqKinv}
            \end{enumerate}
    \end{enumerate}
\end{theorem}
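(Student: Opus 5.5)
The plan is to follow the four-step structure of the proof of Proposition~\ref{prop:CRN-SPN-supp}. I will show that $K$ is well defined, that $K_r$ is well defined, that $K_r \circ K(B) \cong_t B$, and that $K \circ K_r(H) = H$ up to type-$t$ isomorphism. From these, surjectivity of $K$ and the bijection $\widetilde{K}$ follow formally.

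\textbf{Well-definedness of $K$.} Fix $B \defeq (V,W,F,\omega,\delta)$. The map $w \mapsto \N(w)$ has fibres exactly the twin classes $\T(w)$, and these partition $W$. Hence $\mu_E$ is well defined and positive.

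For each distribution function, $\sigma_z$ is determined by $(\N_{\text{in}}(z),\N_{\text{out}}(z))$, so $\sigma_z = \sigma_{z'}$ for $z,z' \in \T(w)$ if and only if $z' \in \T_{\text{dir}}(z)$. This makes $\widehat{\sigma}_{\N(w)}(\sigma_z) = |\T_{\text{dir}}(z)|$ consistent. Since $\{\T_{\text{dir}}(z) \mid z \in \T(w)\}$ partitions $\T(w)$, summing gives $\mu_E(\N(w))$.

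The same argument works for $\widehat{\mu}$ with $\T_{\text{awgt}}$, since $\mu_z$ is exactly the data of summed weights on $\N(z)$. It works for $\widehat{\rho}$ with $\T_{\text{vwgt}}$. It works for $\widehat{\mu}^{\,\Minus/\Plus}$ with $\T_{\text{in}} \cap \T_{\text{out}}$, because a type-1 isomorphism of one-transition in/out subgraphs is the identity on places and so equates the partial functions $\mu^\Minus_z,\mu^\Plus_z$.

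For $\chi$, the induced hierarchy $(S_z,<_z,\rnk_z)$ equals $(S_{z'},\dots)$ exactly when $z'$ lies in the fivefold intersection. The association identities $\widehat{\pi}_e(\pi) = \sum \chi_e(\cdot)$ then reduce to the statement that each single-attribute class is a disjoint union of fivefold-intersection classes. I also check that the AMT constraints of Example~\ref{ex:AssocFun} hold:
\begin{itemize}
\item $\Dom(\mu^\Minus_z) = \{y \mid -1 \in \sigma_z(y)\}$;
\item $\mu_z(y) = \mu^\Minus_z(y) + \mu^\Plus_z(y)$ when both arcs exist.
\end{itemize}
Strictly, the composition condition needs positive parts, so the case $\AA = \NN$ with zero weights needs a remark. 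Finally, $\widehat{\sigma} < \widehat{\mu}^{\,\Minus/\Plus}$ and $\widehat{\mu} < \widehat{\mu}^{\,\Minus/\Plus}$ give the AMT distribution hierarchy of Example~\ref{ex:DistHierarchy}.

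\textbf{Well-definedness of $K_r$ and the round trip $K_r \circ K$.} For $K_r$ I check that $F$ is bipartite and that $\omega$ takes values in $\AA$. For $K \circ K_r(H)$, a vertex $(e,S_e,i) \in W$ has $\N((e,S_e,i)) = e$, using $\sigma$ total on $e$ with nonempty values. So $E$ is recovered. Its twin class is $\{(e,S'_e,j)\}$, of size $\sum \chi_e = \mu_E(e)$, and its induced hierarchy is $S_e$ itself, by the association constraints. The fivefold-intersection class of $(e,S_e,i)$ is then $\{(e,S_e,j) \mid j \in [\chi_e(S_e)]\}$, so $\chi$ is recovered, and the distribution functions follow by summation.

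For $K_r \circ K(B)$, I define $\phi$ as the identity on $V$. On $W$ it sends each twin class, stratified into fivefold-intersection classes, bijectively onto the indices $i \in [\chi_{\N(w)}(S_z)]$ via any enumeration. Arcs, weights and $\delta$ are preserved by construction, so $\phi$ is a type-1, hence also type-$t$, isomorphism.

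\textbf{The quotient bijection.} I will check that $K$ respects $R_t$. A type-$t$ isomorphism of bipartite graphs restricts to a vertex map $\phi|_V$ that carries neighbourhoods, twin classes and induced attribute functions by precomposition with $\phi^{-1}$. This is exactly the data of a systems hypergraph isomorphism with $\psi$ the identity on labels of the AMT hierarchy. Conversely, $K_r$ respects $R_t$, because a systems hypergraph isomorphism lifts to the $W$ vertices by matching index sets. Then $\widetilde{K}$ is well defined, and parts~\ref{thm:KrK} and $K \circ K_r \cong_t \mathrm{id}$ give $\widetilde{K}^{-1}([H]_t) = [K_r(H)]_t$.

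I expect the main obstacle to be the $\chi$ bookkeeping: proving that the hierarchy $S_z$ determines, and is determined by, the fivefold twin intersection. This requires checking that equality of $S_z$ (including $\rho_z$, hence $\delta$) forces the in/out type-1 isomorphisms and vice versa. The compatibility between the per-attribute counts and the $\chi$ sums also needs care. The zero-weight and empty-hyperedge edge cases need explicit treatment as well.
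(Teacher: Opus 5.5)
Your proposal is correct and follows essentially the same route as the paper: $K$ is well defined because each twin class splits into $\T_{\text{dir}}$-, $\T_{\text{awgt}}$-, $\T_{\text{vwgt}}$-, $\T_{\text{in}} \cap \T_{\text{out}}$- and fivefold-intersection classes; $K \circ K_r$ recovers $E$, $\mu_E$ and $\chi$, hence all distribution functions; enumerating the fivefold-intersection classes gives an explicit type-1 isomorphism $K_r \circ K(B) \cong_1 B$; and $K$ and $K_r$ are compatible with $R_t$. Two small points: your detailed computation gives $K \circ K_r(H) = H$ exactly, which is what the right-inverse and surjectivity claims need (``up to type-$t$ isomorphism'', as your overview says, would not suffice for $t=0$); and the positivity issue you flag for $C(\mu(v),2)$ is genuine, since the paper simply reads that condition as the sum identity $\mu^{\Minus}_z(v) + \mu^{\Plus}_z(v) = \mu_z(v)$.
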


\begin{proof}
The proof consists of six steps, where we show:
    \begin{enumerate*}[label={(\arabic*)}]
        \item The map $K$ is well defined. \label{K}
        \item The map $K_r$ is a well-defined right inverse for $K$. \label{Kr}
        \item $K_r \circ K(B)$ and $B$ are type-$t$ isomorphic for $B \in \B$. \label{KrK}
        \item The map $\widetilde{K} \colon \widetilde{\B} \to \widetilde{\Hh}$ is well defined. \label{eqK}
        \item The map $\widetilde{L} \colon \widetilde{\Hh} \to \widetilde{\B}$ in Part~\ref{thm:eqKinv} of this theorem is well defined. \label{eqKinv}
        \item $\widetilde{L} = \widetilde{K}^{-1}$. \label{eqKbij}
    \end{enumerate*}

    \vspace{0.2cm}

    Step~\ref{K}: To show $K$ is well defined we need to establish that $K\big((V,W,F,\omega,\delta)\big) = (V,E,\mu_E,Q,\chi)$ is an AMT-type systems hypergraph for $(V,W,F,(\omega,\AA),(\delta,\VV)) \in \B$.

    Note that $\mu_E$ is a multiplicity function, $\widehat{\sigma}$ is a multihypergraph orientation function, $\widehat{\mu}$ is a multihypergraph multiset function, $\widehat{\rho}$ is a multihypergraph hyperedge label function, and $\widehat{\mu}^{\,\Minus / \Plus}$ is a multihypergraph negative/positive-oriented multiset function.

    We show the map $\chi$ is an AMT-type association function for $(V,E,\mu_E)$ with respect to $Q$. Let $w \in W$ and $z \in \T(w)$. Consider $\widehat{\sigma}$ and note that $\sigma_z \in \Dom(\widehat{\sigma}_{\N(w)})$. Then $\sum_{\{\, (S_x,<_x,\rnk_x) \in Q_{\N(w)} \mid \sigma_z \in S_x \,\}} \chi_{\N(w)}((S_x,<_x,\rnk_x)) = \sum_{x \in \T_{\text{dir}}(z)} \chi_{\N(w)}((S_x,<_x,\rnk_x)) = \sum_{x \in \T_{\text{dir}}(z)} \left| \T_{\text{dir}}(x) \cap \T_{\text{awgt}}(x) \cap \T_{\text{vwgt}}(x) \cap \T_{\text{in}}(x) \cap \T_{\text{out}}(x) \right| = \left| \T_{\text{dir}}(z) \right| = \widehat{\sigma}_e (\sigma_z)$, since the family of sets $\T_{\text{dir}}(x) \cap \T_{\text{awgt}}(x) \cap \T_{\text{vwgt}}(x) \cap \T_{\text{in}}(x) \cap \T_{\text{out}}(x)$ for $x \in \T_{\text{dir}}(z)$ is a partition of $\T_{\text{dir}}(z)$. Similar arguments hold for $ \widehat{\mu}$, $ \widehat{\rho}$, and $\widehat{\mu}^{\,\Minus / \Plus}$ with corresponding partitions of $\T_{\text{awgt}}(z)$, $\T_{\text{vwgt}}(z)$, and $\T_{\text{in}}(z) \cap \T_{\text{out}}(z)$, respectively. Further, $\Dom(\mu^\Minus_z) = \N_{\text{in}}(z) = \{\, v \in \N(z) \mid -1 \in \sigma_z(v) \,\}$, $\Dom(\mu^\Plus_z) = \N_{\text{out}}(z) = \{\, v \in \N(z) \mid +1 \in \sigma_z(v) \,\}$, $\mu^\Minus_z(v) = \omega(v,z) = \sum_{e \in F(v) \cap F(z)} \omega(e) = \mu_z(v)$ when $v \in \N(z)$ and $\sigma_z(v) = \{-1\}$, $\mu^\Plus_z(v) = \omega(z,v) = \sum_{e \in F(z) \cap F(v)} \omega(e) = \mu_z(v)$ when $v \in \N(z)$ and $\sigma_z(v) = \{+1\}$, and $\big(\mu^\Minus_z(v),\mu^\Plus_z(v)\big) \in C(\mu_z(v),2)$ since $\mu^\Minus_z(v) + \mu^\Plus_z(v) = \omega(v,z) + \omega(z,v) = \sum_{e \in F(v) \cap F(z)} \omega(e) = \mu_z(v)$ when $v \in \N(z)$ and $\sigma_z(v) = \{-1,+1\}$.

    \vspace{0.2cm}

    Step~\ref{Kr}: The map $K_r$ is well defined, so we show that $K \circ K_r$ is the identity map on $\Hh$. Consider $(K \circ K_r) \big((V,E,\mu_E,Q,\chi)\big) = K\big((V,W,F,\omega,\delta)\big) = (V,E^{\prime},\mu^{\prime}_{E^{\prime}},Q^{\prime},\chi^{\prime})$.

    For $E^{\prime} = E$, note that $\N\Big(\big(e,(S_e,<_e,\rnk_e),i\big)\Big) = e$ for $\big(e,(S_e,<_e,\rnk_e),i\big) \in W$.

    For $\mu^{\prime}_{E^{\prime}} = \mu_E$, let $e \in E^{\prime} = E$ and $w \in W$ with $e = \mathcal{N}(w)$, and it suffices to show $\mu^{\prime}_{E^{\prime}}(e) = \mu_E$(e). So $\mu^{\prime}_{E^{\prime}}(e) = \mu^{\prime}_{E^{\prime}}\big(\N(w)\big) = \left| \T(w) \right| = \left| \{\, x \in W \mid \N(x) = \N(w) \,\} \right| = \left| \{\, \big(f,(S_f,<_f,\rnk_f),i\big) \in W \mid f = e \,\} \right| = \bigl| \{\, \big(f,(S_f,<_f,\rnk_f),i\big) \in W \mid \text{$(S_f,<_f,\rnk_f) \in \chi_e^{-1}(\NN_+)$ and $i \in [\chi_e((S_f,<_f,\rnk_f))]$} \,\} \bigr| = \sum_{(S_f,<_f,\rnk_f) \in Q_e} \chi_e((S_f,<_f,\rnk_f)) = \mu_E(e)$.

    For $Q^{\prime} = Q$, since both hypergraphs are AMT type, it suffices to show $ \widehat{\sigma}^{\prime} = \widehat{\sigma}$, $\widehat{\mu}^{\prime} = \widehat{\mu}$, $\widehat{\rho}^{\prime} = \widehat{\rho}$, and $\widehat{\mu}^{\,\Minus / \Plus \; \prime} = \widehat{\mu}^{\,\Minus / \Plus}$.

    For $\widehat{\sigma}^{\prime} = \widehat{\sigma}$, let $e \in E^{\prime} = E$, and it suffices to show $\widehat{\sigma}^{\prime}_e = \widehat{\sigma}_e$. If $\sigma \in \widehat{\sigma}_e^{-1}(\NN_+)$ then, choosing some $(S_f,<_f,\rnk_f) \in \chi_e^{-1}(\NN_+)$ and $i \in [\chi_e((S_f,<_f,\rnk_f))]$ with $\sigma \in S_f$, let $w \defeq (e,(S_f,<_f,\rnk_f),i) \in W$. Note that $\N(w) = e$, $\N_{\text{in}}(w) = \{\, x \in e \mid -1 \in \sigma(x) \,\}$, and $\N_{\text{out}}(w) = \{\, x \in e \mid +1 \in \sigma(x) \,\}$, so $\sigma = \sigma_w$, hence $\sigma = \sigma_w \in \widehat{\sigma}^{\prime \, -1}_e(\NN_+)$. Conversely, suppose $\sigma_z \in \widehat{\sigma}^{\prime \, -1}_e (\NN_+)$ for some $z \in \T(x)$ where $x \in W$ with $\N(x) = e$. Then $z = (e,(S_g,<_g,\rnk_g),j) \in W$ for some $(S_g,<_g,\rnk_g) \in \chi_e^{-1}(\NN_+)$ and $j \in [\chi_e((S_g,<_g,\rnk_g))]$. As for the previous direction, noting $\N(z) = e$, we then have $\sigma_z = \sigma$ where $\sigma \in S_g$. Since $\widehat{\sigma}_e(\sigma) \ge \chi_e((S_g,<_g,\allowbreak\rnk_g)) > 0$ we have $\sigma_z = \sigma \in \widehat{\sigma}_e^{-1}(\NN_+)$. By these two arguments we therefore have $\widehat{\sigma}_e^{-1}(\NN_+) = \widehat{\sigma}^{\prime \, -1}_e(\NN_+)$. Now, let $\sigma_z \in \widehat{\sigma}^{\prime \, -1}_e(\NN_+) = \widehat{\sigma}_e^{-1}(\NN_+)$ for some $z \in W$ with $\N(z) = e$. Then $z = (e,(S_f,<_f,\rnk_f),i)$ for some $(S_f,<_f,\rnk_f) \in \chi_e^{-1}(\NN_+)$, and $i \in [\chi_e((S_f,<_f,\rnk_f))]$. Note that $\sigma_z = \sigma$ where $\sigma \in S_f$, so $\widehat{\sigma}^{\prime}_e(\sigma_z) = \left| \T_{\text{dir}}(z) \right| = \left| \{\, (e,(S_g,<_g,\rnk_g),j) \in W \mid \text{ $(S_g,<_g,\rnk_g) \in \chi_e^{-1}(\NN_+)$ with $\sigma_z \in S_g$ and $j \in [\chi_e((S_g,<_g,\rnk_g))]$} \,\} \right| = \sum_{\{\, (S_g,<_g,\rnk_g) \in Q_e \mid \sigma_z \in S_g \,\}}  \allowbreak \chi_e((S_g,<_g,\rnk_g)) = \widehat{\sigma}_e(\sigma_z)$. We conclude that $\widehat{\sigma}^{\prime}_e = \widehat{\sigma}_e$.

    For $\widehat{\mu}^{\prime} = \widehat{\mu}$, let $e \in E^{\prime} = E$, and it suffices to show $\widehat{\mu}^{\prime}_e = \widehat{\mu}_e$. If $\mu \in \widehat{\mu}_e^{-1}(\NN_+)$ then, choosing some $(S_f,<_f,\rnk_f) \in \chi_e^{-1}(\NN_+)$ and $i \in [\chi_e((S_f,<_f,\rnk_f))]$ with $\mu \in S_f$, let $w \defeq (e,(S_f,<_f,\rnk_f),i) \in W$. Let $S_f = \{\sigma,\mu,\rho,(\mu^{\Minus},\mu^{\Plus})\}$, and note that $\N(w) = e$. For $y \in \N(w)$ we have
    \begin{equation*}
            \mu_w(y) = \sum_{g \in F(w) \cap F(y)} \omega(g)
            = \begin{cases}
                \mu^\Minus(y) & \text{if $(y,w) \in F$ and $(w,y) \notin F$},\\
                \mu^\Plus(y) & \text{if $(w,y) \in F$ and $(y,w) \notin F$},\\
                \mu^\Minus(y) + \mu^\Plus(y) & \text{if $(y,w)$, $(w,y) \in F$}
            \end{cases}
            = \begin{cases}
                \mu^\Minus(y) & \text{if $\sigma(y) = \{-1\}$},\\
                \mu^\Plus(y) & \text{if $\sigma(y) = \{+1\}$},\\
                \mu^\Minus(y) + \mu^\Plus(y) & \text{if $\sigma(y) = \{-1,+1\}$}
            \end{cases}
            = \mu(y),
        \end{equation*}
    hence $\mu = \mu_w \in \widehat{\mu}^{\prime \, -1}_e(\NN_+)$. Conversely, suppose $\mu_z \in \widehat{\mu}^{\prime \, -1}_e (\NN_+)$ for some $z \in \T(x)$ where $x \in W$ with $\N(x) = e$. Then $z = (e,(S_g,<_g,\rnk_g),j) \in W$ for some $(S_g,<_g,\rnk_g) \in \chi_e^{-1}(\NN_+)$ and $j \in [\chi_e((S_g,<_g,\rnk_g))]$. As for the previous direction, noting $\N(z) = e$, we then have $\mu_z = \mu$ where $\mu \in S_g$. Since $\widehat{\mu}_e(\mu) \ge \chi_e((S_g,<_g,\allowbreak\rnk_g)) > 0$ we have $\mu_z = \mu \in \widehat{\mu}_e^{-1}(\NN_+)$. By these two arguments we therefore have $\widehat{\mu}_e^{-1}(\NN_+) = \widehat{\mu}^{\prime \, -1}_e(\NN_+)$. Now, let $\mu_z \in \widehat{\mu}^{\prime \, -1}_e(\NN_+) = \widehat{\mu}_e^{-1}(\NN_+)$ for some $z \in W$ with $\N(z) = e$. Then $z = (e,(S_f,<_f,\rnk_f),i)$ for some $(S_f,<_f,\rnk_f) \in \chi_e^{-1}(\NN_+)$, and $i \in [\chi_e((S_f,<_f,\rnk_f))]$. Note that $\mu_z = \mu$ where $\mu \in S_f$, so $\widehat{\mu}^{\prime}_e(\mu_z) = \left| \T_{\text{awgt}}(z) \right| = \left| \{\, (e,(S_g,<_g,\rnk_g),j) \in W \mid \text{ $(S_g,<_g,\rnk_g) \in \chi_e^{-1}(\NN_+)$ with $\mu_z \in S_g$ and $j \in [\chi_e((S_g,<_g,\rnk_g))]$} \,\} \right| = \sum_{\{\, (S_g,<_g,\rnk_g) \in Q_e \mid \mu_z \in S_g \,\}} \chi_e((S_g,<_g,\rnk_g)) = \widehat{\mu}_e(\mu_z)$. We conclude that $\widehat{\mu}^{\prime}_e = \widehat{\mu}_e$.

    For $\widehat{\rho}^{\prime} = \widehat{\rho}$, let $e \in E^{\prime} = E$, and it suffices to show $\widehat{\rho}^{\prime}_e = \widehat{\rho}_e$. If $\rho \in \widehat{\rho}_e^{-1}(\NN_+)$ then, choosing some $(S_f,<_f,\rnk_f) \in \chi_e^{-1}(\NN_+)$ and $i \in [\chi_e((S_f,<_f,\rnk_f))]$ with $\rho \in S_f$, let $w \defeq (e,(S_f,<_f,\rnk_f),i) \in W$. Note that $\N(w) = e$. Then $\rho_w(e) = \delta(w) = \rho(e)$, hence $\rho = \rho_w \in \widehat{\rho}^{\prime \, -1}_e(\NN_+)$. Conversely, suppose $\rho_z \in \widehat{\rho}^{\prime \, -1}_e (\NN_+)$ for some $z \in \T(x)$ where $x \in W$ with $\N(x) = e$. Then $z = (e,(S_g,<_g,\rnk_g),j) \in W$ for some $(S_g,<_g,\rnk_g) \in \chi_e^{-1}(\NN_+)$ and $j \in [\chi_e((S_g,<_g,\rnk_g))]$. As for the previous direction, noting $\N(z) = e$, we then have $\rho_z = \rho$ where $\rho \in S_g$. Since $\widehat{\rho}_e(\rho) \ge \chi_e((S_g,<_g,\rnk_g)) > 0$ we have $\rho_z = \rho \in \widehat{\rho}_e^{-1}(\NN_+)$. By these two arguments we therefore have $\widehat{\rho}_e^{-1}(\NN_+) = \widehat{\rho}^{\prime \, -1}_e(\NN_+)$. Now, let $\rho_z \in \widehat{\rho}^{\prime \, -1}_e(\NN_+) = \widehat{\rho}_e^{-1}(\NN_+)$ for some $z \in W$ with $\N(z) = e$. Then $z = (e,(S_f,<_f,\rnk_f),i)$ for some $(S_f,<_f,\rnk_f) \in \chi_e^{-1}(\NN_+)$, and $i \in [\chi_e((S_f,<_f,\rnk_f))]$. Note that $\rho_z = \rho$ where $\rho \in S_f$, so $\widehat{\rho}^{\prime}_e(\rho_z) = \left| \T_{\text{vwgt}}(z) \right| = \left| \{\, (e,(S_g,<_g,\rnk_g),j) \in W \mid \text{ $(S_g,<_g,\rnk_g) \in \chi_e^{-1}(\NN_+)$ with $\rho_z \in S_g$ and $j \in [\chi_e((S_g,<_g,\rnk_g))]$} \,\} \right| = \sum_{\{\, (S_g,<_g,\rnk_g) \in Q_e \mid \rho_z \in S_g \,\}} \chi_e((S_g,<_g,\rnk_g)) = \widehat{\rho}_e(\rho_z)$. We conclude that $\widehat{\rho}^{\prime}_e = \widehat{\rho}_e$.

    For $\widehat{\mu}^{\,\Minus / \Plus \; \prime} = \widehat{\mu}^{\,\Minus / \Plus}$, let $e \in E^{\prime} = E$, and it suffices to show $\widehat{\mu}^{\,\Minus / \Plus \; \prime}_e = \widehat{\mu}^{\,\Minus / \Plus}_e$. If $(\mu^\Minus,\mu^\Plus) \in \widehat{\mu}^{\,\Minus / \Plus \, -1}_e(\NN_+)$ then, choosing some $(S_f,<_f,\rnk_f) \in \chi_e^{-1}(\NN_+)$ and $i \in [\chi_e((S_f,<_f,\rnk_f))]$ with $(\mu^\Minus,\mu^\Plus) \in S_f$, let $w \defeq (e,(S_f,<_f,\rnk_f),i) \in W$. Note that $\N(w) = e$ and $\Dom(\mu^\Minus_w) = \N_{\text{in}}(w) = \{\, v \in e \mid -1 \in \sigma(v) \,\} = \Dom(\mu^\Minus)$. For $y \in \N_{\text{in}}(w)$ we have $\mu^\Minus_w(y) = \omega(y,w) = \mu^\Minus(y)$, hence $\mu^\Minus_w = \mu^\Minus$. Similarly, $\mu^\Plus_w = \mu^\Plus$. Hence $(\mu^\Minus,\mu^\Plus) = (\mu^\Minus_w,\mu^\Plus_w) \in \widehat{\mu}^{\,\Minus / \Plus \; \prime \; -1}_e(\NN_+)$. Conversely, suppose $(\mu^\Minus_z,\mu^\Plus_z) \in \widehat{\mu}^{\,\Minus / \Plus \; \prime \; -1}_e(\NN_+)$ for some $z \in \T(x)$ where $x \in W$ with $\N(x) = e$. Then $z = (e,(S_g,<_g,\rnk_g),j) \in W$ for some $(S_g,<_g,\rnk_g) \in \chi_e^{-1}(\NN_+)$ and $j \in [\chi_e((S_g,<_g,\rnk_g))]$. As for the previous direction, noting $\N(z) = e$, we then have $(\mu^\Minus_z,\mu^\Plus_z) = (\mu^\Minus,\mu^\Plus)$ where $(\mu^\Minus,\mu^\Plus) \in S_g$. Since $\widehat{\mu}^{\,\Minus / \Plus}_e \big((\mu^\Minus,\mu^\Plus)\big) \ge \chi_e((S_g,<_g,\rnk_g)) > 0$ we have $(\mu^\Minus_z,\mu^\Plus_z) = (\mu^\Minus,\mu^\Plus) \in \widehat{\mu}^{\,\Minus / \Plus \; -1}_e(\NN_+)$. By these two arguments we therefore have $\widehat{\mu}^{\,\Minus / \Plus \; -1}_e(\NN_+) = \widehat{\mu}^{\,\Minus / \Plus \; \prime \; -1}_e(\NN_+)$. Now, let $(\mu^\Minus_z,\mu^\Plus_z) \in \widehat{\mu}^{\,\Minus / \Plus \; \prime \; -1}_e(\NN_+) = \widehat{\mu}^{\,\Minus / \Plus \; -1}_e(\NN_+)$ for some $z \in W$ with $\N(z) = e$. Then $z = (e,(S_f,<_f,\rnk_f),i)$ for some $(S_f,<_f,\rnk_f) \in \chi_e^{-1}(\NN_+)$, and $i \in [\chi_e((S_f,<_f,\rnk_f))]$. Note that $(\mu^\Minus_z,\mu^\Plus_z) = (\mu^\Minus,\mu^\Plus)$ where $(\mu^\Minus,\mu^\Plus) \in S_f$, so $\widehat{\mu}^{\,\Minus / \Plus \; \prime}_e \big((\mu^\Minus_z,\mu^\Plus_z)\big) = \left| \T_{\text{in}}(z) \cap \T_{\text{out}}(z) \right| = \bigl| \{\, (e,(S_g,<_g,\rnk_g),j) \in W \mid \text{ $(S_g,<_g,\rnk_g) \in \chi_e^{-1}(\NN_+)$ with $(\mu^\Minus_z,\mu^\Plus_z) \in S_g$ and $j \in [\chi_e((S_g,<_g,\rnk_g))]$} \,\} \bigr|  =  \sum_{\{\, (S_g,<_g,\rnk_g) \in  Q_e \mid (\mu^\Minus_z,\mu^\Plus_z) \in S_g \,\}}  \chi_e((S_g,\allowbreak<_g,\rnk_g)) = \widehat{\mu}^{\,\Minus / \Plus}_e\big((\mu^\Minus_z,\mu^\Plus_z)\big)$. We conclude that $\widehat{\mu}^{\,\Minus / \Plus \; \prime}_e = \widehat{\mu}^{\,\Minus / \Plus}_e$. It follows that $Q^{\prime} = Q$.

    For $\chi^{\prime} = \chi$, let $e \in E^{\prime} = E$, and it suffices to show $\chi^{\prime}_e = \chi_e$. If $(S_e,<_e,\rnk_e) \in \chi_e^{-1}(\NN_+)$ then, choosing some $i \in [\chi_e((S_e,<_e,\rnk_e))]$, let $w \defeq (e,(S_e,<_e,\rnk_e),i) \in W$. Note that $\N(w) = e$, and denote $S_e = \{\sigma,\mu,\rho,(\mu^{\Minus},\mu^{\Plus})\}$. Since $(S_e,<_e,\rnk_e) \in \chi_e^{-1}(\NN_+)$, the definition of an association function implies $\sigma \in \widehat{\sigma}_e^{-1}(\NN_+)$, $\mu \in \widehat{\mu}_e^{-1}(\NN_+)$, $\rho \in \widehat{\rho}_e^{-1}(\NN_+)$, and $(\mu^\Minus,\mu^\Plus) \in \widehat{\mu}^{\,\Minus / \Plus \, -1}_e(\NN_+)$. It follows from the previous arguments that $\sigma = \sigma_w \in \widehat{\sigma}^{\prime \, -1}_e(\NN_+)$, $\mu = \mu_w \in \widehat{\mu}^{\prime \, -1}_e(\NN_+)$, $\rho = \rho_w \in \widehat{\rho}^{\prime \, -1}_e(\NN_+)$, and $(\mu^\Minus,\mu^\Plus) = (\mu^\Minus_w,\mu^\Plus_w) \in \widehat{\mu}^{\,\Minus / \Plus \; \prime \; -1}_e(\NN_+)$, therefore $(S_e,<_e,\rnk_e) = (S_w,<_w,\rnk_w) \in \chi^{\prime \, -1}_e (\NN_+)$. Conversely, suppose $(S_z,<_z,\rnk_z) \in \chi^{\prime \, -1}_e (\NN_+)$ for some $z \in \T(x)$ where $x \in W$ with $\N(x) = e$. Then $z = (e,(S_g,<_g,\rnk_g),j) \in W$ for some $(S_g,<_g,\rnk_g) \in \chi_e^{-1}(\NN_+)$ and $j \in [\chi_e((S_g,<_g,\rnk_g))]$. As for the previous direction, noting $\N(z) = e$, we then have $(S_z,<_z,\rnk_z) = (S_g,<_g,\rnk_g) \in \chi^{-1}_e (\NN_+)$. By these two arguments we therefore have $\chi_e^{-1}(\NN_+) = \chi^{\prime \, -1}_e (\NN_+)$. Now, let $(S_z,<_z,\rnk_z) \in \chi^{\prime \, -1}_e (\NN_+) = \chi_e^{-1} (\NN_+)$ for some $z \in W$ with $\N(z) = e$. Then $z = (e,(S_f,<_f,\rnk_f),i)$ for some $(S_f,<_f,\rnk_f) \in \chi_e^{-1}(\NN_+)$, and $i \in [\chi_e((S_f,<_f,\rnk_f))]$. Note that $(S_z,<_z,\rnk_z) = (S_f,<_f,\rnk_f)$, so $\chi^{\prime}_e ((S_z,<_z,\rnk_z)) = \left| \T_{\text{dir}}(z) \cap \T_{\text{awgt}}(z) \cap \T_{\text{vwgt}}(z) \cap \T_{\text{in}}(z) \cap \T_{\text{out}}(z) \right| = \left| \{\, (e,(S_z,<_z,\rnk_z),j) \in W \mid j \in [\chi_e((S_z,<_z,\rnk_z))] \,\} \right| = \chi_e((S_z,<_z,\rnk_z))$. We conclude that $\chi^{\prime}_e = \chi_e $.

    Therefore, $(V,E,\mu_E,Q,\chi) = (V,E^{\prime},\mu^{\prime}_{E^{\prime}},Q^{\prime},\chi^{\prime})$, and it follows that $K \circ K_r$ is the identity map on $\Hh$.

    \vspace{0.2cm}

    Step~\ref{KrK}: Let $B \defeq (V,W,F,(\omega,\AA),(\delta,\VV)) \in \B$, and define $B^{\prime} \defeq K_r \circ K(B) = K_r\big((V,E,\mu_E,Q,\chi)\big) = (V,W^{\prime},F^{\prime},(\omega^{\prime},\AA),(\delta^{\prime},\VV)) \in \B$. We show that $B \cong_1 B^{\prime}$, which implies $B \cong_0 B^{\prime}$. First note that the family of sets consisting of each set $S(x) \defeq \T_{\text{dir}}(x) \cap \T_{\text{awgt}}(x) \cap \T_{\text{vwgt}}(x) \cap \T_{\text{in}}(x) \cap \T_{\text{out}}(x)$ for $x \in W$ is such that any two sets are either the same or disjoint. We can therefore choose some $I \subseteq W$ such that $S(x) \ne S(y)$ when $x$, $y \in I$ with $x \ne y$, and $\big\{S(x)\big\}_{x \in I}$ is a partition of $W$. Further, for each $z \in I$, note that $\left| S(z) \right| = \chi_{\N(z)}\big((S_z,<_z,\rnk_z)\big)$ so there exists a bijection $\beta_z \colon S(z) \to \Big\{\, \big(\N(z),(S_z,<_z,\rnk_z),i\big) \mid i \in [\chi_{\N(z)}\big((S_z,<_z,\rnk_z)\big)] \,\Big\} \subseteq W^{\prime}$. So let $\beta \colon W \to W^{\prime}$ be the bijection such that $\beta|_{S(z)} = \beta_z$ for $z \in I$, noting that it follows from the definition of $W^{\prime}$ that $\beta$ is surjective. We extend $\beta$ to a bijection $\alpha \colon V \cup W \to V \cup W^{\prime}$ such that $\alpha|_V \colon V \to V$ is the identity map, and show that the vertex map $\alpha$ is a type-1 directed bipartite graph isomorphism from $B$ to $B^{\prime}$.

    For arc preservation, let $(v,w) \in F \cap (V \times W)$. There exists $x \in I$ such that $w \in S(x)$, noting $v \in \N(w) = \N(x)$, so $\beta|_{S(x)}(w) = (\N(x),(S_x,<_x,\rnk_x),i)$ for some $i \in [\chi_{\N(x)}((S_x,<_x,\rnk_x))]$. Since $v \in \N_{\text{in}}(w)$, and since $w \in S(x)$ implies $\N_{\text{in}}(x) = \N_{\text{in}}(w)$, we have $v \in \N_{\text{in}}(x)$, so $-1 \in \sigma_x(v)$. It follows that $\alpha\big((v,w)\big) = \big(\alpha(v),\alpha(w)\big) = \big(v,\beta|_{S(x)}(w)\big) = \big(v,(\N(x),(S_x,<_x,\rnk_x),i)\big) \in F^{\prime}$. A similar argument holds when $(w,v) \in F \cap (W\times V)$. Conversely, let $(v,(\N(w),(S_z,<_z,\rnk_z),i)) \in F^{\prime} \cap (V \times W^{\prime})$ where $w \in W$, $\N(w) \in E$, $v \in \N(w)$, $z \in \T(w)$, $(S_z,<_z,\rnk_z) \in \chi_{\N(w)}^{-1}(\NN_+)$, $i \in [\chi_{\N(w)}((S_z,<_z,\rnk_z))]$, and $-1 \in \sigma_z(v)$. Choose $y \in I$ such that $z \in S(y)$, noting $(S_z,<_z,\rnk_z) = (S_y,<_y,\rnk_y)$. Choose $s \in S(y)$ such that $\beta|_{S(y)}(s) = (\N(y),(S_y,<_y,\rnk_y),i) = (\N(w),(S_z,<_z,\rnk_z),i)$, where the last equality uses $\N(y) = \N(z) = \N(w)$. Note that $z$, $s \in S(y)$ implies $\sigma_s = \sigma_z$. Then $\alpha^{-1}\Big((v,(\N(w),(S_z,<_z,\allowbreak\rnk_z),i))\Big) = \Big((\alpha^{-1}(v),\alpha^{-1}((\N(w),(S_z,<_z,\rnk_z),i)))\Big) = \Big(v,\beta|_{S(y)}^{-1}\big((\N(w),(S_z,<_z,\rnk_z),i)\big)\Big) = (v,s) \in F$, where the membership holds since $v \in \N(w) = \N(y) = \N(s)$, and since $-1 \in \sigma_z(v) = \sigma_s(v)$. A similar argument holds when $((\N(w),(S_z,<_z,\rnk_z),i),v) \in F^{\prime} \cap (W^{\prime} \times V)$. Therefore $\alpha$ preserves arcs.

    For arc weight preservation, let $(v,w) \in F \cap (V \times W)$. There exists $x \in I$ such that $w \in S(x)$, so $\alpha(w) = \beta|_{\S(x)}(w) = (\N(x),(S_x,<_x,\rnk_x),i) \in W^{\prime}$ for some $i \in [\chi_{\N(x)}\big((S_x,<_x,\rnk_x)\big)]$. Then $F^{\prime} \owns \alpha\big((v,w)\big) = \big(v,\alpha(w)\big) = \big(v,(\N(x),(S_x,<_x,\rnk_x),i)\big)$, so $\omega^{\prime}\big(\alpha\big((v,w)\big)\big) = \mu^\Minus_x(v) = \mu^\Minus_w(v) = \omega(v,w)$, where the second equality holds since $w \in S(x)$ implies $(S_w,<_w,\rnk_w) = (S_x,<_x,\rnk_x)$. A similar argument holds when $(w,v) \in F \cap (W \times V)$. Therefore $\alpha$ preserves arc weights. Note also that $\Codom(\omega^{\prime}) = \Codom(\omega) = \AA$.

    For second-part weight preservation, since $\Dom(\delta) = W$, $\Dom(\delta^{\prime}) = W^{\prime}$, and $\alpha(W) = W^{\prime}$, we have $\alpha\big(\Dom(\delta)\big) = \Dom(\delta^{\prime})$. Note also that $\Codom(\delta^{\prime}) = \Codom(\delta) = \VV$. Now, if $w \in W$ then there exists $x \in I$ such that $w \in S(x)$, so $\alpha(w) = \beta|_{\S(x)}(w) = (\N(x),(S_x,<_x,\rnk_x),i) \in W^{\prime}$ for some $i \in [\chi_{\N(x)}\big((S_x,<_x,\rnk_x)\big)]$. Noting $(\N(w),(S_w,\allowbreak<_w,\rnk_w),i) = (\N(x),(S_x,<_x,\rnk_x),i)$, we have $\delta^{\prime}\big(\alpha(w)\big) = \delta^{\prime}\big((\N(w),(S_w,<_w,\rnk_w),i)\big) = \rho_w\big(\N(w)\big) = \delta(w)$. Therefore $\alpha$ preserves second-part weights.

    Finally, since $\alpha|_V \colon V \to V$ is the identity map the vertex labels in $V$ are preserved. We conclude that $\alpha$ is a type-1 isomorphism, hence $B \cong_1 B^{\prime}$ and $B \cong_0 B^{\prime}$.

    \vspace{0.2cm}

    Step~\ref{eqK}: Suppose $t \in \{0,1\}$. To show that $\widetilde{K} \colon \widetilde{\B} \to \widetilde{\Hh}$ is well-defined, suppose $B \defeq (V,W,F,(\omega,\AA),(\delta,\VV))$ and $B^{\prime} \defeq (V^{\prime},W^{\prime},F^{\prime},(\omega^{\prime},\AA^{\prime}),(\delta^{\prime},\VV^{\prime}))$ are in $\B$ and $\gamma \colon B \to B^{\prime}$ is a type-t directed bipartite graph isomorphism. Note, in particular, that we therefore have $\AA^{\prime} = \AA$ and $\VV^{\prime} = \VV$. Let $H \defeq K(B) = (V,E,\mu_E,Q,\chi)$ where $Q \defeq (S,<,\rnk)$ and $\chi \colon E \to \bigcup_{e \in E} \{ Q_e \rightarrow \NN \}$, and $H^{\prime} \defeq K(B^{\prime}) = (V^{\prime},E^{\prime},\mu^{\prime}_{E^{\prime}},Q^{\prime},\chi^{\prime})$ where $Q^{\prime} \defeq (S^{\prime},<^{\prime},\rnk^{\prime})$ and $\chi^{\prime} \colon E^{\prime} \to \bigcup_{e^{\prime} \in E^{\prime}} \{ Q^{\prime}_{e^{\prime}} \rightarrow \NN \}$. If $t=1$ then $V^{\prime} = V$, and it follows that $H^{\prime} = H$, so the systems hypergraphs $H$ and $H^{\prime}$ are (trivially) type-$1$ isomorphic under the identity maps on $V = V^{\prime}$ and $S = S^{\prime}$. We now consider the case when $t=0$. Let $\phi \colon V \to V^{\prime}$ be the bijection such that $\phi \defeq \gamma|_V$, and let $\psi \colon S \to S^{\prime}$ be the bijection such that $\psi(\widehat{\sigma}) = \widehat{\sigma}^{\prime}$, $\psi(\widehat{\mu}) = \widehat{\mu}^{\prime}$, $\psi(\widehat{\rho}) = \widehat{\rho}^{\prime}$, and $\psi(\widehat{\mu}^{\,\Minus / \Plus}) = \widehat{\mu}^{\,\Minus / \Plus \; \prime}$. We show that $(\phi,\psi)$ is a type-$0$ systems hypergraph isomorphism of $H$ and $H^{\prime}$.

    We begin by showing that $\phi \colon (V,E,\mu_E) \to (V^{\prime},E^{\prime},\mu^{\prime}_{E^{\prime}})$ is a multihypergraph isomorphism. For hyperedge preservation, let $e \subseteq V$. Then $e \in E$ implies $e = \N(w)$ for some $w \in W$, so $\phi(e) = \phi\big(\N(w)\big) = \gamma\big(\N(w)\big) = \N\big(\gamma(w)\big) \in E^{\prime}$. Conversely, if $\phi(e) \in E^{\prime}$ then $\phi(e) = \N\big(\gamma(x)\big)$ for some $x \in W$, so $\gamma(e) = \gamma\big(\N(x)\big)$, hence $e = \N(x) \in E$. So the bijection $\phi$ preserves hyperedges, hence is an isomorphism. For multihyperedge preservation, if $e \in E$ then there exists $w \in W$ such that $e = \N(w)$, and noting that $\gamma(e) = \N\big(\gamma(w)\big)$ and $\left|\T(w)\right| = \left|\T\big(\gamma(w)\big)\right|$ since $\gamma$ is an isomorphism, we have $\mu^{\prime}_{E^{\prime}}\big(\phi(e)\big) = \mu^{\prime}_{E^{\prime}}\big(\gamma(e)\big) = \mu^{\prime}_{E^{\prime}}\big(\N\big(\gamma(w)\big)\big) = \left|\T\big(\gamma(w)\big)\right| = \left|\T(w)\right| = \mu_E\big(\N(w)\big) = \mu_E(e)$. Therefore, $\phi$ is a multihypergraph isomorphism.

    We show that $\psi \colon S \to S^{\prime}$ is an order isomorphism of the graded posets $Q$ and $Q^{\prime}$. By definition, $\psi$ is a bijection. Further, since $Q$ and $Q^{\prime}$ are both AMT type, it follows that $\psi$ is order preserving, order reflecting, and rank preserving. Therefore $\psi$ is an order isomorphism.

    We show that the distribution functions are preserved. Note that the domain functions are preserved. So let $e \in E$. Consider $\widehat{\sigma} \in S$. Let $w \in W$ with $e = \N(w)$ and let $z \in \T(w)$, so that $\sigma_z \in \widehat{\sigma}_e^{-1} (\NN_+)$. We have $\Phi_{\hat{\sigma},e}(\sigma_z) = \sigma^{\prime}_{\gamma(z)}$, since $v \in \N(w)$ gives $\Phi_{\hat{\sigma},e}(\sigma_z)\big(\phi(v)\big) = \sigma_z \circ \phi^{-1} \big(\phi(v)\big) = \sigma_z(v) = \sigma^{\prime}_{\gamma(z)}\big(\gamma(v)\big) = \sigma^{\prime}_{\gamma(z)}\big(\phi(v)\big)$, where the first equality follows by the definition of $\Phi_{\hat{\sigma},e}$ and the third equality follows since $\gamma$ is an isomorphism. Then $\widehat{\sigma}^{\prime}_{\phi(e)}\big(\Phi_{\hat{\sigma},e}(\sigma_z)\big) = \widehat{\sigma}^{\prime}_{\phi(e)}(\sigma^{\prime}_{\gamma(z)}) = \widehat{\sigma}^{\prime}_{\N(\gamma(w))}(\sigma^{\prime}_{\gamma(z)}) = \left| \T_{\text{dir}}\big(\gamma(z)\big)\right| = \left|\T_{\text{dir}}(z)\right| = \widehat{\sigma}_e(\sigma_z)$. We conclude that $\widehat{\sigma}$ is preserved. The proofs for $\widehat{\mu}$ and $\widehat{\rho}$ in $S$ are similar so we omit the details. Consider $\widehat{\mu}^{\,\Minus / \Plus} \in S$. Let $w \in W$ with $e = \N(w)$ and let $z \in \T(w)$, so that $(\mu^\Minus_z,\mu^\Plus_z) \in \widehat{\mu}^{\,\Minus / \Plus \; -1}_e (\NN_+)$. Note that $\phi\big(\Dom(\mu^\Minus_z)\big) = \phi\big(\N_{\text{in}}(z)\big) = \gamma\big(\N_{\text{in}}(z)\big) = \N_{\text{in}}\big(\gamma(z)\big) = \Dom\big(\mu^{\Minus \; \prime}_{\gamma(z)}\big)$, and $\phi\big(\Dom(\mu^\Plus_z)\big) = \phi\big(\N_{\text{out}}(z)\big) = \gamma\big(\N_{\text{out}}(z)\big) = \N_{\text{out}}\big(\gamma(z)\big) = \Dom\big(\mu^{\Plus \; \prime}_{\gamma(z)}\big)$. We have $\Phi_{\hat{\mu}^{\,\Minus / \Plus},e}\big((\mu^\Minus_z,\mu^\Plus_z)\big) = (\mu^{\Minus \; \prime}_{\gamma(z)},\mu^{\Plus \; \prime}_{\gamma(z)})$, since $v \in \N(w)$ gives $\Phi_{\hat{\mu}^{\,\Minus / \Plus},e}\big((\mu^\Minus_z,\mu^\Plus_z)\big)\big(\phi(v)\big) = (\mu^\Minus_z \circ \phi^{-1},\mu^\Plus_z \circ \phi^{-1})\big(\phi(v)\big) = (\mu^\Minus_z,\mu^\Plus_z)(v) = (\mu^{\Minus \; \prime}_{\gamma(z)},\mu^{\Plus \; \prime}_{\gamma(z)})\big(\gamma(v)\big) = (\mu^{\Minus \; \prime}_{\gamma(z)},\mu^{\Plus \; \prime}_{\gamma(z)})\big(\phi(v)\big)$, where the first equality follows by the definition of $\Phi_{\hat{\mu}^{\,\Minus / \Plus},e}$ and the third equality follows since $\gamma$ is an isomorphism. Then $\widehat{\mu}^{\,\Minus / \Plus \; \prime}_{\phi(e)}\big(\Phi_{\hat{\mu}^{\,\Minus / \Plus},e}\big((\mu^\Minus_z,\mu^\Plus_z)\big)\big) = \widehat{\mu}^{\,\Minus / \Plus \; \prime}_{\phi(e)}\big((\mu^{\Minus \; \prime}_{\gamma(z)},\mu^{\Plus \; \prime}_{\gamma(z)})\big) = \widehat{\mu}^{\,\Minus / \Plus \; \prime}_{\N(\gamma(w))}\big((\mu^{\Minus \; \prime}_{\gamma(z)},\mu^{\Plus \; \prime}_{\gamma(z)})\big) = \left| \T_{\text{in}}\big(\gamma(z)\big) \cap \T_{\text{out}}\big(\gamma(z)\big) \right| = \left| \T_{\text{in}}(z) \cap \T_{\text{out}}(z) \right| = \widehat{\mu}^{\,\Minus / \Plus}_e\big((\mu^\Minus_z,\mu^\Plus_z)\big)$. We conclude that $\widehat{\mu}^{\,\Minus / \Plus}$ is preserved.

    We now show that the association function is preserved. Let $w \in W$ with $e = \N(w)$ and let $z \in \T(w)$, so that $(S_z,<_z,\rnk_z) \in \chi_e^{-1} (\NN_+)$, where $S_z \defeq \{ \sigma_z,\mu_z,\rho_z,(\mu_z^\Minus,\mu_z^\Plus) \}$, $\rnk_z \colon S_z \to \NN$ with $\rnk_z(\sigma_z) = \rnk_z(\mu_z) = \rnk_z(\rho_z) = 0$ and $\rnk_z((\mu_z^\Minus,\mu_z^\Plus)) = 1$, and the comparable attribute sequences are $\sigma_z <_z (\mu_z^\Minus,\mu_z^\Plus)$ and $\mu_z <_z (\mu_z^\Minus,\mu_z^\Plus)$. Since $\Phi_{\hat{\sigma},e}(\sigma_z) = \sigma^{\prime}_{\gamma(z)}$, $\Phi_{\hat{\mu},e}(\mu_z) = \mu^{\prime}_{\gamma(z)}$, $\Phi_{\hat{\rho},e}(\rho_z) = \rho^{\prime}_{\gamma(z)}$, and $\Phi_{\hat{\mu}^{\,\Minus / \Plus},e}\big((\mu^\Minus_z,\mu^\Plus_z)\big) = (\mu^{\Minus \; \prime}_{\gamma(z)},\mu^{\Plus \; \prime}_{\gamma(z)})$, we have $\Psi_e\big((S_z,<_z,\rnk_z)\big) = (S^{\prime}_{\gamma(z)},<^{\prime}_{\gamma(z)},\rnk^{\prime}_{\gamma(z)})$, where $S^{\prime}_{\gamma(z)} \defeq \{ \sigma^{\prime}_{\gamma(z)},\mu^{\prime}_{\gamma(z)},\rho^{\prime}_{\gamma(z)},(\mu_{\gamma(z)}^{\Minus \; \prime},\mu_{\gamma(z)}^{\Plus \; \prime}) \}$, $\rnk^{\prime}_{\gamma(z)} \colon S^{\prime}_{\gamma(z)} \to \NN$ with $\rnk^{\prime}_{\gamma(z)}(\sigma^{\prime}_{\gamma(z)}) = \rnk^{\prime}_{\gamma(z)}(\mu^{\prime}_{\gamma(z)}) = \rnk^{\prime}_{\gamma(z)}(\rho^{\prime}_{\gamma(z)}) = 0$ and $\rnk^{\prime}_{\gamma(z)}((\mu_{\gamma(z)}^{\Minus \; \prime},\mu_{\gamma(z)}^{\Plus \; \prime})) = 1$, and the comparable attribute sequences are $\sigma^{\prime}_{\gamma(z)} <^{\prime}_{\gamma(z)} (\mu_{\gamma(z)}^{\Minus \; \prime},\mu_{\gamma(z)}^{\Plus \; \prime})$ and $\mu^{\prime}_{\gamma(z)} <^{\prime}_{\gamma(z)} (\mu_{\gamma(z)}^{\Minus \; \prime},\mu_{\gamma(z)}^{\Plus \; \prime})$. Now, $\chi^{\prime}_{\phi(e)}\big(\Psi_e\big((S_z,<_z,\rnk_z)\big)\big) = \chi^{\prime}_{\phi(e)}\big((S^{\prime}_{\gamma(z)},<^{\prime}_{\gamma(z)},\rnk^{\prime}_{\gamma(z)})\big) = \chi^{\prime}_{\N(\gamma(w))}\big((S^{\prime}_{\gamma(z)},<^{\prime}_{\gamma(z)},\rnk^{\prime}_{\gamma(z)})\big) = \linebreak \left| \T_{\text{dir}}(\gamma(z)) \cap \T_{\text{awgt}}(\gamma(z)) \cap \T_{\text{vwgt}}(\gamma(z)) \cap \T_{\text{in}}(\gamma(z)) \cap \T_{\text{out}}(\gamma(z)) \right| = \left| \T_{\text{dir}}(z) \cap \T_{\text{awgt}}(z) \cap \T_{\text{vwgt}}(z) \cap \T_{\text{in}}(z) \cap \T_{\text{out}}(z) \right| = \chi_{\N(w)}\big((S_z,<_z,\rnk_z)\big) = \chi_e\big((S_z,<_z,\rnk_z)\big)$, as required.

    We conclude that $(\phi,\psi)$ is a type-$0$ systems hypergraph isomorphism of $H$ and $H^{\prime}$.

    \vspace{0.2cm}

    Step~\ref{eqKinv}: Suppose $t \in \{0,1\}$. To show that $\widetilde{L} \colon \widetilde{\Hh} \to \widetilde{\B}$ is well-defined, suppose $H \defeq (V,E,\mu_E,Q,\chi)$ with $S = \{(\widehat{\sigma},\I),(\widehat{\mu},\AA),(\widehat{\rho},\VV),(\widehat{\mu}^{\,\Minus / \Plus},\AA)\}$ and $H^{\prime} \defeq (V^{\prime},E^{\prime},\mu^{\prime}_{E^{\prime}},Q^{\prime},\chi^{\prime})$ with $S^{\prime} = \{(\widehat{\sigma}^{\prime},\I),(\widehat{\mu}^{\prime},\AA^{\prime}),(\widehat{\rho}^{\prime},\VV^{\prime}),(\widehat{\mu}^{\,\Minus / \Plus \; \prime},\AA^{\prime})\}$ are in $\Hh$, and $(\phi,\psi) \colon H \to H^{\prime}$ is a type-$t$ systems hypergraph isomorphism of $H$ and $H^{\prime}$. Note, in particular, that we therefore have $\AA^{\prime} = \AA$ and $\VV^{\prime} = \VV$. Let $B \defeq K_r(H) = (V,W,F,(\omega,\AA),(\delta,\VV))$ and $B^{\prime} \defeq K_r(H^{\prime}) = (V^{\prime},W^{\prime},F^{\prime},(\omega^{\prime},\AA),(\delta^{\prime},\VV))$. If $t=1$ then $[H]_t = [H]_1 = \{H\}$ is a singleton, and it follows that $\widetilde{L}$ is well-defined. We now consider the case when $t = 0$. Note that $E^{\prime} = \phi(E)$ and $S^{\prime} = \psi(S)$. Let $\tau \colon W \to W^{\prime}$ be the map such that $\tau\big(\big(e,(S_e,<_e,\rnk_e),i\big)\big) \defeq \big(\phi(e),\Psi_e \big((S_e,<_e,\rnk_e)\big),i\big)$ for $\big(e,(S_e,<_e,\rnk_e),i\big) \in W$. The map $\tau$ is well defined since, for $(S_e,<_e,\rnk_e) \in Q_e$, the preservation of the association function gives $\chi^{\prime}_{\phi(e)}\big(\Psi_e \big((S_e,<_e,\rnk_e)\big)\big) = \chi_e((S_e,<_e,\allowbreak\rnk_e))$, and it follows that $\big(\phi(e),\Psi_e \big((S_e,<_e,\rnk_e)\big),i\big) \in W^{\prime}$. Further, since $\phi$ and $\Psi_e$ are bijections it follows that $\tau$ is a bijection. We show that the bijection $\beta \colon V \cup W \to V^{\prime} \cup W^{\prime}$ such that $\beta|_V = \phi$ and $\beta|_W = \tau$ is a type-$0$ directed bipartite graph isomorphism.

    For arc preservation, let $(v,w) \in V \times W$ for some $w \defeq \big(e,(S_e,<_e,\rnk_e),i\big) \in W$. Then $(v,w) \in F$ if and only if $v \in e$ and $-1 \in \sigma(v)$, if and only if $\phi(v) \in \phi(e)$ and $-1 \in \Phi_{\hat{\sigma},e}(\sigma)\big(\phi(v)\big) = \sigma(v)$, if and only if $(\phi(v),\tau(w)) \in F^{\prime}$, if and only if $\beta\big((v,w)\big) = \big(\beta(v),\beta(w)\big) = \big(\phi(v),\tau(w)\big) \in F^{\prime}$. A similar argument holds when $(w,v) \in W\times V$. Therefore $\beta$ preserves arcs.

    For arc weight preservation, let $(v,w) \in F \cap (V \times W)$ for some $w \defeq \big(e,(S_e,<_e,\rnk_e),i\big) \in W$. Note that $\tau(w) = \big(\phi(e),\Psi_e \big((S_e,<_e,\rnk_e)\big),i\big)$ where $\Psi_e \big((S_e,<_e,\rnk_e)\big) \defeq (S^{\prime}_{\phi(e)},<^{\prime}_{\phi(e)},\rnk^{\prime}_{\phi(e)})$, and $\Phi_{\hat{\mu}^{\,\Minus / \Plus},e}\big((\mu^\Minus,\mu^\Plus)\big) = (\mu^\Minus \circ \phi^{-1},\mu^\Plus \circ \phi^{-1}) \in S^{\prime}_{\phi(e)}$ where $(\mu^\Minus,\mu^\Plus) \in S_e$. Then $\omega^{\prime}\big(\beta\big((v,w)\big)\big) = \omega^{\prime}\big((\phi(v),\tau(w))\big) = \mu^\Minus \circ \phi^{-1} (\phi(v)) = \mu^\Minus(v) = \omega\big((v,w)\big)$. A similar argument holds when $(w,v) \in F \cap (W \times V)$. Therefore $\beta$ preserves arc weights.

    For second-part weight preservation, note that $\Dom(\delta) = W$, $\Dom(\delta^{\prime}) = W^{\prime}$, and $\beta(W) = W^{\prime}$, so $\beta\big(\Dom(\delta)\big) = \Dom(\delta^{\prime})$. Let $w \defeq \big(e,(S_e,<_e,\rnk_e),i\big) \in W$, and note that $\tau(w) = \big(\phi(e),\Psi_e \big((S_e,<_e,\rnk_e)\big),i\big)$ where $\Psi_e \big((S_e,<_e,\allowbreak\rnk_e)\big) \defeq (S^{\prime}_{\phi(e)},<^{\prime}_{\phi(e)},\rnk^{\prime}_{\phi(e)})$, and $\Phi_{\hat{\rho},e}(\rho) = \rho \circ \phi^{-1} \in S^{\prime}_{\phi(e)}$ where $\rho \in S_e$. Then $\delta^{\prime}\big(\beta(w)\big) = \delta^{\prime}\big(\tau(w)\big) = \rho \circ \phi^{-1} \big(\phi(e)\big) = \rho(e) = \delta(w)$. Therefore $\beta$ preserves second-part weights.

    We conclude that $\beta$ is a type-$0$ directed bipartite graph isomorphism.

    \vspace{0.2cm}

    Step~\ref{eqKbij}: We show that $\widetilde{L} = \widetilde{K}^{-1}$. For $[H]_t \in \widetilde{\Hh}$ we have $\widetilde{K} \circ \widetilde{L}\big([H]_t\big) = \widetilde{K}\big([K_r(H)]_t\big) = [K \circ K_r(H)]_t = [H]_t$, since $K \circ K_r$ is the identity map on $\Hh$, and it follows that $\widetilde{K} \circ \widetilde{L}$ is the identity map on $\widetilde{\Hh}$. Further, for $[B]_t \in \widetilde{\B}$ we have $\widetilde{L} \circ \widetilde{K}\big([B]_t\big) = \widetilde{L}\big([K(B)]_t\big) = [K_r \circ K(B)]_t = [B]_t$, since $K_r \circ K(B) \cong_t B$ by Part~\ref{thm:KrK} of this Theorem, and it follows that $\widetilde{L} \circ \widetilde{K}$ is the identity map on $\widetilde{\B}$.
\end{proof}

It is important to note that Theorem~\ref{theorem:Corr-SH-DBGW} provides a symbolic association between multihypergraphs and bipartite graphs. This association does not imply an identification of the two types of graphs which have different relational structures; in particular relations (hyperedges) in multihypergraphs are viewed as objects (vertices) in bipartite graphs. While the correspondence means that we can construct one type of graph from the other, in applications we must also ensure logical consistency when transforming between the two structures. This is discussed in the main document in the context of hypostatic abstraction.

\begin{corollary}[Correspondence between multihypergraphs and bipartite graphs]\label{corr:multihypergraphs-bipartite}
    Let $t \in \{0,1\}$, let $\B^{\ast}$ be the set of all vertex-labelled (undirected) bipartite graphs with type-$t$ isomorphism classes $\widetilde{\B}^{\ast}$, and let $\Hh^{\ast}$ be the set of all multihypergraphs with type-$t$ isomorphism classes $\widetilde{\Hh}^{\ast}$.

    There exists a surjection $K^{\ast} \colon \B^{\ast} \to \Hh^{\ast}$ such that:
    \begin{enumerate}[label=(\arabic*),topsep=3pt,itemsep=1pt]
        \item For $(V,W,U) \in \B^{\ast}$ we have $K^{\ast}\big((V,W,U)\big) = (V,E,\mu_E) \in \Hh^{\ast}$, where:
            \begin{enumerate}[label=(\arabic{enumi}.\arabic*),topsep=0pt]
                \item $E \defeq \{\, \mathcal{N}(w) \mid w \in W \,\}$.
                \item $\mu_E \colon E \to \NN_+$ satisfies $\mu_E\big(\mathcal{N}(w)\big) \defeq \left| \T(w) \right|$ for $w \in W$.
            \end{enumerate} 
        \item A right inverse $K^{\ast}_r \colon \Hh^{\ast} \to \B^{\ast}$ of $K^{\ast}$ is given by $K^{\ast}_r\big((V,E,\mu_E)\big) = (V,W,U) \in \B^\ast$, for $(V,E,\mu_E) \in \Hh^\ast$ with $S = \{(\widehat{\sigma},\I),(\widehat{\mu},\AA),(\widehat{\rho},\VV),(\widehat{\mu}^{\,\Minus / \Plus},\AA)\}$, where:
            \begin{enumerate}[label=(\arabic{enumi}.\arabic*),topsep=0pt]
                \item $W \defeq \big\{\, \big(e,(S_e,<_e,\rnk_e),i\big) \mid \text{$e \in E$, $\chi_e^{-1}(\NN_+) = \{(S_e,<_e,\rnk_e)\}$, $i \in [\mu_E(e)]$} \,\big\}$.
                \item $U = \{\, \{v,w\} \mid \text{$v \in e$ and $w = (e,i) \in W$} \,\}$.
            \end{enumerate} 
        \item $K^{\ast}_r \circ K^{\ast} \colon \B^{\ast} \to \B^{\ast}$ satisfies $K^{\ast}_r \circ K^{\ast}(B) \cong_t B$ for $B \in \B^{\ast}$. 
        \item $K^{\ast}$ induces a bijection $\widetilde{K}^{\ast} \colon \widetilde{\B}^{\ast} \to \widetilde{\Hh}^{\ast}$ such that: \label{thm:eqKbijCorr}
            \begin{enumerate}[label=(\arabic{enumi}.\arabic*),topsep=0pt]
                \item $\widetilde{K}^{\ast}\big([B]_t\big) = [K^{\ast}(B)]_t$ for $[B]_t \in \widetilde{\B}^{\ast}$.
                \item $\widetilde{K}^{\ast \; -1}\big([H]_t\big) = [K^{\ast}_r(H)]_t$ for $[H]_t \in \widetilde{\Hh}^{\ast}$. 
            \end{enumerate}
    \end{enumerate}
\end{corollary}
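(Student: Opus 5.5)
The corollary is the specialisation of Theorem~\ref{theorem:Corr-SH-DBGW} to the case with no attribute data. I would derive it by embedding $\B^{\ast}$ and $\Hh^{\ast}$ into $\B$ and $\Hh$ so that $K^{\ast}$ becomes the restriction of $K$, and then reading off the four parts. An undirected bipartite graph $(V,W,U)$ is encoded as the directed bipartite graph $\iota(V,W,U)\defeq(V,W,F,(\omega,\NN_+),(\delta,\RR_{\ge0}))$ with $F\defeq\{\,(v,w)\mid\{v,w\}\in U\,\}$, constant $\omega\equiv1$ and constant $\delta\equiv0$. Every arc then points from $V$ to $W$ with the same weight, and every second-part vertex carries the same weight.

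Consequently, for $w\in W$ one has $\N_{\text{in}}(w)=\N(w)$ and $\N_{\text{out}}(w)=\emptyset$. It follows that $\T_{\text{dir}}(z)=\T_{\text{awgt}}(z)=\T_{\text{vwgt}}(z)=\T_{\text{in}}(z)=\T_{\text{out}}(z)=\T(z)$ for all $z\in W$. The first key step is to verify exactly these equalities. With them in hand, $K(\iota(B))$ has hyperedges $E=\{\N(w)\}$ and $\mu_E(\N(w))=|\T(w)|$. Moreover, each $\chi_e$ is supported on a single attribute hierarchy, of value $\mu_E(e)$: all orientations equal $\{-1\}$, all multiplicities equal $1$, and all labels equal $0$. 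So the multihypergraph part of $K(\iota(B))$ is precisely $K^{\ast}(B)$. Conversely, any multihypergraph $(V,E,\mu_E)$ lifts to an AMT-type systems hypergraph $j(V,E,\mu_E)$ carrying this trivial attribute data. This is the structure the statement of $K^{\ast}_r$ implicitly refers to via $\chi_e^{-1}(\NN_+)=\{(S_e,<_e,\rnk_e)\}$. Applying $K_r$ to the lift and forgetting directions, weights and the redundant middle coordinate of $W$ gives $K^{\ast}_r$. Hence $K^{\ast}=\pi\circ K\circ\iota$ and $K^{\ast}_r=\pi'\circ K_r\circ j$, where $\pi,\pi'$ are the forgetful maps.

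Parts (1)--(3) then follow directly from the theorem. $K^{\ast}\circ K^{\ast}_r$ is the identity because $K\circ K_r$ is the identity on $\Hh$ (Step 2 of the theorem), and $\pi\circ j=\mathrm{id}$. This also gives surjectivity of $K^{\ast}$. For $K^{\ast}_r\circ K^{\ast}(B)\cong_t B$, I would take the type-1 isomorphism $\alpha$ from Step 3 of the theorem between $K_r(K(\iota B))$ and $\iota B$. One checks that $\iota B$ lies in the image of $j$ after $K$, so that $K_r(K(\iota B))=\iota(K^{\ast}_r K^{\ast}B)$. Forgetting directions preserves isomorphism, so $\alpha$ descends to the undirected graphs.

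For part (4) I would mirror Steps 4--6 of the proof of the theorem. The direct route is to show that $\iota$ and $j$ respect type-$t$ isomorphism in both directions: an isomorphism of undirected bipartite graphs preserving parts is automatically an isomorphism of the encoded directed weighted graphs, and conversely. Likewise, a multihypergraph isomorphism extends to a systems hypergraph isomorphism of the trivial lifts, with $\psi$ the identity on the four distribution functions. The well-definedness and mutual inverse arguments then transfer verbatim. The main obstacle I expect is this bookkeeping: checking that multihypergraph isomorphism classes correspond exactly to systems-hypergraph classes of trivial lifts. In particular, the preservation conditions on $\Phi_{\hat\pi,e}$ and $\Psi_e$ must hold automatically for constant attributes, which I would verify by noting that $\sigma\circ\phi^{-1}$, $\mu\circ\phi^{-1}$ and $\rho\circ\phi^{-1}$ are again the constant attribute functions on $\phi(e)$.
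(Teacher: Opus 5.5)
Your proposal is correct and is essentially the paper's argument. You realise $\B^{\ast}$ and $\Hh^{\ast}$ inside $\B$ and $\Hh$ using constant weights and singleton-supported constant attributes, check that $K$ and $K_r$ preserve these subclasses, and transport parts (1)--(4) of Theorem~\ref{theorem:Corr-SH-DBGW} through the encoding and forgetful bijections. The only difference is cosmetic: you encode each edge as a single arc from $V$ to $W$ (orientation $\{-1\}$, multiplicity $1$, label $0$), whereas the paper uses a symmetric pair of arcs (orientation $\{-1,+1\}$, multiplicity $2$, oriented multiplicities $(1,1)$, label $1$); since the statement leaves the constant attributes unspecified, either choice works.
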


\begin{proof}
As in Theorem~\ref{theorem:Corr-SH-DBGW}, let $\B$ be the set of all vertex-labelled directed bipartite graphs with arc weight functions and second-part weight functions, and type-$t$ isomorphism classes $\widetilde{\B}$, let $\Hh$ be the set of all AMT-type systems hypergraphs with type-$t$ isomorphism classes $\widetilde{\Hh}$, let $K \colon \B \to \Hh$ be the surjection with right inverse $K_r \colon \Hh \to \B$, and let $\widetilde{K} \colon \widetilde{\B} \to \widetilde{\Hh}$ be the bijection induced by $K$.

Now, let $\A \subseteq \B$ be the set of all vertex-labelled directed bipartite graphs $(V,W,F,(\omega,\AA),(\delta,\VV))$ such that: $(v,w) \in F \cap (V \times W)$ if and only if $(w,v) \in F \cap (W \times V)$; $\omega(f) = 1$ for all $f \in F$; and $\delta(w) = 1$ for all $w \in W$. Further, let $\W \subseteq \Hh$ be the subset of all AMT-type systems hypergraphs $(V,E,\mu_E,Q,\chi)$, where $Q = (S,<,\rnk)$ and $S = \{(\widehat{\sigma},\I),(\widehat{\mu},\AA),(\widehat{\rho},\VV),(\widehat{\mu}^{\,\Minus / \Plus},\AA)\}$, such that for each $e \in E$: $\widehat{\sigma}_e^{-1} (\NN_+)$ is a singleton where $\sigma \in \widehat{\sigma}_e^{-1} (\NN_+)$ is the constant function to $\{-1,+1\}$; $\widehat{\mu}_e^{-1} (\NN_+)$ is a singleton where $\mu \in \widehat{\mu}_e^{-1} (\NN_+)$ is the constant function to $2$; $\widehat{\rho}_e^{-1} (\NN_+)$ is a singleton where $\rho \in \widehat{\rho}_e^{-1} (\NN_+)$ is the constant function to $1$; $\widehat{\mu}^{\,\Minus / \Plus \; -1}_{\N(w)} (\NN_+)$ is a singleton where $(\mu^\Minus_z,\mu^\Plus_z) \in \widehat{\mu}^{\,\Minus / \Plus \; -1}_{\N(w)} (\NN_+)$ is a pair of constant total functions to $(1,1)$. Note that the choices for the constant attribute functions associated with $\widehat{\mu}$, $\widehat{\rho}$, and $\widehat{\mu}^{\,\Minus / \Plus}$ are not unique.

Note that $K(\A) \subseteq \W$ and $K_r(\W) \subseteq \A$. Since $K_r|_{\W} \colon \W \to \A$ is a right inverse of $K|_{\A} \colon \A \to \W$, $K|_{\A}$ is a surjection onto $\W$. Since $\A \subseteq \B$ it follows that $K_r|_{\W} \circ K|_{\A} \colon \A \to \A$ satisfies $K_r|_{\W} \circ K|_{\A} (A) = K_r \circ K(A) \cong_t A$ for $A \in \A$. Further, the type-$t$ isomorphism classes $\widetilde{\A}$ of $\A$ satisfy $\widetilde{\A} \subseteq \widetilde{\B}$, and the type-$t$ isomorphism classes $\widetilde{\W}$ of $\W$ satisfy $\widetilde{\W} \subseteq \widetilde{\Hh}$. Since $\widetilde{K}\big([A]_t\big) = [K(A)]_t = [K|_{\A}(A)]_t \in \widetilde{\W}$ for $[A]_t \in \widetilde{\A}$, and $\widetilde{K}^{-1}\big([W]_t\big) = [K_r(W)]_t = [K_r|_{\W}(W)]_t \in \widetilde{\A}$ for $[W]_t \in \widetilde{\W}$, $\widetilde{K}|_{\widetilde{\A}} \colon \widetilde{\A} \to \widetilde{\W}$ is a bijection induced by $K|_{\A}$.

Now, $\B^{\ast}$ is bijective with $\A$ under the map $T$ such that: $\B^{\ast} \owns (V,W,U) \mapsto (V,W,F,(\omega,\AA),(\delta,\VV)) \in \A$, where $\{v,w\} \in U$ with $v \in V$ and $w \in W$ if and only if $(v,w)$, $(w,v) \in F$, $\omega(f) = 1$ for all $f \in F$, and $\delta(w) = 1$ for all $w \in W$; $\A \owns (V,W,F,(\omega,\AA),(\delta,\VV)) \mapsto (V,W,U) \in \B^{\ast}$, where $\{v,w\} \in U$ if and only if $(v,w) \in F \cap (V \times W)$. Further, $\Hh^{\ast}$ is bijective with $\W$ under the map $Z$ such that $(V,E,\mu_E) \in \Hh^{\ast}$ maps to the unique $(V,E,\mu_E,Q,\chi) \in \W$ for some $Q$ and $\chi$.

Defining $K^{\ast} \colon \B^{\ast} \to \Hh^{\ast}$ by $K^{\ast} \defeq Z^{-1} \circ K|_{\A} \circ T$ and $K^{\ast}_r \colon \Hh^{\ast} \to \B^{\ast}$ by $T^{-1} \circ K_r|_{\W} \circ Z$ gives the required result.
\end{proof}

\section{Representation of stochastic Petri nets as SPN-type systems hypergraphs}

\begin{proposition}[Representation of stochastic Petri nets as SPN-type systems hypergraphs] \label{prop:SH-rep-SPN-supp}
Let $\AA \in \{\NN,\NN_+\}$, let $\Y$ be the set of all stochastic Petri nets $Y \defeq (P,T,F,(\omega,\AA),(r,\RR_{\ge 0}))$ with type-1 isomorphism classes $\widetilde{\Y}$, and let $\Hh_{\text{SPN}}$ be the set of all SPN($\AA$)-type systems hypergraphs. There exists a bijection $\Gamma \colon \widetilde{\Y} \to \Hh_{\text{SPN}}$ such that:
    \begin{enumerate}[label=(\arabic*)]
        \item For $(P,T,F,\omega,r) \in \Y$ we have $\Gamma\big([(P,T,F,\omega,r)]_1\big) = (P,E,\mu_E,Q,\chi)$, where:
            \begin{enumerate}[label=(\arabic{enumi}.\arabic*),topsep=0pt]
                \item $E \defeq \{\, \N(w) \mid w \in T \,\}$.
                \item $\mu_E \colon E \to \NN_+$ satisfies $\mu_E\big(\N(w)\big) \defeq \left| \T(w) \right|$ for $w \in T$.
                \item $\widehat{\sigma} \colon E \to \bigcup_{e \in E} \{ \{ e \rightarrow \I \} \rightarrow \NN \}$ satisfies $\widehat{\sigma}_{\N(w)}^{-1} (\NN_+) = \{\, \sigma_z \mid z \in \T(w) \,\}$ and $\widehat{\sigma}_{\N(w)} (\sigma_z) = \left| \T_{\text{dir}}(z) \right|$, for $w \in T$ and $\sigma_z \in {\widehat{\sigma}_{\N(w)}}^{-1} (\NN_+)$.
                \item $\widehat{\mu} \colon E \to \bigcup_{e \in E} \{ \{e \rightarrow \AA\} \rightarrow \NN \}$ satisfies $\widehat{\mu}_{\N(w)}^{-1} (\NN_+) = \{\, \mu_z \mid z \in \T(w) \,\}$ and $\widehat{\mu}_{\N(w)} (\mu_z) = \left| \T_{\text{awgt}}(z) \right|$, for $w \in T$ and $\mu_z \in {\widehat{\mu}_{\N(w)}}^{-1} (\NN_+)$.
                \item $\widehat{\rho} \colon E \to \bigcup_{e \in E} \{ \{ \{e\} \rightarrow \RR_{\ge 0} \} \rightarrow \NN \}$ satisfies $\widehat{\rho}_{\N(w)}^{-1} (\NN_+) = \{\, \rho_z \mid z \in \T(w) \,\}$ and $\widehat{\rho}_{\N(w)} (\rho_z) = \left| \T_{\text{vwgt}}(z) \right|$, for $w \in T$ and $\rho_z \in {\widehat{\rho}_{\N(w)}}^{-1} (\NN_+)$.
                \item $\widehat{\mu}^{\,\Minus / \Plus} \colon E \to \bigcup_{e \in E} \{ \{ e \rightharpoonup \AA \} \times \{ e \rightharpoonup \AA \} \rightarrow \NN \}$ satisfies $\widehat{\mu}^{\,\Minus / \Plus \; -1}_{\N(w)} (\NN_+) = \{\, (\mu^\Minus_z,\mu^\Plus_z) \mid z \in \T(w) \,\}$ and $\widehat{\mu}^{\,\Minus / \Plus}_{\N(w)} ((\mu^\Minus_z,\mu^\Plus_z)) = \left| \T_{\text{in}}(z) \cap \T_{\text{out}}(z) \right|$, for $w \in T$ and $(\mu^\Minus_z,\mu^\Plus_z) \in \widehat{\mu}^{\,\Minus / \Plus \; -1}_{\N(w)} (\NN_+)$.
                \item $\chi \colon E \to \bigcup_{e \in E} \{ Q_e \rightarrow \NN \}$ satisfies $\chi_{\N(w)}^{-1} (\NN_+) = \{\, (S_z,<_z,\rnk_z) \mid z \in \T(w) \,\}$ and $\chi_{\N(w)}\big((S_z,<_z,\rnk_z)\big) =\\ \left| \T_{\text{dir}}(z) \cap \T_{\text{awgt}}(z) \cap \T_{\text{vwgt}}(z) \cap \T_{\text{in}}(z) \cap \T_{\text{out}}(z) \right|$, for $w \in T$.
            \end{enumerate}
        \item For $(V,E,\mu_E,Q,\chi) \in \Hh_{\text{SPN}}$ we have $\Gamma^{-1}\big((V,E,\mu_E,Q,\chi)\big)
        = [(V,T,F,\omega,r)]_1$, where:
            \begin{enumerate}[label=(\arabic{enumi}.\arabic*),topsep=0pt]
                \item $T \defeq \big\{\, \big(e,(S_e,<_e,\rnk_e),i\big) \mid \text{$e \in E$, $(S_e,<_e,\rnk_e) \in \chi_e^{-1}(\NN_+)$, and $i \in [\chi_e\big((S_e,<_e,\rnk_e)\big)]$} \,\big\}$.
                \item $F \subseteq (V \times T) \sqcup (T \times V)$ such that $(v,(e,(S_e,<_e,\rnk_e),i)) \in F$ if and only if $v \in e$ and $-1 \in \sigma(v)$ where $\sigma \in S_e$, and $((e,(S_e,<_e,\rnk_e),i),v) \in F$ if and only if $v \in e$ and $+1 \in \sigma(v)$ where $\sigma \in S_e$.
                \item $\omega \colon F \to \AA$ is given by $\omega\big((v,(e,(S_e,<_e,\rnk_e),i))\big) = \mu^\Minus(v)$ if $(v,(e,(S_e,<_e,\rnk_e),i)) \in F$ and $\omega\big(((e,(S_e,<_e,\rnk_e),i),v)\big) = \mu^\Plus(v)$ if $((e,(S_e,<_e,\rnk_e),i),v) \in F$, where $(\mu^\Minus,\mu^\Plus) \in S_e$.
                \item $r \colon T \to \RR_{\ge 0}$ is given by $r\big((e,(S_e,<_e,\rnk_e),i)\big) = \rho(e)$ for all $(e,(S_e,<_e,\rnk_e),i\big) \in T$ where $\rho \in S_e$.
            \end{enumerate}
    \end{enumerate}
\end{proposition}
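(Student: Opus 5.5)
The plan is to derive this as a specialisation of Theorem~\ref{theorem:Corr-SH-DBGW} with $t=1$. A stochastic Petri net $(P,T,F,(\omega,\AA),(r,\RR_{\ge 0}))$ with $\AA \in \{\NN,\NN_+\}$ is exactly a vertex-labelled directed bipartite graph in $\B$ with first part $P$, second part $T$, arc weight function $\omega$ valued in $\AA$ (closed under addition), and second-part vertex weight function $\delta \defeq r$ with $\VV \defeq \RR_{\ge 0}$. The one discrepancy is that Petri nets have no isolated places, while $\B$ allows them. Let $\B_{\text{SPN}} \subseteq \B$ denote the bipartite graphs with codomains $(\AA,\RR_{\ge 0})$, $\Dom(\delta)=W$, and every first-part vertex incident on some arc. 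Then $\Y = \B_{\text{SPN}}$ on the nose, and type-1 isomorphism classes agree since the notions of isomorphism coincide. I would then define $\Gamma([Y]_1) \defeq K(Y)$. The formulas in part (1) of the statement are verbatim those of $K$ in Theorem~\ref{theorem:Corr-SH-DBGW} with $W=T$ and $\delta=r$, and the formulas in part (2) are verbatim those of $K_r$.

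The key steps, in order, are as follows.

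First, check that $K(\B_{\text{SPN}}) \subseteq \Hh_{\text{SPN}}$: the attribute sets are $\AA$ and $\RR_{\ge 0}$, so the image is SPN($\AA$)-type.

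Second, check that $K_r(\Hh_{\text{SPN}}) \subseteq \B_{\text{SPN}}$: by construction $\omega$ is valued in $\AA$ (via $\mu^{\Minus}, \mu^{\Plus}$ valued in $\AA$), $\delta$ is valued in $\RR_{\ge 0}$, and $\Dom(\delta) = W$.

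Third, show that $\Gamma$ is well defined on type-1 classes. This is Step~\ref{eqK} of the theorem's proof for $t=1$: type-1 isomorphic nets have equal images, since type-1 classes of systems hypergraphs are singletons.

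Fourth, show that $\Gamma$ is a bijection with the stated inverse. We have $K \circ K_r = \mathrm{id}$ on $\Hh$ (Step~\ref{Kr}), so $\Gamma \circ \Gamma^{-1} = \mathrm{id}$. Part~\ref{thm:KrK} gives $K_r(K(Y)) \cong_1 Y$, so $\Gamma^{-1} \circ \Gamma = \mathrm{id}$ on $\widetilde{\Y}$. Note that because type-1 classes in $\widetilde{\Hh}$ are singletons, the induced bijection $\widetilde K$ restricts to an honest bijection onto $\Hh_{\text{SPN}}$ itself rather than onto classes.

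The main obstacle I expect is the isolated-places condition. $K_r$ takes $V$ as the full vertex set, so a systems hypergraph in $\Hh_{\text{SPN}}$ whose vertex set contains a vertex lying in no hyperedge would map to a bipartite graph with an isolated first-part vertex. That graph is not a Petri net under the standing assumption. I would first try to handle this by reading $\Hh_{\text{SPN}}$ as consisting of systems hypergraphs with $V = \bigcup_{e\in E} e$, and verify that this is preserved by $K$: every place of $Y$ lies in some $\N(w)$ precisely because it is not isolated. If the paper's intended reading instead allows isolated vertices, the alternative is to relax the Petri net convention in $\Y$, and the same argument goes through unchanged.

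A secondary point to check is that the case $\AA = \NN$ causes no issue. Zero-weight arcs still lie in $F$, and hence in the neighbourhoods and domains of $\mu^{\Minus}_z, \mu^{\Plus}_z$, so the compositions constraint of Example~\ref{ex:AssocFun} needs only $\NN$-valued parts. I would verify that the proof of Step~\ref{K} uses only closure under addition, which it does.
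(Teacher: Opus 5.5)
Your proposal is correct and is essentially the paper's proof. The paper simply invokes the bijection of Theorem~\ref{theorem:Corr-SH-DBGW} with $t=1$, $\AA \in \{\NN,\NN_+\}$ and $\VV \defeq \RR_{\ge 0}$, together with the fact that type-1 classes of systems hypergraphs are singletons, which is exactly your restriction of $K$ and $K_r$. Your isolated-places caveat is a genuine subtlety that the paper's one-line proof passes over: restricting $\Hh_{\text{SPN}}$ to $V = \bigcup_{e \in E} e$ works in both directions, since $\sum_{Q_e} \chi_e = \mu_E(e) \ge 1$ makes $K_r$ produce a transition on every $e$. Your handling of $\AA = \NN$ tacitly reads $C(\mu(v),2)$ as allowing zero parts, which departs from the paper's $\NN_+$-valued compositions but is the same reading the paper relies on in Step~\ref{K} of the theorem's proof.
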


\begin{proof}
The result follows from Part~\ref{thm:eqKbijCorr} of Theorem~\ref{theorem:Corr-SH-DBGW} by setting $t = 1$, $\AA \in \{\NN,\NN_+\}$, $\VV \defeq \RR_{\ge 0}$, and noting that $[H]_1 = \{H\}$ for $[H]_1 \in \widetilde{\Hh_{\text{SPN}}}$ with $H \in \Hh_{\text{SPN}}$.
\end{proof}

\section{Representation of chemical reaction networks as CRN-type systems hypergraphs}

\begin{definition}[Twin-reactions maps]
    Suppose $Z \defeq (\S,\C,\R,k)$ is a connected chemical reaction network with a kinetics. The \emph{twin-reactions map} $\W \colon \R \to \powerset(\R)$ is given by $\W\big((f,g)\big) \defeq \{\, (f^{\prime},g^{\prime}) \in \R \mid \Dom(f^{\prime} \boxplus g^{\prime}) = \Dom(f \boxplus g) \,\}$ for $(f,g) \in \R$. The \emph{directed twin-reactions map} $\W_{\text{dir}} \colon \R \to \powerset(\R)$ is such that $\W_{\text{dir}}\big((f,g)\big) \defeq \{\, (f^{\prime},g^{\prime}) \in \R \mid \text{$\Dom(f^{\prime}) = \Dom(f)$ and $\Dom(g^{\prime}) = \Dom(g)$} \,\}$ for $(f,g) \in \R$. Note that $\W_{\text{dir}}\big((f,g)\big) \subseteq \W\big((f,g)\big)$ for $(f,g) \in \R$. The \emph{stoichiometry twin-reactions map} $\W_{\text{stoic}} \colon \R \to \powerset(\R)$ is given by $\W_{\text{stoic}}\big((f,g)\big) \defeq \{\, (f^{\prime},g^{\prime}) \in \W(r) \mid f^{\prime} \boxplus g^{\prime} = f \boxplus g \,\}$ for $(f,g) \in \R$. The \emph{reactant twin-reactions map} $\W_{\text{react}} \colon \R \to \powerset(\R)$ is given by $\W_{\text{react}}\big((f,g)\big) \defeq \{\, (f^{\prime},g^{\prime}) \in \W(r) \mid f^{\prime} = f \,\}$ for $(f,g) \in \R$, and the \emph{product twin-reactions map} $\W_{\text{prod}} \colon \R \to \powerset(\R)$ is given by $\W_{\text{prod}}\big((f,g)\big) \defeq \{\, (f^{\prime},g^{\prime}) \in \W(r) \mid g^{\prime} = g \,\}$ for $(f,g) \in \R$. The \emph{rate twin-reactions map} $\W_{\text{rate}} \colon \R \to \powerset(\R)$ is given by $\W_{\text{rate}}\big((f,g)\big) \defeq \{\, (f^{\prime},g^{\prime}) \in \W(r) \mid k\big((f^{\prime},g^{\prime})\big) = k\big((f,g)\big) \,\}$ for $(f,g) \in \R$.
\end{definition}

\begin{definition}[Orientation maps induced from chemical reaction networks]
    Let $Z \defeq (\S,\C,\R,k)$ be a connected chemical reaction network with a kinetics. For each $r \defeq (f,g) \in \R$ let $\sigma_r \colon \Dom(f \boxplus g) \to \I$ be the \emph{orientation map} such that, for $x \in \Dom(f \boxplus g)$:
        \begin{equation*}
            \sigma_r(x) \defeq \begin{cases} \{-1\} & \text{if $x \in \Dom(f) \setminus \Dom(g)$},\\ \{+1\} & \text{if $x \in \Dom(g) \setminus \Dom(f)$},\\ \{-1,+1\} & \text{if $x \in \Dom(f) \cap \Dom(g)$}.
            \end{cases}
        \end{equation*}
    \end{definition}

\begin{definition}[Multiset maps induced from chemical reaction networks]
    Let $Z \defeq (\S,\C,\R,k)$ be a connected chemical reaction network with a kinetics. For each $r \defeq (f,g) \in \R$ let:
    \begin{itemize}[topsep=3pt,itemsep=1pt,leftmargin=12pt]
        \item $\mu_r \colon \Dom(f \boxplus g) \to \NN_+$ be the \emph{multiset map} such that, for $x \in \Dom(f \boxplus g)$, $\mu_r(x) \defeq (f \boxplus g)(x)$.
        \item $\mu^\Minus_r \colon \Dom(f \boxplus g) \to \NN_+$ be the \emph{negative oriented-multiset map} such that $\Dom(\mu^\Minus_r) \defeq \Dom(f)$, hence $\mu^\Minus_r$ may be a partial map, and $\mu^\Minus_r|_{\Dom(f)} = f$.
        \item $\mu^\Plus_r \colon \Dom(f \boxplus g) \to \NN_+$ be the \emph{positive oriented-multiset map} such that $\Dom(\mu^\Plus_r) \defeq \Dom(g)$, hence $\mu^\Plus_r$ may be a partial map, and $\mu^\Plus_r|_{\Dom(g)} = g$.
    \end{itemize}
    Note that if $\Dom(f) \ne \emptyset$ (resp. $\Dom(g) \ne \emptyset$) and $\Dom(g) = \emptyset$ (resp. $\Dom(f) = \emptyset$) then $\mu^\Minus_r$ (resp. $\mu^\Plus_r$) is the empty partial function.
    \end{definition}

\begin{definition}[Hyperedge label maps induced from chemical reaction networks]
    Let $Z \defeq (\S,\C,\R,k)$ be a connected chemical reaction network with a kinetics. For each $r \defeq (f,g) \in \R$ let $\rho_r \colon \big\{\Dom(f \boxplus g)\big\} \to \RR_{\ge 0}$ be the \emph{hyperedge label map} such that $\rho_r\big(\Dom(f \boxplus g)\big) \defeq k\big((f,g)\big)$.
\end{definition}

While the notation we employ for orientation, multiset, and hyperedge label maps induced from chemical reaction networks is similar to the notation for the maps corresponding to stochastic Petri nets, no ambiguity is possible since the relevant system is clear from context.

\begin{proposition}[Representation of chemical reaction networks as CRN-type systems hypergraphs]\label{prop:SH-rep-CRN-supp}
Let $\Z$ be the set of all connected chemical reaction networks with a kinetics $Z \defeq (\S,\C,\R,k)$, and let $\Hh_{\text{CRN}}$ be the set of all CRN-type systems hypergraphs $H \defeq (V,E,\mu_E,Q,\chi)$. There exists a bijection $\Theta \colon \Z \to \Hh_{\text{CRN}}$ such that:
    \begin{enumerate}[label=(\arabic*)]
        \item For $(\S,\C,\R,k) \in \Z$ we have $\Theta\big((\S,\C,\R,k)\big) = (V,E,\mu_E,Q,\chi)$, where:
            \begin{enumerate}[label=(\arabic{enumi}.\arabic*),topsep=0pt]
                \item $E \defeq \{\, \Dom\big((f \boxplus g)\big) \mid (f,g) \in \R \,\}$.
                \item $\mu_E \colon E \to \NN_+$ satisfies $\mu_E\big(\Dom\big((f \boxplus g)\big)\big) \defeq \left| \W\big((f,g)\big) \right|$ for $r \defeq (f,g) \in \R$.
                \item $\widehat{\sigma} \colon E \to \bigcup_{e \in E} \{ \{ e \rightarrow \I \} \rightarrow \NN \}$ satisfies ${\widehat{\sigma}_{\Dom((f \boxplus g))}}^{-1} (\NN_+) = \{\, \sigma_s \mid s \in \W(r) \,\}$ and $\widehat{\sigma}_{\Dom((f \boxplus g))} (\sigma_s) = \left| \W_{\text{dir}}(s) \right|$, for $r \defeq (f,g) \in \R$ and $\sigma_s \in {\widehat{\sigma}_{\Dom((f \boxplus g))}}^{-1} (\NN_+)$.
                \item $\widehat{\mu} \colon E \to \bigcup_{e \in E} \{ \{e \rightarrow \NN_+\} \rightarrow \NN \}$ satisfies ${\widehat{\mu}_{\Dom((f \boxplus g))}}^{-1} (\NN_+) = \{\, \mu_s \mid s \in \W(r) \,\}$ and $\widehat{\mu}_{\Dom((f \boxplus g))} (\mu_s) = \left| \W_{\text{stoic}}(s) \right|$, for $r \defeq (f,g) \in \R$ and $\mu_s \in {\widehat{\mu}_{\Dom((f \boxplus g))}}^{-1} (\NN_+)$.
                \item $\widehat{\rho} \colon E \to \bigcup_{e \in E} \{ \{ \{e\} \rightarrow \RR_{\ge 0} \} \rightarrow \NN \}$ satisfies ${\widehat{\rho}_{\Dom((f \boxplus g))}}^{-1} (\NN_+) = \{\, \rho_s \mid s \in \W(r) \,\}$ and $\widehat{\rho}_{\Dom((f \boxplus g))} (\rho_s) = \left| \W_{\text{rate}}(s) \right|$, for $r \defeq (f,g) \in \R$ and $\rho_s \in {\widehat{\rho}_{\Dom((f \boxplus g))}}^{-1} (\NN_+)$.
                \item $\widehat{\mu}^{\,\Minus / \Plus} \colon E \to \bigcup_{e \in E} \{ \{ e \rightharpoonup \NN_+ \} \times \{ e \rightharpoonup \NN_+ \} \rightarrow \NN \}$ satisfies $\widehat{\mu}^{\,\Minus / \Plus \; -1}_{\Dom((f \boxplus g))} (\NN_+) = \{\, (\mu^\Minus_s,\mu^\Plus_s) \mid s \in \W(r) \,\}$ and  $\widehat{\mu}^{\,\Minus / \Plus}_{\Dom((f \boxplus g))} ((\mu^\Minus_s,\mu^\Plus_s)) = \left| \W_{\text{react}}(s) \cap \W_{\text{prod}}(s) \right|$, for $r \defeq (f,g) \in \R$ and $(\mu^\Minus_s,\mu^\Plus_s) \in \widehat{\mu}^{\,\Minus / \Plus \; -1}_{\Dom((f \boxplus g))} (\NN_+)$.
                \item $\chi \colon E \to \bigcup_{e \in E} \{ Q_e \rightarrow \NN \}$ satisfies $\chi_{\Dom((f \boxplus g))}^{-1} (\NN_+) = \{\, (S_s,<_s,\rnk_s) \mid s \in \W(r) \,\}$ and $\chi_{\Dom((f \boxplus g))}\big((S_t,<_t,\rnk_t)\big) = 1$, for $r \defeq (f,g) \in \R$ and $(S_t,<_t,\rnk_t) \in \chi_{\Dom((f \boxplus g))}^{-1} (\NN_+)$.
            \end{enumerate}
        \item For $(V,E,\mu_E,Q,\chi) \in \Hh_{\text{CRN}}$ we have $\Theta^{-1}\big((V,E,\mu_E,Q,\chi)\big) = (V,\C,\R,k)$, where:
            \begin{enumerate}[label=(\arabic{enumi}.\arabic*),topsep=0pt]
               \item $\R \defeq \{\, (f_x,g_x) \mid \text{$x = \big(e,(S_e,<_e,\rnk_e)\big)$ for $e \in E$ and $(S_e,<_e,\rnk_e) \in \chi_e^{-1}(\NN_+)$}\,\}$ such that, for each $x \defeq \big(e,(S_e,<_e,\rnk_e)\big)$ where $e \in E$ and $(S_e,<_e,\rnk_e) \in \chi_e^{-1}(\NN_+)$, we have $f_x$, $g_x \in \{V \rightharpoonup \NN_+\}$, $f_x(v) \defeq \mu^\Minus(v)$ for $v \in \Dom(f_x) \defeq \Dom(\mu^\Minus)$, and $g_x(v) \defeq \mu^\Plus(v)$ for $v \in \Dom(g_x) \defeq \Dom(\mu^\Plus)$, where $(\mu^\Minus,\mu^\Plus) \in S_e$.
                \item $\C \defeq \bigcup_{(f,g) \in \R} \{f,g\}$.
                \item $k \colon \R \to \RR_{\ge 0}$ is given by $k\big((f_x,g_x)\big) \defeq \rho(e)$ for all $x \defeq \big(e,(S_e,<_e,\rnk_e)\big)$ where $e \in E$ and $(S_e,<_e,\rnk_e) \in \chi_e^{-1}(\NN_+)$.
            \end{enumerate}
    \end{enumerate}
\end{proposition}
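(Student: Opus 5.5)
The natural route is to factor $\Theta$ through results already established: $\Theta = \Gamma \circ \kappa'$, where $\kappa\colon \Z \to \widetilde{\U}$ is the bijection of Proposition~\ref{prop:CRN-SPN-supp} and $\Gamma$ is the bijection of Proposition~\ref{prop:SH-rep-SPN-supp} with $\AA = \NN_+$. Since $\widetilde{\U} \subseteq \widetilde{\Y}$ (type-1 classes of SPNs satisfying the asymmetry conditions), $\Gamma|_{\widetilde{\U}}$ is a bijection onto its image. It then suffices to show three things: (i) $\Gamma(\widetilde{\U}) = \Hh_{\text{CRN}}$; (ii) $\Gamma \circ \kappa$ agrees with the explicit formula (1); (iii) $\kappa^{-1} \circ \Gamma^{-1}$ agrees with formula (2). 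Bijectivity of $\Theta$ then comes for free.

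For (ii), I take $Z = (\S,\C,\R,k)$ with $\kappa(Z) = [(\S,\R,F,\omega,k)]_1$ and translate each twin-vertices map on a transition $z = (f,g)$ into the corresponding twin-reactions map. The dictionary is as follows.
\begin{itemize}
\item $\N(z) = \Dom(f)\cup\Dom(g) = \Dom(f\boxplus g)$, so $\T(z) = \W(z)$ and $E$, $\mu_E$ match.
\item $\N_{\text{in}}(z) = \Dom(f)$ and $\N_{\text{out}}(z) = \Dom(g)$, so $\T_{\text{dir}} = \W_{\text{dir}}$ and $\sigma_z$ coincides with the CRN orientation map.
\item The summed arc weight at $x$ is $(f\boxplus g)(x)$, so $\T_{\text{awgt}} = \W_{\text{stoic}}$ and $\mu_z = \mu_r$.
\item $\T_{\text{vwgt}} = \W_{\text{rate}}$.
\item $\T_{\text{in}}(z) = \W_{\text{react}}(z)$, since a type-1 isomorphism of incoming stars fixes places, and so amounts to equality of $f$. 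Likewise $\T_{\text{out}} = \W_{\text{prod}}$.
\end{itemize}
Finally, because $\R$ is a set, $\W_{\text{react}}(z)\cap\W_{\text{prod}}(z) = \{z\}$. Hence the five-fold intersection in the $\chi$ formula of Proposition~\ref{prop:SH-rep-SPN-supp} is $\{z\}$, giving $\chi = 1$ on its support as stated.

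For (i), the inclusion $\Gamma(\widetilde{\U})\subseteq\Hh_{\text{CRN}}$ uses two facts. First, $\T_{\text{in}}(z)\cap\T_{\text{out}}(z)=\{z\}$ gives $\widehat{\mu}^{\,\Minus/\Plus}_e \equiv 1$ on its support. Since $(\mu^\Minus,\mu^\Plus)$ determines $\sigma$ (through the domains) and $\mu$ (through $\boxplus$), the sum of $\chi_e$ over attribute hierarchies containing a given triple $\sigma,\mu,(\mu^\Minus,\mu^\Plus)$ is at most $1$. Second, $\dot U_{\text{in}}[z]\neq \dot U^{\ast}_{\text{out}}[z]$ is exactly $\mu^\Minus_z\neq\mu^\Plus_z$.

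For the reverse inclusion, I apply $\Gamma^{-1}$ to $H\in\Hh_{\text{CRN}}$. The CRN condition forces $\chi_e\in\{0,1\}$. It also forces distinct transitions over the same $e$ to carry distinct pairs $(\mu^\Minus,\mu^\Plus)$, hence to have non-isomorphic in/out stars. This yields the $\T_{\text{in}}\cap\T_{\text{out}}$ condition, and $\mu^\Minus\neq\mu^\Plus$ yields the converse-asymmetry condition.

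For (iii), because $\chi_e\le 1$ the index $i$ in the transitions of Proposition~\ref{prop:SH-rep-SPN-supp} is always $1$. So $T$ identifies with pairs $(e,(S_e,<_e,\rnk_e))$, and composing with part (2) of Proposition~\ref{prop:CRN-SPN-supp} gives $f_x = \mu^\Minus$, $g_x = \mu^\Plus$, $k = \rho(e)$, exactly formula (2).

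The main obstacle is the first bullet of (i), the CRN-type sum condition. It involves sums over all $\sigma$, $\mu$, $(\mu^\Minus,\mu^\Plus)$, including inconsistent triples. My plan is to use the association-function identity for $\widehat{\mu}^{\,\Minus/\Plus}$ together with the consistency rules of Example~\ref{ex:AssocFun}, which make $\sigma$ and $\mu$ functions of $(\mu^\Minus,\mu^\Plus)$. Under those rules the sum reduces to $\widehat{\mu}^{\,\Minus/\Plus}_e((\mu^\Minus,\mu^\Plus))\in\{0,1\}$.
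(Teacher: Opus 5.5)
Your proposal is correct and follows essentially the same route as the paper: factor $\Theta = \Gamma|_{\widetilde{\U}} \circ \kappa$, then prove $\Gamma(\widetilde{\U}) = \Hh_{\text{CRN}}$ by matching the two defining conditions of $\U$ ($\T_{\text{in}}(z)\cap\T_{\text{out}}(z)=\{z\}$ and $\dot{U}_{\text{in}}[z]\neq\dot{U}^{\ast}_{\text{out}}[z]$) with the two CRN-type conditions on $\chi$. Your bound on the CRN-type sum via the association identity for $\widehat{\mu}^{\,\Minus / \Plus}$ is a slightly cleaner variant of the paper's counting over $\T(z)$; monotonicity of the sum already suffices, so inconsistent triples are not really an obstacle. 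Your twin-map dictionary also fills in the explicit formula check that the paper only asserts with ``we can check''.
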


\begin{proof}
Let $\kappa \colon \Z \to \widetilde{\U}$ be the bijection in Proposition~\ref{prop:CRN-SPN-supp} where $\U$ is the set of all stochastic Petri nets $U \defeq (P,T,F,(\omega,\NN_+),(r,\RR_{\ge 0}))$ such that $\T_{\text{in}}(z) \cap \T_{\text{out}}(z) = \{z\}$ and $\dot{U}_{\text{in}}[z] \ne \dot{U}_{\text{out}}^{\ast}[z]$ for all $z \in T$ with respect to $\dot{U} \defeq (P,T,F,(\omega,\NN_+))$, and $\widetilde{\U}
$ is the set of all type-$1$ isomorphism classes of $\U$. Let $\Gamma \colon \widetilde{\Y} \to \Hh_{\text{SPN}}$ be the bijection in Proposition~\ref{prop:SH-rep-SPN-supp} with $\AA \defeq \NN_+$ where $\Y$ is the set of all stochastic Petri nets $Y \defeq (P,T,F,(\omega,\NN_+),(r,\RR_{\ge 0}))$, and $\widetilde{\Y}$ is the set of all type-$1$ isomorphism classes of $\Y$. We show that $\Gamma|_{\widetilde{\U}} \colon \widetilde{\U} \to \Hh_{\text{CRN}}$ is a well-defined bijection, and then $\Theta \defeq \Gamma|_{\widetilde{\U}} \circ \kappa$ is the required map.

To see that $\Gamma|_{\widetilde{\U}} \colon \widetilde{\U} \to \Hh_{\text{CRN}}$ is well defined, first note that $\U \subseteq \Y$ implies $\widetilde{\U} \subseteq \widetilde{\Y}$, so we need to show $\Gamma(\widetilde{\U}) \subseteq \Hh_{\text{CRN}}$. Let $U \defeq (P,T,F,\omega,r) \in \U$ and consider $H \defeq \Gamma\big([U]_1\big) = (P,E,\mu_E,Q,\chi) \in \Hh_{\text{SPN}}$. We show that $H \in \Hh_{\text{CRN}}$. Suppose $e \in E$ and $z \in T$ with $e = \N(z)$. First, if $\sigma \in \{e \rightarrow \I\}$, $\mu \in \{e \rightarrow \NN_+\}$, and $(\mu^\Minus,\mu^\Plus) \in \{ e \rightharpoonup \NN_+ \} \times \{ e \rightharpoonup \NN_+ \}$ then, noting that there is at most one $w \in \T(z)$ such that $(\sigma_w,\mu_w,(\mu^\Minus_w,\mu^\Plus_w)) = (\sigma,\mu,(\mu^\Minus,\mu^\Plus))$ since $\T_{\text{in}}(w) \cap \T_{\text{out}}(w) = \{w\}$, we have $\sum_{\{\, (S_e,<_e,\rnk_e) \in Q_e \mid \text{$\sigma$, $\mu$, $(\mu^\Minus,\mu^\Plus) \in S_e$} \,\}} \chi_e((S_e,<_e,\rnk_e)) = \sum_{\{\, x \in \T(z) \mid \text{$\sigma_x = \sigma$, $\mu_x = \mu$, $(\mu^\Minus_x,\mu^\Plus_x) = (\mu^\Minus,\mu^\Plus)$} \,\}} \chi_e((S_x,<_x,\rnk_x)) \le \chi_e((S_w,<_w,\rnk_w)) = \left| \T_{\text{dir}}(w) \cap \T_{\text{awgt}}(w) \cap \T_{\text{vwgt}}(w) \cap \T_{\text{in}}(w) \cap \T_{\text{out}}(w) \right| = 1$, where: the first equality holds since $\chi_e^{-1} (\NN_+) = \{\, (S_x,<_x,\rnk_x) \mid x \in \T(z) \,\}$, and the last equality holds since $\T_{\text{in}}(w) \cap \T_{\text{out}}(w) = \{w\}$. Second, if $(S_e,<_e,\rnk_e) \in Q_e$ and $\chi_e((S_e,<_e,\rnk_e)) > 0$ then $(S_e,<_e,\rnk_e) = (S_y,<_y,\rnk_y)$ for some $y \in \T(z)$, so $\mu^\Minus_y = \mu^\Plus_y$ implies $\N_{\text{in}}(y) = \Dom(\mu^\Minus_y) = \Dom(\mu^\Plus_y) = \N_{\text{out}}(y)$, equivalently $\N_{\text{in}}(y) = \N_{\text{out}}(y) = \N(y)$, and $\omega(v,y) = \mu^\Minus_y(v) = \mu^\Plus_y(v) = \omega(y,v)$ for all $v \in \N(y)$, from which it follows that $\dot{U}_{\text{in}}[y] = \dot{U}_{\text{out}}^{\ast}[y]$, however since $\dot{U}_{\text{in}}[y] \ne \dot{U}_{\text{out}}^{\ast}[y]$ we must have $\mu^\Minus_y \ne \mu^\Plus_y$. We conclude that $H \in \Hh_{\text{CRN}}$, and it follows that $\Gamma|_{\widetilde{\U}} \colon \widetilde{\U} \to \Hh_{\text{CRN}}$ is well defined.

To see that the injective map $\Gamma|_{\widetilde{\U}} \colon \widetilde{\U} \to \Hh_{\text{CRN}}$ is a bijection it suffices to show surjectivity, which we establish by showing that $\Hh_{\text{CRN}} \subseteq \Gamma (\widetilde{\U})$. So, suppose $H \defeq (V,E,\mu_E,Q,\chi) \in \Hh_{\text{CRN}} \subseteq \Hh_{\text{SPN}}$, noting that, for all $e \in E$, we have: $\sum_{\{\, (S_e,<_e,\rnk_e) \in Q_e \mid \text{$\sigma$, $\mu$, $(\mu^\Minus,\mu^\Plus) \in S_e$} \,\}} \chi_e((S_e,<_e,\rnk_e)) \in \{0,1\}$ for all $\sigma \in \{e \rightarrow \I\}$, $\mu \in \{e \rightarrow \NN_+\}$, and $(\mu^\Minus,\mu^\Plus) \in \{ e \rightharpoonup \NN_+ \} \times \{ e \rightharpoonup \NN_+ \}$; and if $(S_e,<_e,\rnk_e) \in Q_e$ with $\chi_e((S_e,<_e,\rnk_e)) > 0$ then $(\mu^\Minus,\mu^\Plus) \in S_e$ satisfies $\mu^\Minus \ne \mu^\Plus$. Consider $\Gamma^{-1}(H) = [Y]_1 \in \widetilde{\Y}$ where $Y \defeq (V,T,F,(\omega,\NN_+),(r,\RR_{\ge 0})) \in \Y$. We show that $Y \in \U$ by establishing that $\T_{\text{in}}(z) \cap \T_{\text{out}}(z) = \{z\}$ and $\dot{Y}_{\text{in}}[z] \ne \dot{Y}_{\text{out}}^{\ast}[z]$ for all $z \in T$ with respect to $\dot{Y} \defeq (V,T,F,(\omega,\NN_+))$. Let $z \defeq \big(e,(S_e,<_e,\rnk_e),i\big) \in T$ where $e \in E$, $(S_e,<_e,\rnk_e) \in \chi_e^{-1}(\NN_+)$, and $i \in [\chi_e\big((S_e,<_e,\rnk_e)\big)]$.

If $x \in \T_{\text{in}}(z) \cap \T_{\text{out}}(z)$ then $x = \big(e,(S^{\prime}_e,<^{\prime}_e,\rnk^{\prime}_e),j\big)$ for some $(S^{\prime}_e,<^{\prime}_e,\rnk^{\prime}_e) \in \chi_e^{-1}(\NN_+)$ and $j \in [\chi_e\big((S^{\prime}_e,<^{\prime}_e,\rnk^{\prime}_e)\big)]$ such that there exist $\sigma \in \{e \rightarrow \I\}$, $\mu \in \{e \rightarrow \NN_+\}$, and $(\mu^\Minus,\mu^\Plus) \in \{ e \rightharpoonup \NN_+ \} \times \{ e \rightharpoonup \NN_+ \}$ with $\{\sigma,\mu,(\mu^\Minus,\mu^\Plus)\} \subseteq S_e \cap S^{\prime}_e$, and since $\sum_{\{\, (S_f,<_f,\rnk_f) \in Q_e \mid \text{$\sigma$, $\mu$, $(\mu^\Minus,\mu^\Plus) \in S_f$} \,\}} \chi_e((S_f,<_f,\rnk_f)) \in \{0,1\}$ we must have $x=z$, so $\T_{\text{in}}(z) \cap \T_{\text{out}}(z) = \{z\}$. Further, if $\dot{Y}_{\text{in}}[z] = \dot{Y}_{\text{out}}^{\ast}[z]$ then, for $\big(e,(S_f,<_f,\rnk_f),j\big) \in T$ with $(\mu^\Minus,\mu^\Plus) \in S_f$, we have $\mu^\Minus(v) = \omega\big((v,(e,(S_f,<_f,\allowbreak\rnk_f),j))\big) = \omega\big(((e,(S_f,<_f,\rnk_f),j),v)\big) = \mu^\Plus(v)$ for all $v \in e$, hence $\mu^\Minus = \mu^\Plus$. So we must have $\dot{Y}_{\text{in}}[z] \ne \dot{Y}_{\text{out}}^{\ast}[z]$. Therefore $Y \in \U$, so $H = \Gamma \circ \Gamma^{-1}(H) = \Gamma\big([Y]_1\big) \in \Gamma(\widetilde{\U})$. We conclude that $\Hh_{\text{CRN}} \subseteq \Gamma(\widetilde{\U})$, hence $\Gamma(\widetilde{\U}) = \Hh_{\text{CRN}}$.

Finally, we can check that $\Theta = \Gamma|_{\widetilde{\U}} \circ \kappa$ as required.

\end{proof}

\end{appendices}

\end{document}